\documentclass[acmsmall]{acmart}

\usepackage{tikz}
\usepackage{hyperref}
\usetikzlibrary{arrows}
\usepackage[utf8]{inputenc}
\usepackage{amsmath}
\usepackage{amsthm}
\usepackage{mathpartir} 
\usepackage{mathtools}
\usepackage{centernot} 
\usepackage{graphicx}
\usepackage{subcaption}
\usepackage{xcolor} 
\usepackage{soul}
\usepackage{float}
\usepackage[font={small,it}]{caption}
\usepackage{import}
\usepackage{calc}  
\usepackage{enumitem}  
\usepackage{environ}
\usepackage{declarations}
\usepackage{stmaryrd}
\usepackage[T1]{fontenc} 
\usepackage{xspace}
\usepackage{listings}
\usepackage{adjustbox}
\usepackage{minted}
\usepackage{array}
\usepackage{wrapfig}
\usepackage[most]{tcolorbox}
\usepackage[utf8]{inputenc}

\tcbset{
  mytheorem/.style={
    enhanced,
    colback=red!4!white,
    colframe=red!50!white,
    boxrule=0.6pt,
    arc=2mm,
    outer arc=2mm,
    fonttitle=\bfseries,
    coltitle=black,
  }
}
\tcbset{
  mylemmaobox/.style={
    enhanced,
    colback=blue!3!white,   
    colframe=blue!30!white, 
    boxrule=0.6pt,
    arc=2mm,
    outer arc=2mm,
    fonttitle=\bfseries,
    coltitle=black,
  }
}

\definecolor{lightgray}{rgb}{0.95, 0.95, 0.95}

\usepackage{tikz}
\usetikzlibrary{arrows} 
\usepackage{pstricks-add, pst-pdf}%
\usepackage{pgfplots}
\pgfplotsset{compat=1.18}
\usepackage{colortbl}
\usepackage{tabulary}
\usepackage{etoolbox}
\definecolor{codegreen}{rgb}{0,0.6,0}
\definecolor{codegray}{rgb}{0.5,0.5,0.5}
\definecolor{codepurple}{rgb}{0.58,0,0.82}
\definecolor{backcolour}{rgb}{1, 1, 0.96}
\usepackage{color}
\lstdefinestyle{mystyle}{
  backgroundcolor=\color{backcolour},
  commentstyle=\color{codegreen},
  numberstyle=\tiny\color{codegray},
  keywordstyle=\color{magenta},
  stringstyle=\color{codepurple},
  basicstyle=\ttfamily\scriptsize,
  breakatwhitespace=false,
  breaklines=true,
  captionpos=b,
  keepspaces=true,
  numbers=left,
  numbersep=5pt,
  showspaces=false,
  showstringspaces=false,
  showtabs=false,
  tabsize=2
}

\lstdefinelanguage{Rust}{
  morekeywords={pub, struct, enum, fn, impl, let, mut, ref, match, if, else, for, in, while, loop, return, break, continue, Self, super, crate, mod, use, as, where, trait},
  sensitive=true,
  morecomment=[l]{//},
  morecomment=[s]{/*}{*/},
  morestring=[b]{"},
}
\lstdefinestyle{ruststyle}{
  language=Rust,
  commentstyle=\color{codegreen},
  basicstyle=\ttfamily\scriptsize,
  numbers=left,
  numberstyle=\tiny\color{gray},
  stepnumber=1,
  numbersep=8pt,
  showstringspaces=false,
  tabsize=4,
  breaklines=true,
}

\lstdefinestyle{codecase}{ %
    language=Haskell,
    backgroundcolor=\color{pink!20},
    basicstyle=\small\ttfamily,
    keywordstyle=\small\bfseries,
    numberstyle=\small\ttfamily,
    escapeinside={\%*}{*)},
    keywordstyle=\color{blue}
}
\lstdefinestyle{codecaseour}{ %
    backgroundcolor=\color{pink!20},
    basicstyle=\small\ttfamily,
    keywordstyle=\small\bfseries,
    numberstyle=\small\ttfamily,
    keywordstyle=\color{blue}
}
\AtBeginDocument{%
  }

\setcopyright{cc}
\setcctype{by}
\acmDOI{10.1145/3839492}
\acmYear{2026}
\acmJournal{PACMPL}
\acmVolume{10}
\acmNumber{OOPSLA2}
\acmArticle{360}
\acmMonth{10}
\acmSubmissionID{oopslab26main-p841-p}
\received{2026-03-17}
\received[accepted]{2026-08-06}

\begin{document}

\title{Sound Enforcement of Dynamic Release Information Flow Policy}

\author{Jeffrey Ching}
\orcid{0009-0008-3363-2676}
\affiliation{%
  \institution{Duke University}
  \city{Durham}
  \country{USA}
}
\email{jeffrey.ching@duke.edu}

\author{Danfeng Zhang}
\correspondingauthor
\orcid{0000-0003-1942-6872}
\affiliation{%
  \institution{Duke University}
  \city{Durham}
  \country{USA}
}
\email{danfeng.zhang@duke.edu}

\begin{abstract}
Information flow analysis is the de facto method of assessing confidentiality and integrity issues. However, the widespread adoption of information flow analysis in real-world systems is still lacking, partly due to a fundamental gap between theory and practice: the dynamic nature of security concerns in real-world systems goes beyond the scope of existing techniques that assume a static policy (i.e., data secrecy does not change). 
 
Recognizing the fundamental gap, a substantial amount of research has studied various aspects of it (e.g., enabling declassification, endorsement, and revocation policies). A recent work takes a step further by formalizing a promising end-to-end policy called dynamic release that unifies prior formalizations by allowing information flow restrictions to downgrade and upgrade in arbitrary ways. However, how to soundly enforce the powerful dynamic release policy is still an open question.

In this paper, we present the first type system that enforces dynamic release policy and formally prove its soundness. More specifically, we (1) formalize a core language that enables dynamic release policy, (2) develop a type system that checks dynamic release policy, (3) develop new proof techniques and formally prove that the type system enforces dynamic release policy, and (4) implement a prototype of the type system as an extension to the Rust language, along with case studies on a conference reviewing system and Civitas.
\end{abstract}

\begin{CCSXML}
<ccs2012>
   <concept>
       <concept_id>10002978.10002986.10002989</concept_id>
       <concept_desc>Security and privacy~Formal security models</concept_desc>
       <concept_significance>500</concept_significance>
       </concept>
   <concept>
       <concept_id>10003752.10010124.10010138.10010143</concept_id>
       <concept_desc>Theory of computation~Program analysis</concept_desc>
       <concept_significance>500</concept_significance>
       </concept>
 </ccs2012>
\end{CCSXML}

\ccsdesc[500]{Security and privacy~Formal security models}
\ccsdesc[500]{Theory of computation~Program analysis}

\keywords{Information flow; Dynamic policy; Type system}

\maketitle

\section{Introduction}
Many security concerns in computer systems can be understood in terms of information flows. For example, it is crucial to ensure that private and untrusted data, as well as any data derived from them, never influence unintended channels (e.g., a public channel) within a computer system. For decades, noninterference~\cite{goguen1982} has been the de facto security policy for specifying and checking information flow requirements. However, a fundamental assumption of noninterference is that the confidentiality and integrity of information \emph{do not} change over time, whereas security concerns in real-world applications are rarely static, making it infeasible to safeguard them via noninterference. This critical gap prevents widespread adoption of information flow analysis in real-world systems.

Therefore, to better describe information flow restrictions in real-world applications, researchers have introduced policies that allow sensitivity to change over time, known as dynamic policies. For example, declassification policy~\cite{myers1997} allows the release of sensitive information to the public under certain conditions. Other dynamic policies have been explored, such as gradual release~\cite{askarov2007} and its extended version called tight gradual release~\cite{askarov2009}. Erasure policy~\cite{Chong2008}, on the other hand, requires public information to become more sensitive or be completely removed. It is required in applications on untrustworthy storage~\cite{askarov2015, waye2017} and secure voting systems~\cite{clarkson2008civitas}.

While prior work studies \emph{instances} of dynamic information flow policies, the field is overloaded with inconsistent terminology, formalisms, and sometimes contradictory semantics~\cite{Li2022}.
These obstacles make it difficult even for veteran researchers to follow, let alone compare, different policies. For instance, existing definitions of dynamic policies can have vastly different syntax for defining policies~\cite{sabelfeld2009, sabelfeld2005}, differing natures of security conditions, and even disagreements on whether or not a program is secure based on different variants of the same policy~\cite{broberg2015}.
Moreover, many policies lack robust enforcement mechanisms, and only a limited number, notably the erasure policy~\cite{Chong2008}, have been implemented as an extension to Jif~\cite{myers2001}.  

A recent work, Dynamic Release~\cite{Li2022}, provides a promising solution: it generalizes various kinds of dynamic policies for the first time via a novel formalization.
Beyond expressing various dynamic policies, the original work also allows apples-to-apples comparison between different dynamic policies and creates new insights on existing definitions. However, as a pioneering work, it comes without any enforcement mechanism.
In this paper, we tackle the following goal: \emph{to develop a sound and general enforcement mechanism for various kinds of dynamic policies.} While dynamic release policy provides a general platform for reasoning about various dynamic policies, we need to tackle several technical challenges for its enforcement. First, we need a metalanguage with its security specification, along with its operational semantics. A challenge in language design is to strike a balance between expressiveness and simplicity. To do so, we carefully design some language features tailored for dynamic release policy, including (1) a distinguished set of \emph{security events} along with distinguished commands to set and unset those events during program execution, (2) controlled output commands in the form of $\outcmd{\level}{\expr}{\sevent{s_0}\dots\sevent{s_k}}$, which reveal the value of $\expr$ to security level $\level$ \emph{only if} all security events $\sevent{s_0}\dots\sevent{s_k}$ are met, and (3) a novel \cod{relabel} command that performs permissive yet sound conversion between various kinds of labels. 

Second, we must enforce the specified dynamic release policy on the program. To achieve this, we develop a novel type system that primarily controls information flow at compile time. One challenge is to achieve a balance between soundness and permissiveness, given the fact that the meaning of each label can change throughout program execution. To tackle the challenge, we define two sets of static semantics that we show soundly control information flows according to dynamic release semantics. In particular, the ``flows-to'' relation specifies which label is allowed to flow to another, including flows between dynamic labels and static levels, and even flows between different kinds of policies (e.g., declassification and endorsement). The ``release-to'' relation checks if we can safely release information to a security level at a given program point. 
Thanks to the simplicity of the type system, we also implement it as an extension to the Rust language.

Third, we formally prove that the type system enforces the specified dynamic release policy for all program executions. For noninterference with a static policy, various proof techniques (e.g.,~\cite{li2017arxiv,flowcaml,volpano1996}) have been developed. However, these techniques are not applicable for dynamic release for two major reasons. First, standard noninterference proofs are built on an \emph{indistinguishability} relation on memory states: two memory states are said to be indistinguishable when they agree on the public set of variables. As the set of public variables stays the same throughout program execution, a standard proof then proceeds by showing that the indistinguishability relation is an invariance throughout program execution. However, due to the dynamic nature of dynamic release policy, the set of public variables changes over time, making it impossible to establish such an invariance. To address the challenge, we introduce a novel indistinguishability relation parameterized on a security event trace (Definition~\ref{def:indistinguishibility}). Moreover, dynamic release is built on the notion of \emph{consistency} (Definition~\ref{def:consistency_appendix}), which is absent in standard noninterference. To accommodate that, we introduce unique conditions in several lemmas and theorems to reflect when intentional leakage is allowed.
 
In summary, we make the following contributions.
\begin{itemize}
    \item We present a metalanguage together with its security specification for dynamic release policies (Section~\ref{sec:Language}). We further formalize the semantics of dynamic security labels, as well as the semantics of several new constructs tailored to dynamic release policies.
    
    \item We present a type system that employs two sets of static inference rules to reason about and verify information flows within a program (Section~\ref{sec:static_rules}).
    
    \item We develop novel proof techniques (Section~\ref{sec:endtoendproof}) to establish the soundness of our type system. In particular, we prove that any well-typed program satisfies its specified dynamic release policy across all possible executions.
    
    \item We implement a prototype of our language and its type system as an extension to the Rust language. Moreover, we incorporate all the examples presented in the paper, and port Civitas~\cite{clarkson2008civitas, juels2005coercion}, originally implemented in Jif~\cite{myers2001}, as well as a conference management system developed in Lifty~\cite{polikarpova2020}, to our Rust-based implementation (Section~\ref{sec:case_studies}). We demonstrate that these applications, which embody realistic security requirements, can be successfully verified using our enforcement mechanism, while incurring only minimal runtime overhead.
\end{itemize}

\section{Background and Overview}
\label{sec:background}

\mypara{Noninterference.}
Noninterference~\cite{goguen1982} is the de facto policy enforced by information flow control~\cite{Sabeleld2003}. Intuitively, noninterference requires that for any two runs of a program such that only secret inputs change, the program always produces the same publicly observable outputs (i.e., confidential inputs do not affect public outputs). However, noninterference is well-recognized~\cite{Sabeleld2003,cecchetti2021, guarnieri2019,hritcu2013} to be too strict for real-world applications. One fundamental gap there is that noninterference assumes that the secrecy and integrity of information are \emph{static} (i.e., they do not change). But in reality, the dynamic nature of security concerns demands information flow policies that can accommodate evolving secrecy and integrity constraints.

\mypara{Dynamic Policy}
\emph{Dynamic} policies allow the sensitivity and/or integrity of information to change throughout its lifetime. There are mainly three kinds of such dynamic policies in the literature:
\newcommand{\hlpink}[1]{\colorbox{kwpink}{#1}}
\newcommand{\kwstyleA}[1]{%
  \begingroup
    \setlength{\fboxsep}{0.2pt}
    \colorbox{lightgray!30}{\bfseries #1}%
  \endgroup
}

\newcommand{\hlblueRed}[1]{%
  \begingroup
    \setlength{\fboxsep}{0.2pt}
    \colorbox{lightgray!30}{\bfseries\textcolor{red}{#1}}%
  \endgroup
}
\newcommand{\hlblueRedtwo}[1]{%
  \begingroup
    \setlength{\fboxsep}{0.2pt}
    \colorbox{lightgray!30}{\textcolor{red}{#1}}%
  \endgroup
}

\newcommand{\kwstyleB}[1]{\bfseries #1} 
\lstdefinestyle{highlightkeywords}{
  language=C,
  basicstyle=\ttfamily\small,
  columns=fullflexible,
  keepspaces=true,
  keywordstyle=,
  deletekeywords={while, if, else},
  morekeywords=[1]{output, eventon, eventoff, relabel, if,else},
  keywordstyle=[1]{\kwstyleB},              
  morekeywords=[2]{while},
  keywordstyle=[2]{\kwstyleB},
  escapeinside={(*@}{@*)},
  moredelim=**[is][\redline]{!r!}{!r!},
}

\definecolor{kwblue}{rgb}{0.85,0.92,1}
\definecolor{kwpink}{rgb}{1,0.9,0.95}
\lstset{
  basicstyle=\small\ttfamily,
  numberstyle=\ttfamily\scriptsize,
  tabsize=4,
  captionpos=b,
  style=highlightkeywords,
  deletekeywords={while},
  breaklines=false,
  breakautoindent=false,
  postbreak=\space,
  breakindent=5pt,
  aboveskip=3pt,
  belowskip=3pt,
  belowcaptionskip=0pt,
  morecomment=[l]{//},
  mathescape=true
}

\newcommand{\boxwidth}{0.42}
\newcommand{\leftmarginvalue}{25}
\newsavebox\DecSec
\begin{lrbox}{\DecSec}
\begin{minipage}{\boxwidth \textwidth}
\begin{lstlisting}[numbers=left,xleftmargin=\leftmarginvalue 
pt,framexleftmargin=15pt,mathescape,basicstyle=\footnotesize]
(*@\hlblueblacktwo{//wbid: $\Low$}@*)
(*@\hlblueblacktwo{//bid, bid1, bid2: $\sevent{\neg release}?\High \rightarrow_t\Low$}@*)
(*@\hlblueblack{eventoff}@*)(*@\hlblueblacktwo{($\sevent{release}$);}@*)
bid = bid1 > bid2 ? bid1 : bid2;
(*@\textcolor{red}{wbid = bid; //insecure}@*)$\label{line:bid1}$
(*@\tred{output}@*)(*@\tredtwo{($\Low$, wbid, using $\sevent{release}$); //insecure}@*)
(*@\hlblueblack{eventon}@*)(*@\hlblueblacktwo{($\sevent{release}$);}@*)
wbid=(*@\hlblueblack{relabel}@*)
(bid, (*@\hlblueblacktwo{$\sevent{\neg release}?\High \rightarrow_t\Low\;\text{to} \; \Low \; \text{using} \;\sevent{release})$;}@*) $\label{line:bid2}$
output($\Low$, wbid, using $\sevent{release}$);
\end{lstlisting}
\end{minipage}
\end{lrbox}

\newsavebox\EraSec
\begin{lrbox}{\EraSec}
\begin{minipage}{\boxwidth \textwidth}
\begin{lstlisting}[numbers=left,xleftmargin=\leftmarginvalue  
pt,framexleftmargin=15pt,mathescape,basicstyle=\footnotesize]
(*@\hlblueblack{//store: $\cod{M}$}@*)
(*@\hlblueblack{//copy, credit\_card:$(\sevent{\neg trans}?\cod{M}\rightarrow_t\top)$}@*)
(*@\hlblueblack{eventoff}@*)(*@\hlblueblacktwo{($\sevent{trans}$);}@*)
copy = credit_card; $\label{line:credit0}$
// use copy to finish transaction
output($\cod{M}$, copy, using $\sevent{\absent{trans}}$);
(*@\hlblueblack{eventon}@*)(*@\hlblueblacktwo{($\sevent{trans}$);}@*)
(*@\textcolor{red}{store = credit\_card; //insecure}@*)$\label{line:credit2}$
(*@\tred{output}@*)(*@\tredtwo{($\cod{M}$, copy, using $\sevent{trans}$); //insecure}@*) $\label{line:credit1}$
\end{lstlisting}
\end{minipage}
\end{lrbox}

\newsavebox\RevSec
\begin{lrbox}{\RevSec}
\begin{minipage}{\boxwidth\textwidth}
\begin{lstlisting}[numbers=left,xleftmargin=\leftmarginvalue  
pt,framexleftmargin=\leftmarginvalue pt,mathescape,basicstyle=\footnotesize]
(*@\hlblueblack{//Alice: $\Low$}@*)
(*@\hlblueblack{//notes, book: $(\sevent{\neg return}?\Low\rightarrow_p\High)$}@*)
(*@\hlblueblack{eventoff}@*)(*@\hlblueblacktwo{($\sevent{return}$);}@*) 
notes= chap(book);
output($\Low$,notes, using $\sevent{\absent{return}}$);
(*@\hlblueblack{eventon}@*)(*@\hlblueblacktwo{($\sevent{return}$);}@*)
(*@\tredtwo{Alice = relabel      //insecure}@*)
 (*@\tredtwo{(book, return?$\Low\rightarrow_p \High\;\text{to}\; \High\; \text{using}\; \sevent{return}); \label{line:book2}$}@*)
(*@\tred{output}@*)(*@\tredtwo{($\Low$, book, using $\sevent{return}$); //insecure}@*)  $\label{line:book3}$
output($\Low$, notes, using $\sevent{o_{notes}@\Low}$);
\end{lstlisting}
\end{minipage}
\end{lrbox}


\newsavebox\BiCred
\begin{lrbox}{\BiCred}
\begin{minipage}{\boxwidth \textwidth}
\begin{lstlisting}[numbers=left,xleftmargin=\leftmarginvalue  
pt,framexleftmargin=\leftmarginvalue pt,mathescape,basicstyle=\footnotesize]
$\color{violet}{//book: (\sevent{return}?\Low\leftrightarrow_p\High)[Per]}$
$\color{violet}{//Alice: \Low, notes: \Low}$
while (start == true)
 if ($\sevent{return}$ = false)
  (*@\hlblueblack{eventoff}@*)(*@\hlblueblacktwo{($\sevent{return}$);}@*)
  notes= chapter(book); $\label{line:bibook1}$
  output($\Low$,notes); $\label{line:bibook2}$;
 else
  (*@\hlblueblack{eventon}@*)(*@\hlblueblacktwo{($\sevent{return}$);}@*)
  Alice = notes $\label{line:bibook3}$;
  (*@\textcolor{red}{Alice=}@*)(*@\tred{relabel}@*)
  (*@\tredtwo{(book, $\Low\leftrightarrow_p\High, \High,using\; \sevent{return}) \label{line:bibook4};$}@*)
  output($\Low$,notes);
\end{lstlisting}
\end{minipage}
\end{lrbox}

\begin{figure*}[b]
\centering
\small
\resizebox{\textwidth}{!}{%
\begin{tabular}{|c|c|c|}
\hline
\cellcolor{yellow!2}\usebox\DecSec 
& \cellcolor{yellow!2}\usebox\EraSec
& \cellcolor{yellow!2}\usebox\RevSec
\\ 
(i). Bidding Game
& (ii). Credit Card  
& (iii). Library System 
\\
\hline
\end{tabular}
} 
\caption{Examples of the dynamic policies in the order of declassification, erasure and revocation. The shaded lines use novel language features that we introduce in this work. Insecure code is marked in red. Predicates $\sevent{\absent{trans}}$ and $\sevent{\absent{return}}$ indicate that the corresponding security events $\sevent{trans}$ and $\sevent{return}$ are never set to true.}
\Description{}
\label{fig:dynamicpolapp}
\end{figure*}
\begin{itemize}
  \item Declassification/Endorsement~\cite{askarov2007,banerjee2008, cecchetti2021}: a declassification (resp. endorsement) policy allows the sensitivity (resp. integrity) of a piece of information to be downgraded.  For example, consider a bidding game illustrated in Figure~\ref{fig:dynamicpolapp}-i, which can release the highest bid to the public only if the bidding is over, indicated by setting \cod{release} at Line 7. The bid cannot be released before the bidding concludes, making both Line 5 and Line 6 insecure. However, after the previously sensitive bids are \emph{declassified} to the public at Line 7, the highest bid can be released to the public as shown in Line 10.

  \item  Erasure~\cite{Chong2005,Chong2008, askarov2015}: an erasure policy allows the sensitivity of a piece of information (and its derivatives) to be upgraded. One important use case of the erasure policy is to upgrade sensitivity so that no memory can hold the information anymore (effectively enforces it to be removed from the whole system). For example, consider an online payment system illustrated in Figure~\ref{fig:dynamicpolapp}-ii, which takes credit card information from customers to complete the purchase as in Lines 4--6. However, it also needs to upgrade the sensitivity of \cod{credit\_card} and \cod{copy} after Line 7 to avoid information leakage after the transaction is complete. Hence, Lines 8 and 9 are considered insecure as they leak the credit card after the transaction is complete.
  
  \item Delegation and Revocation~\cite{ferraiolo1995,myers2000}: delegation and revocation are usually expressed in a role-based system to add and remove access permission for each role respectively. For example, consider a library system illustrated in Figure~\ref{fig:dynamicpolapp}-iii, where Alice is allowed to access the book and take notes during the borrowing period. However, once the book is returned, she cannot access the book anymore as shown in Lines 7--9. While revocation also upgrades the sensitivity of information, it differs from erasure that Alice can still access information released during the borrowing period (e.g., her notes) as shown in Line 10.
\end{itemize}

\mypara{Dynamic Release Policy}  
To relieve system developers from having to understand different kinds of dynamic policies as well as the nuances among policies that appear similar on the surface but differ in important ways, a recent work by Li and Zhang~\cite{Li2022} presents a promising end-to-end policy called \emph{dynamic release} that unifies prior formalizations by allowing information flow restrictions to downgrade and upgrade in arbitrary ways. Furthermore, dynamic release is formalized in a framework that allows apples-to-apples comparisons across existing dynamic policies.

More specifically, a dynamic release policy is specified by a \emph{dynamic security label} in the form of $\cnd{cnd}? \lab_1 \diamond \lab_2$ (see  the full syntax in Section~\ref{sec:semantics_labels}), where the condition $\cnd{cnd}$ specifies when data sensitivity (or integrity) changes, and $\diamond$ specifies both the direction and kind of change. Each policy can either be \emph{transient} or \emph{persistent}, where intuitively a transient policy (resp. a persistent policy) disallows (resp. allows) information that was leaked in the past to be revealed afterwards if information flow restrictions upgrade. In this paper, we use $\rightarrow_t$ and $\leftrightarrow_t$(resp. $\rightarrow_p$ and $\leftrightarrow_p$) to denote possible directions for transient (resp. persistent) policies. We simply use arrows without subscripts when the kind of policy is irrelevant.

With dynamic release policy, we can concisely and precisely specify the information security requirements for each of the examples in Figure~\ref{fig:dynamicpolapp} as dynamic security labels (annotated in gray):
\begin{itemize}
    \item 
    In Figure~\ref{fig:dynamicpolapp}.i, we can specify the dynamic security label on bids as $\neg\sevent{release}?\High \rightarrow_t\Low$, meaning that they are initially $\High$ and hence, cannot be released during the bidding phase. However, once the bidding has ended (signaled by the condition $\sevent{release}$), the sensitivity of wbid can be downgraded to $\Low$, making it publicly available. Moreover, according to the policy, the commands from Lines 8--10 are secure as bids are already declassified at those points. But the commands on Lines 5--6 releasing the bids before the bidding period ends are insecure.
    
    \item 
    In Figure~\ref{fig:dynamicpolapp}.ii, we can specify the dynamic security label on the credit card and its copy as $\neg\sevent{trans}?~\cod{M}\rightarrow_t \top$, meaning that before the transaction is complete, the merchant can use both information at level $\cod{M}$. But once the transaction is complete, their sensitivity becomes the highest level $\top$, to ensure that they cannot be used in the rest of the system. Therefore, the commands on Lines 4--5 that take credit card information from customers to complete the purchase are secure according to the policy, while the commands on Lines 8--9, which leak credit card information after the transaction is complete, are insecure.
    
    \item 
    In Figure~\ref{fig:dynamicpolapp}.iii, we can specify the dynamic security label on \cod{book} and \cod{notes} as $\neg\sevent{return}?\Low \rightarrow_p \High$ meaning that information cannot be released after $\sevent{return}$ is set, unless the same information was released before the change (as this is a persistent policy with $\rightarrow_p$). Therefore, the command at Line 10 is secure according to the policy, since the notes were released to Alice before the book is returned. But releasing the book after it is returned on Lines 7--9 is insecure, as it was not (completely) released before the book is returned.
    
\end{itemize}
Note that Figures~\ref{fig:dynamicpolapp}.ii and~\ref{fig:dynamicpolapp}.iii both employ an upgrade policy, but differ in a subtle yet important way. In the former, information released prior to the sensitivity upgrade cannot be revealed after the upgrade, whereas in the latter such release remains permitted. This distinction also arises in other policies: for instance, gradual release~\cite{askarov2007} is a persistent policy, while cryptographic erasure~\cite{askarov2015}, along with other erasure-based approaches, follows a transient policy semantics.

\mypara{Sound Enforcement Overview} As illustrated in Figure~\ref{fig:dynamicpolapp}, dynamic release policy provides a unified framework to specify and reason about a broad range of dynamic policies. However, prior work~\cite{Li2022} formulates security as an end-to-end semantic property quantified over \emph{all possible pairs of program executions} (Section~\ref{sec:dynamic_release}). Hence, it remains an open question of how to statically determine whether a given program (e.g., the ones in Figure~\ref{fig:dynamicpolapp}) complies with its specified policy.

The goal of this work is to develop a programming language that enables programmers to (i) implement programs, (ii) specify desired dynamic release policies, and (iii) verify that the source code satisfies its claimed dynamic release policy.
So far, we have introduced several key components of the language. First, the core is an imperative language with standard constructs, including assignments, conditionals, loops, and output operations. Second, dynamic release policies are specified via annotations on variables at the beginning of the program. Third, policy changes are governed by \emph{security events}, which can be toggled on and off through the distinguished functions $\cod{eventon}$ and $\cod{eventoff}$. Finally, the language features $\cod{relabel}$ and $\cod{output}$ commands that enable permissive checking of information flows across different dynamic release labels.

To illustrate how to check the dynamic release policy in a static way, we revisit the code in Figure~\ref{fig:dynamicpolapp}.i. The bidding game has a dynamic policy $\neg\sevent{release}?\High \rightarrow_t \Low$. To check if the code satisfies this policy, we first need a sound mechanism to directly compare dynamic labels at each program point, such as Lines 8--9. This is achieved via a ``flows to'' relation $\flowsto$ on labels, which essentially defines a security lattice on dynamic release labels. Compared with (static) lattices that are typically used in information flow systems, the security lattices for dynamic release policies are more complicated as the interpretation of a label changes over time. Hence, the lattice over the ``flows to'' relation is parameterized by some \emph{facts} of security events that must hold at each program point. For example, on Lines 8--9, the label of \cod{bid}, namely $\neg\sevent{release}?\High \rightarrow_t\Low$, can safely flow to $\Low$ whenever $\sevent{release}$ has been $\true$ in the event trace (intuitively, since it has been declassified). We formalize the ``flow to'' relation on labels in Section~\ref{sec:static_rules} and prove its soundness: for any two labels $S\vdash \lab_1$, $\lab_2$ and security event trace $\strace$, $\lab_1 \flowsto_\strace \lab_2$ if and only if $\lab_2$ is \emph{always} at least as restrictive as that of $\lab_1$ on any extension of $\strace$ (Theorem~\ref{theorem:setsound}), assuming that $\strace$ meets all security events in $S$. A type system built on the relation can reject insecure assignments, such as Line 5, where the relation cannot be statically verified. 

Moreover, the command $\outcmd{\level}{\expr}{\sevent{s_0}\dots\sevent{s_k}}$ reveals the value of $\expr$ to the security level $\level$. To check if the policy on $\expr$ allows its value to be released at that point, such as Line 6 and Line 10 in Figure~\ref{fig:dynamicpolapp}.i, we formalize the ``release to'' relation on labels in Section~\ref{sec:release_to} and prove its soundness: for any label $\lab$, level $\level$ and a set of security events $S$ such that $S \vdash \lab \releaseto \level$, information with label $\lab$ can be released on a channel at level $\level$ under any execution trace that satisfies $S$ (Theorem~\ref{theorem:setsound_no_extend}). With the relation, we develop a type system that accepts the secure release in Line 10, but also rejects the insecure release in Line 6. One subtlety of checking an output command is that a persistent policy allows information revealed in the past to be released again regardless of the current policy. To do so, the security event trace also tracks a history of released information (Section~\ref{sec:commandsemantics}). 

Finally, we formally prove that the type system enforces the end-to-end dynamic release policy: for any well-typed program with annotated dynamic release labels, the knowledge gained from observing each program output does not exceed the “allowance” prescribed by the policy.
A central technical challenge in this proof arises from the dynamic nature of release labels, which causes both the security lattice and the release-to relation to evolve during execution. Consequently, existing proof techniques (e.g.,~\cite{li2017arxiv}) cannot be applied directly. We address these challenges and present a proof sketch in Section~\ref{sec:endtoendproof}, with the full formal proof provided in~\cite{full_proof}.

\section{Language}
\label{sec:Language}
We first present a simple imperative language with its security specification, along with the operational semantics and the semantics on dynamic release labels.

\begin{figure}
\centering
\[
\begin{array}{r@{\hskip 0.5em}l}
\textbf{Variables} & \x, y, z \\
\textbf{Expressions} & \expr ::= \x \mid n \mid \expr~\text{op}~\expr \\
\textbf{Levels} & \level \in \Lattice \\
\textbf{Sec. Events} (\mathbb{S}) &  \sevent{s} \mid \neg \sevent{s} \mid \sevent{o}_x{@\level} \\ 
\textbf{Commands} &
  \begin{aligned}[t]
  \command & ::= \Skip \mid \assign{\x}{\expr} \mid \while{\expr}{\command} \mid \\
        & \ifcmd{\expr}{\command_1}{\command_2}\mid \command_1;\command_2 \mid \cod{eventon}(\sevent{s})\mid\\
        & \cod{eventoff}(\sevent{s})\mid \outcmd{\level}{\expr}{\sevent{s_0}\dots\sevent{s_k}}\mid \\
        & \outcmd{\level}{\expr}{\sevent{o}_x, \sevent{s_1}\dots\sevent{s_k}} \mid\\
        & x:=\relabel{\expr}{\lab_f}{\level_t}{\sevent{s_0}\dots\sevent{s_k}}
  \end{aligned} \\
\textbf{Sec. Labels} (\mathbb{B}) & \lab,\labtwo ::= \level \mid \cnd{cnd}?~\lab~\diamond~\labtwo \\
\textbf{Conditions} & \cnd{cnd} ::= \sevent{s} \mid \sevent{\absent {s}} \mid \cnd{cnd} \land \cnd{cnd} \mid \cnd{cnd} \lor \cnd{cnd} \mid \neg \cnd{cnd} \\
\textbf{Mut. Dir.} & \diamond ::= \rightarrow_t \mid  \leftrightarrow_t\mid \rightarrow_p \mid \leftrightarrow_p \\
\end{array}\]
\vspace{-2ex}
\caption{Language syntax with security specification.}
\label{fig:while-syntax}
\end{figure}

\subsection{Syntax}
\label{sec:syntax}
The language in Figure~\ref{fig:while-syntax} provides standard features such as variables and expressions, along with commands such as branches, loops, sequential commands, and assignments. The remaining features are designed for security purposes: 
\begin{itemize}
    \item We assume a predefined lattice $\Lattice$ consists of a set of static security levels. Here, $\Lattice$ is an arbitrary security lattice, which can represent confidentiality levels, integrity levels, or a combination of both. For simplicity, though, we assume a confidentiality lattice throughout the paper. We further assume at least two levels in most examples: $\Low$ for public and $\High$ for secret, while the language supports an arbitrary lattice. The assumption is general enough to support both Denning-style lattice~\cite{denning1976} (e.g., with $\Low \sqsubset \High $), and principal/role-based model~\cite{Arden:2015csf, ferraiolo1995, myers2000} (e.g., with $\{\cod{alice,bob}\} \sqsubset \{\cod{alice}\}$, stating that information accessible to both Alice and Bob is less restrictive than information accessible to Alice only).

    \item  In the simplest case, a security \emph{label} $\lab\in \mathbb{B}$ is a security \emph{level} taken from the security lattice. In general, a security label can be \emph{dynamic}, as specified in the form of $\cnd{cnd}? \lab \diamond \labtwo$ where the condition $\cnd{cnd}$ controls whether $\lab$ or $\labtwo$ takes effect, and the mutation $\diamond$ controls the direction and kind of the transition. In particular, there are three possible directions of mutation, $\rightarrow$, $\leftarrow$ and $\leftrightarrow$. $\lab \rightarrow \labtwo$ (resp. $\lab \leftarrow \labtwo$) specifies a one-time transition from $\lab$ to $\labtwo$ (resp. from $\labtwo$ to $\lab$) when its corresponding condition $\cnd{cnd}$ evaluates to $\false$ (resp. $\true$) the first time. Since $\cnd{cnd}?\lab \leftarrow \labtwo$ is semantically identical to $\cnd{\neg cnd}?\labtwo \rightarrow \lab$, our syntax omits the $\leftarrow$ form and treats it as syntactic sugar for the latter. Direction $\leftrightarrow$ means that the transition can happen in both directions depending on the value of $\cnd{cnd}$. Moreover, there are two types of policies: transient and persistent, specified with subscripts $t$ and $p$ respectively. The main difference between them is that a persistent policy always allows to reveal information that has been revealed in the past, while a transient policy does not~\cite{Li2022}.
\end{itemize}

To strike a balance between expressiveness and enforcement, we carefully designed a few language features in Figure~\ref{fig:while-syntax}, making it substantially different from the one presented in the dynamic release paper~\cite{Li2022}:

\begin{itemize}
\item Rather than allowing the condition $\cnd{cnd}$ to be an arbitrary program expression, we restrict it to a set of distinguished, predefined \emph{security events} $\sevent{s} \in \SEvents$. These security events can be viewed as special Boolean variables, distinct from ordinary program variables such as $\x, y, z$, and may only be modified via the distinguished functions \cod{eventon} and \cod{eventoff}. This restriction enables tractable reasoning about event status within a type system.

Besides user-defined security events above, the language also supports predicates over them, such as standard logical operations on security events and $\sevent{\absent {s}}$, which holds when the security event $\sevent{s}$ has never been set to true. Moreover, to enable reasoning about whether a variable’s value has been released in the past, we introduce a new class of security events, denoted by $\sevent{o}_\x{@\level}$, indicating that the variable $\x$ has previously been released at security level $\level$.

\item Rather than associating each output channel with a dynamic release label, we restrict output commands to fixed security levels, written as $\outcmd{\level}{\expr}{\sevent{s_0}\dots\sevent{s_k}}$. As is standard, we assume that only observers at or above level $\level$ can access the value of $\expr$ produced by this command.
Furthermore, the expression $\expr$ is revealed only if all security events in the sequence $\sevent{s_0}, \dots, \sevent{s_k}$ are enabled at the time the command is executed. Otherwise, the output command has no effect (i.e., it behaves as a nop). Consistent with prior works on dynamic policies~\cite{Chong2008, askarov2007, Li2022}, we assume that information is only released with output commands in the language; assignments to variables do not reveal any information directly. 

\item We introduce a $\cod{relabel}$ command that can also ``peek'' at the current status of security events and utilize more permissive information flow reasoning rules when possible. In particular, $x:=\relabelusing{\expr}{\lab}{\level}{\sevent{s_0 \dots s_k}}$ updates $x$ to the value of $\expr$ if the events $\sevent{s_0 \dots s_k}$ are all enabled at the time the command is executed. Otherwise, the command has no effect (i.e., it behaves as a nop). Note that the command is also annotated with a security label $\lab$, and a security level $\level$; the idea is that since information only flows from $\expr$ to $x$ when all security events $\sevent{s_0 \dots s_k}$ are enabled, a type system can rely on this assumption to reason more permissively about whether a flow from $\expr$ to $x$ is secure under this assumption. 
\end{itemize}

For example, consider Lines 8--9 in Figure~\ref{fig:dynamicpolapp}.i,  where $\sevent{release}$ is a security event that turns $\true$ after bidding is complete. At runtime, we know that $\sevent{release}$ holds, and hence, the relabel command writes the value of \cod{bid} to \cod{wbid}. Statically, the type system relies on the \cod{using} annotations to determine which security events hold at each program point, which in this example is $\sevent{release}$. With the assumption, the dynamic label on \cod{wbid}, which is ${\neg\sevent{release}? \High \rightarrow_t\Low}$, can be leaked to $\Low$ as the information is already declassified whenever $\sevent{release}$ holds. 
Hence, the novel design of  $\cod{relabel}$ command allows a static type system to soundly prove security without deep knowledge of the program state at each program point. We formalize and enforce such reasoning in our type system (Section~\ref{sec:typesystem}).

One caveat is that the design also relies on a programmer, or an advanced program analysis, to properly annotate the clause after \cod{using} to preserve the original semantics of a program. We leave this as orthogonal future work, as our primary goal of this work is to develop a core language with sound enforcement of dynamic release policies, assuming that an oracle is provided.

\subsection{Semantics of Commands}
\label{sec:commandsemantics}

\begin{figure*}
\begin{mathpar}
\inferrule[OS-eventon]
{}
{\configThree{\Mem}{\cod{eventon}(\sevent{s})}{\strace} \To 
\configThree{\Mem}{\Skip}{\strace\cdot\sevent{s}}}
\quad
\inferrule[OS-eventoff]
{}
{\configThree{\Mem}{\cod{eventoff}(\sevent{s})}{\strace} \To 
\configThree{\Mem}{\Skip}{\strace\cdot \neg\sevent{s} }}
\and
\inferrule[OS-Output-Succ]
{\configTwo{\Mem}{e}\evalto v \quad
\inferrule{\forall i \in 0...k.~\\\\
\cod{eval}(\sevent{s_i},\sigma)=\true}{} \quad
\strace'=
\text{
$\begin{cases}
\strace\cdot \sevent{o}_\x\sevent{@\level}, & \text{$\expr$ is a variable $\x$} \\
\strace, & \text{otherwise}
\end{cases}$
}}
{\configThree{\Mem}{\outcmd{\level}{\expr}{\sevent{s_0}\dots\sevent{s_k}}}{\strace} \xrightarrow[]{\configThree{\level}{v}{\strace}}
\configThree{\Mem}{\Skip}{\strace'} }
\and
\inferrule[OS-Output-fail]
{\configTwo{\Mem}{\expr}\evalto v \\ 
\inferrule{\exists i \in 0...k.~\\\\
\cod{eval}(\sevent{s_i},\sigma) = \false}{}
}
{\configThree{\Mem}{\outcmd{\level}{\expr}{\sevent{s_0}\dots\sevent{s_k}}}{\strace} \xrightarrow[]{}
\configThree{\Mem}{\Skip}{\strace}}
\and
\inferrule[OS-Relabel-Succ] 
    {\configTwo{\Mem}{\expr} \Downarrow n \\ \forall i \in 0...k.~\cod{eval}(\sevent{s_i},\sigma)=\true }
    { \configThree{\Mem}{\x:=\relabel{\expr}{\lab_f}{\level_t}{\sevent{s_0}\dots\sevent{s_k}}}{\strace}\rightarrow \configThree{\Mem[\x:=n]}{\Skip}{\strace}}
\and
\inferrule[OS-Relabel-Fail] 
    {\exists i \in 0...k.~\cod{eval}(\sevent{s_i},\sigma)=\false}
    { \configThree{\Mem}{x:=\relabel{\expr}{\lab_f}{\level_t}{\sevent{s_0}\dots\sevent{s_k}}}{\strace}\rightarrow \configThree{\Mem}{\Skip}{\strace}}
\end{mathpar}
\caption{Small-Step Semantics of Selected Commands, which $\cod{eval}$ checks the value of $\sevent{s}$ on trace $\strace$.}
\Description{}
\label{fig:key_semantics}
\end{figure*}

We model a program state as memory, written as $\Mem$: a mapping from program variables to their corresponding values. For an expression $\expr$, its big-step semantics is written as $\configs{\Mem,\expr} \Downarrow n$, which means that $\expr$ evaluates to value $n$ under memory $\Mem$. For commands, we use small-step semantics since the dynamic nature of information flow policy requires step-by-step details during a program execution. The small-step semantics is written as:$$ \configThree{\Mem}{\command}{\strace}\xrightarrow{\configThree{\level}{v}{\strace}} \configThree{\Mem'}{\command'}{\strace'} $$where $\strace$ and $\strace'$ are sequences of triggered security events (e.g., $\sevent{s_1}\cdot \neg \sevent{s_2}\cdot \sevent{o}_\x\sevent{@\Low}$).

Each step emits a (potentially empty) output event $\configThree{\level}{v}{\strace}$, which indicates the output of value $v$ to an output channel at level $\level$ under the security event trace $\sigma$. We assume that all security events are initialized. By default, the condition is set to true in right-arrow and bi-directional policies. Only output commands reveal information to entities and, hence, generate output events, as formalized in Rule~\ruleref{OS-Output-Succ} in Figure~\ref{fig:key_semantics}. Moreover, changes to security events are tracked in the security event trace, by rules \ruleref{OS-EventOn}, \ruleref{OS-EventOff} and \ruleref{OS-Output-Succ}. 

It is worth noting that both the \cod{output} and \cod{relabel} commands are syntactically associated with \cod{using} clauses. These \cod{using} annotations specify sufficient conditions on security events under which the corresponding commands are considered secure. For example, the output command at Line 6 in Figure~\ref{fig:dynamicpolapp}.i can reveal \cod{wbid} to the public \emph{only} when $\sevent{release}$ is set.
Accordingly, when these \cod{using} clauses are satisfied, the corresponding commands behave identically to their standard semantics. Otherwise (namely, when any security event appearing after the \cod{using} keyword evaluates to $\false$) the command instead behaves as a \cod{nop}. For example, the output command at Line 6 in Figure~\ref{fig:dynamicpolapp}.i produces no outputs due to Rule~\ruleref{OS-Output-Fail}, as $\sevent{release}$ is not set at that point.

As we elaborate in Section~\ref{sec:typesystem}, a type system can statically verify that these programmer-provided annotations are sufficient conditions for end-to-end security. Hence, the \cod{using} annotations affect the observable behavior of the original program only in two cases: either the annotations are overly restrictive, preventing an otherwise secure command from executing, or the corresponding commands in the original program would violate the security policy (e.g., Line 6 in Figure~\ref{fig:dynamicpolapp}.i) and must therefore be suppressed.

Finally, as introduced informally earlier, the $\cod{relabel}$ command has the same semantics as a normal assignment $\assign{x}{\expr}$ when all the security events' values are as expected; otherwise, the command is the same as a nop instruction. The semantics of other commands are standard and are included in~\cite{full_proof}.

One program execution under initial state $\configThree{\Mem_0}{c_0}{\strace_0}$ produces an \emph{execution trace} $\trace$ as follows:
{\small
\[
\configThree{\Mem_0} {\command_0}{\strace_0} 
\xrightarrow{\configThree{\level_0}{v_0}{\strace_0}} \configThree{\Mem_1} {\command_1}{\strace_1}  \cdots 
\xrightarrow{\configThree{\level_{n-1}}{v_{n-1}}{\strace_{n-1}}}  \configThree{\Mem_n}{\command_n}{\strace_n}
\]
}

We use $\configThree{\Mem}{\command}{\strace} \hookrightarrow \trace$ to denote that execution from $\configThree{\Mem}{\command}{\strace}$ produces an execution trace $\trace$. We use $ \trace^{[i]} $ to denote the configuration after the $i$-th evaluation step in $\traceout{\tau}$, $\traceout{\tau^{[:i]}}$ to denote a sub-trace from the initial point to the i-th evaluation step, and $ \traceout{\tau^{[i:j]}}$ to denote a sub-trace between $i$-th and $j$-th evaluation steps. We use $ \len{\traceout{\tau}} $ to denote the number of evaluation steps in the trace. Since $\strace$ is a sequence of events, we use similar annotation for its subsequences. 

\subsection{Semantics of Security Labels}
\label{sec:semantics_labels}

\begin{figure}
\centering
\begin{equation*}
\auxfuncold{\level}{\strace} = \level
\end{equation*}
\begin{equation*}
\auxfuncold{\cnd{cnd}? \lab\rightarrow \labtwo}{\strace} =
\begin{cases}
    \auxfuncold{\lab}{\strace} & \cod{first}(\cnd{cnd},\false,\strace) = -1\\
    \auxfuncold{\labtwo}{\strace^{[i:]}} & i = \cod{first}(\cnd{cnd},\false,\strace)
\end{cases}
\end{equation*}
\begin{equation*}
\auxfuncold{\cnd{cnd}? \lab\leftrightarrow \labtwo}{\strace} =
\begin{cases}
    \auxfuncold{\lab}{\strace^{[i+1:]}} & i = \cod{last}(\cnd{cnd},\false,\strace) \not= \len{\strace}\\
    \auxfuncold{\labtwo}{\strace^{[i+1:]}} & i = \cod{last}(\cnd{cnd},\true,\strace) \not= \len{\strace}
\end{cases}
\end{equation*}
\vspace{-2ex}
\caption{Label semantics.}
\label{fig:label-semantics}
\label{transient_policy_label_semantics}
\end{figure}

To interpret dynamic release labels, we treat each label $\cnd{cnd}?\lab\diamond \labtwo$ as an ``expression'' and evaluate its value (i.e., a security level from a predefined lattice in our setting) at the end of a trace $\traceout{\strace}$ as shown in Figure~\ref{fig:label-semantics}. In the definition, taken from~\cite{Li2022},
the helper function \cod{first} returns the first index of $\strace$ such that $\cnd{cnd}$ evaluates to the second parameter, or -1 if no such index is found; and \cod{last} returns the last index of $\strace$ such that $\cnd{cnd}$ evaluates to the second  parameter, or -1 if no such index is found. As for $\leftrightarrow$, the labels may transit an arbitrary number of times whenever the value of $\cnd{cnd}$ changes, which is mainly used in the revocation dynamic policy.

Intuitively, $\auxfuncold{\lab}{\strace}$ computes the \emph{least} restrictive security level that $\lab$ can flow to at the end of $\strace$. More specifically, for a static level $\level$, its interpretation is simply $\level$. For dynamic labels like $\cnd{cnd}?\lab \rightarrow \labtwo$, a one-time sensitivity change from $\lab$ to $\labtwo$ occurs when $\cnd{cnd}$ first evaluates to $\false$. Let $i$ be the first index in $\strace$ such that $\cnd{cnd}$ evaluates to $\false$. $\auxfuncold{\cnd{cnd}?\lab \rightarrow \labtwo}{\strace}$ reduces to $\auxfuncold{\lab}{\strace}$ if no such $i$ exists; otherwise, it reduces to $\auxfuncold{\labtwo}{\strace^{[i:]}}$. Bi-directional labels are interpreted based on the last configuration of $\strace$, where $i$ is the last index at which $\cnd{cnd}$ evaluates to $\false$. Therefore, $i \not= \len{\traceout{\strace}}$ implies that $\cnd{cnd}$ evaluates to $\true$ at the end of $\strace$, and the label reduces to $\lab$. The label $\lab$ is then evaluated under $\strace^{[i+1]}$, ensuring that any potentially nested conditions are handled properly.

The interpretation of the labels above specifies when and where information can be released to: information with label $\lab$ can be released to entities at or above level $\auxfuncold{\lab}{\strace}$ at the end of $\strace$. However, it is not suitable for information flow enforcement, since it does not allow direct comparison of dynamic labels at each program point. For example, label $\neg\sevent{release}?\High\rightarrow_t \Low$ (declassification policy) and label $\High$ both evaluate to $\High$ when the security event $\sevent{release}$ evaluates to $\false$ under $\strace$. However, information flow from $\High$ to $\neg\sevent{release}?\High\rightarrow_t \Low$ is obviously insecure as if the flow happens, the former is effectively declassified once $\sevent{release}$ becomes $\true$.

To address the challenge above, we observe that for a secure flow from label $\lab$ to $\labtwo$ under an event trace $\strace$, we need to examine all possible \emph{extensions} of $\strace$ so that $\lab$ is always bounded by $\lab'$ in the future (i.e., $\forall \strace \preceq \strace'.~ \auxfuncold{\lab}{\strace'} \LEQ  \auxfuncold{\lab'}{\strace'}$).  
Consequently, we write $\lab \sqsubseteq_\sigma \lab'$ to denote that $\lab'$ is at least as restrictive as $\lab$ at $\sigma$: that is, for all future extensions of $\sigma$, wherever information with label $\lab'$ can be released, information with label $\lab$ can also be released:
\begin{definition}[Partial Ordering on Dynamic Release Labels]
\label{def:partial_dynamic}
\[\forall \lab, \lab', \strace.~\lab \LEQ_\strace \lab' \iff \forall \strace'.~(\strace \preceq \strace'\Rightarrow \auxfuncold{\lab}{\strace'} \LEQ  \auxfuncold{\lab'}{\strace'})\]
\end{definition}

Note that with dynamic release labels, the ordering on labels is parameterized on a security event trace as the relation changes during program execution. This contrasts with standard noninterference, where the security lattice remains fixed throughout program execution. For example, consider two security event traces $\strace_1$ and $\strace_2$ ($\strace_1\preceq\strace_2$), where $\sevent{release}$ is always $\false$ in $\strace_1$, but it turned $\true$ in $\strace_2$. It is easy to check that ($\neg\sevent{\sevent{release}}?\High\rightarrow \Low \not\LEQ_{\strace_1} \Low$) and ($\neg\sevent{\sevent{release}}?\High\rightarrow \Low \LEQ_{\strace_2} \Low$), meaning that the declassification label cannot flow to $\Low$ under $\strace_1$ but it can flow to $\Low$ under $\strace_2$. On the other hand, ($\High \not\LEQ_{\strace_1} \neg\sevent{\sevent{release}}?\High\rightarrow \Low$) and ($\High \not\LEQ_{\strace_2} \neg\sevent{\sevent{release}}?\High\rightarrow \Low$), meaning that $\High$ can never flow to information with the declassification label.

\mypara{Persistent vs. Transient Label.} Our partial ordering definition on dynamic release labels (Definition~\ref{def:partial_dynamic}) does not distinguish between persistent and transient labels. This might be surprising as intuition might suggest that a persistent policy is less restrictive than its counterpart of a transient policy (i.e., $\sevent{s}?\lab\rightarrow_t \labtwo$ is strictly more restrictive than $\sevent{s}? \lab\rightarrow_p \labtwo$). However, we found this intuition to be incorrect, as the difference between persistent and transient labels only shows up in the end-to-end security condition of dynamic release, rather than where information can be released to at each program point. We will elaborate on their differences in the type system (Section~\ref{sec:typesystem}).

\subsection{Dynamic Release Policy}
\label{sec:dynamic_release}
Dynamic release~\cite{Li2022} is the first end-to-end information flow policy that generalizes multiple kinds of  dynamic information flow policies. Its precise definition is a bit involved and we only present the essentials in this section to provide sufficient background information to understand how to soundly enforce it, which is the main contribution of this paper. We follow the notations in the original paper, other than a few minor changes for readability.

Informally, a program $\command$ satisfies dynamic release policy iff for any output $t$ produced by $\command$, the \emph{knowledge} gained about any variable $\x$ by observing $t$ is bounded by what's \emph{allowed} by the policy of $\x$ at the program point that outputs $t$. 

To formalize this, we first define ``indistinguishable'' memory set w.r.t. some memory $\Mem$ and a set of variables $X$ as the set of memories that are equivalent to $\Mem$ on the values of all variables in $X$.
\begin{definition}[Memory Closure] Given a memory $\Mem$ and a set of
variables $X$, the memory closure of $\Mem$ on $X$ is the set of memories that agree with $\Mem$ on all variables in $X$:
\label{def:mem-eq}
\begin{align*}
\closure{\Mem}_{X} \triangleq \{\Mem' \mid \forall x\in X.~\Mem(x)=\Mem'(x)\}
\end{align*}
\end{definition}
Given a memory $m$ and a security label $\lab$, we write $\llbracket \Mem \rrbracket_{\neq \lab}$ as the set of memories that agree with $\Mem$ on all variables except those whose security label is $\lab$. That is, $\llbracket \Mem \rrbracket_{\neq \lab} \triangleq \llbracket \Mem \rrbracket_{\{x \mid \Gamma(x) \neq \lab\}}$. Intuitively, only the values of variables with label $\lab$ may differ between $\Mem$ and any $\Mem' \in \llbracket \Mem \rrbracket_{\neq p}$.

In addition to memory, we need to utilize a filter function on output traces, in order to precisely capture what an attacker can learn from a trace: the filter function returns $\false$ for irrelevant outputs on a trace. With any trace filter $f$, we define a trace projection as follows:
\begin{definition}[Projection of Trace Filter] 
\label{def:proj}
\begin{equation*}
\proj{\trace}_f \triangleq \configs{ \configs{l,v,\strace} \in 
    \trace~\mid~f(l,v,\strace)}
\end{equation*}
\end{definition}

To formalize the attacker's observation, we define a trace projection up to level $\level$, $\proj{\trace}_\level \triangleq \proj{\trace}_{\lambda l, v, \strace.~l \leq \level} $ that takes in the attacker's level $\level$ and returns a sub-trace of output events that are visible to the attacker.

\mypara{Attacker's Knowledge from an Output Trace}
Consider an attacker at security level $\fixedlevel$. The attacker is able to observe all output events $\configThree{\level'}{v}{\strace}$ such that $\level'\LEQ \fixedlevel$. Hence, two traces $\trace_1$ and $\trace_2$ are \emph{indistinguishable} to the attacker, written as $\trace_1\sim_\fixedlevel \trace_2$, if they agree on all outputs at or below attacker's level $\fixedlevel$. More formally,
\begin{definition}[Indistinguishability] 
\label{def:Indistinguishability_appendix}

\begin{align*}
\sim_\level\ \triangleq \{(\trace_1,\trace_2)~|~\proj{\trace_1}_\level \preceq
\proj{\trace_2}_\level\}
\end{align*}
\end{definition}

Therefore, the ``knowledge'' gained by observing an output trace $\tau$ produced by a program $\command$ is defined as all initial memory states that can produce an indistinguishable trace from $\trace$:
\begin{definition}[Knowledge Gained from Output Trace] 
\label{def:Know_gain_indis}
\begin{align*}
k_1(c, \trace, \level) \triangleq \{ \Mem~\mid~
\configThree{\Mem}{\command}{\emptyset} \hookrightarrow \trace' \AND \trace\sim_\level
\trace'\}
\end{align*}
\end{definition}

Intuitively, the knowledge set consists of all initial memory states that remain possible given the portion of the execution trace observable to the attacker. Moreover, the smaller the knowledge set, the more information the attacker has learned from the program's execution.

By definition, the knowledge set monotonically decreases as additional events are produced along the execution trace, making it well suited for formalizing declassification policies~\cite{askarov2007}. In contrast, it is inadequate for erasure policies, which require reasoning about increases in uncertainty, or equivalently, expanding the knowledge set. Consider the credit-card example in Figure~\ref{fig:dynamicpolapp}-ii. After the first output, the attacker learns the credit card number, so the knowledge set contains only those initial memories with the actual credit card number. After Line 7, however, the credit card number should no longer be inferable. Accordingly, the knowledge set should again include all possible credit card numbers, reflecting the restored uncertainty required by the erasure policy.
To accommodate both kinds of dynamic policies within a unified framework, Dynamic Release reasons about the knowledge gained from each \emph{individual} output event, rather than from the entire execution trace. We introduce this notion next.

\mypara{Attacker's Knowledge from the Last Event} 
While the attacker's security level remains fixed throughout program execution, the meaning of a dynamic label may change over time.
To precisely define what can be revealed at each program point, a major challenge solved by~\cite{Li2022}, we begin with one key observation: events produced by two program executions are not, in general, aligned according to their positions in the corresponding execution traces. The reason is that due to downgrading, the publicly observable outputs of different executions may have different lengths (Observation 2 in~\cite{Li2022}). To address this issue, we introduce the \emph{secret projection} of an execution trace, parameterized by an attacker at security level $\level$ and a secret labeled $\lab$. The secret projection retains only the \emph{effective outputs}, namely, those outputs produced while information labeled $\lab$ is considered secret from an attacker at level $\level$.

\begin{definition}[Secret Projection of Trace]
\label{def:secretproj_modified_appendix}
Given a policy $\lab$ and an attacker at level $\level$, the secret projection of a trace is the subtrace containing outputs visible at level $\level$ that occur while $\lab$ remains secret from an attacker at level $\level$:
\begin{align*}
{\proj{\trace}_{\lab, \level}~\triangleq \proj{\trace}_{\lambda  l, v,\strace.~l\LEQ \level~\AND ~\auxfuncold{\lab}{\strace}\not\LEQ\level}} 
\end{align*}
\end{definition}

Using effective outputs, we introduce a notion of trace consistency. Two execution traces, $\trace_1$ and $\trace_2$, are said to be \emph{consistent} with respect to an attacker at security level $\level$ and a secret labeled $\lab$ if their effective outputs (1) have the same length, and (2) end with the same output event.

\begin{definition}[Consistency] 
\label{def:consistency_appendix}
Two output sequences $\trace_1$ and $\trace_2$ are consistent w.r.t. a policy $\lab$ and an attack level $\level$, written as $\trace_1 \equiv_{\lab,\level} \trace_2$ if 
\begin{align*}
n=\len{\proj{\trace_1}_{\lab,\level}}=\len{\proj{\trace_2}_{\lab,\level}} \land
\proj{\trace_1}_{\lab,\level}^{[n]}=\proj{\trace_2}_{\lab,\level}^{[n]}
\end{align*}
\end{definition}

With the above definitions, we can build the definition of the knowledge gained by the attacker at level $\level$ by observing the last event. 

\begin{definition}[Attacker's Knowledge Gained from the Last Event] 
\label{def:Know_gain_last}
For an attacker at level $\level$, the attacker's knowledge w.r.t. information with policy $\lab$, after observing the last event of an output sequence traces program $c$, is the set of all initial memories that produce an output sequence that is indistinguishable to some consistent counterpart of the trace $\trace$: 
\begin{align*}
\small
k_2(\command, \trace, \level, \lab) = \bigcup_{\exists. \Mem', \strace, j.~\configThree{\Mem'}{\command}{\strace} \termout 
\trace' \AND \trace'^{[:j]} \equiv_{\lab,\level}  \trace } k_1(\command, \trace'^{[:j]}, \level )
\end{align*}
\end{definition}

Finally, we define dynamic release security. For every output event produced during program execution, the attacker's knowledge gained from observing that output (Definition~\ref{def:Know_gain_last}) must be bounded by the policy allowance at the corresponding program point. For a transient policy, the policy allowance is simply $\llbracket m \rrbracket_{\neq \lab}$, where only the values of variables labeled $\lab$ are treated as secret. For a persistent policy, the allowance is instead defined as $\llbracket m \rrbracket_{\neq \lab} \cap k_1(c, \tau^{[:i-1]}, L)$, thereby permitting the attacker to retain any knowledge acquired through previous outputs while still preventing any additional information beyond what the policy allows from being learned.

\begin{definition}[Dynamic Release]
\label{def:newpolicy}
\begin{multline*}
\forall \Mem, \level\subseteq \mathbb{\level}, \lab \in  \mathbb{B}, \strace,
\trace.~\configThree{\Mem}{c}{\strace} \termout \trace  
\Rightarrow 
\forall 1\leq i\leq \len{\trace}.\quad \\
k_2(c,\trace^{[:i]}, \level, \lab) 
\supseteq 
\begin{cases} 
\closure{\Mem}_{\neq \lab}, & \text{transient} \\
\closure{\Mem}_{\neq \lab} \cap k_1(c, \trace^{[:{i-1}]},\level), & 
\text{persistent}
\end{cases}
\end{multline*}
\end{definition}

\section{Type System}
\label{sec:static_rules}
To enforce dynamic release policy, we develop a static type system to control information flow at compile time. One challenge of doing so is to achieve a balance between soundness and permissiveness, given the fact that the meaning of each label can change throughout program execution.
To address this challenge, we first introduce two static relations, each parameterized by $S$, a set of statically computed facts about security events that must hold at each program point. The relations approximate when information can securely flow between dynamic labels, and be released respectively at compile time. Built on them, we define a type system that enforces the end-to-end dynamic release policy on the whole program. 

\subsection{Oracle on Security Events}
Since the interpretation of dynamic release labels relies on a security event trace $\strace$, the type system inevitably requires knowledge of the existence or absence of security events up to some program point. Moreover, for a persistent policy, dynamic release also depends on the outputs in the past (Definition~\ref{def:newpolicy}).
To decouple the reasoning of security events and outputs from information flow enforcement, we follow a simple design where the type system only uses the facts established by the $\cod{using}$ keyword in the $\cod{output}$ and $\cod{relabel}$ commands. For example, a set of facts $S=\{\sevent{s_1},\neg \sevent{s_1}, \neg\sevent{s_2}, \sevent{\absent{s_3}}, \sevent{o}_\x\sevent{@\Low}\}$ states that up to the program point of interest, $\sevent{s_1}$ must have been both $\true$ and $\false$, $\sevent{s_2}$ must have been $\false$ (and may be $\true$), $\sevent{\absent{s_3}}$ has never been $\true$, and the value of $x$ has been released to level $\level$ already. These facts are checked by commands such as $\outcmd{\Low}{x}{\sevent{s_1},\neg \sevent{s_1}, \neg\sevent{s_2}, \sevent{\absent{s_3}}, \sevent{o}_\x\sevent{@\Low}}$ when any condition evaluates to false, the command behaves like a nop command (Section~\ref{sec:commandsemantics}). In general, we can also estimate $S$ statically, such as using dataflow analysis~\cite{alfred2007compilers}, which we leave as future work. Hereafter, we use $S\vDash \sevent{s_2}$ to indicate that $\sevent{s_2}$ must have been $\true$ given $S$, and $S\vDash \absent{\sevent{s_2}}$ to indicate that $\sevent{s_2}$ must never have been $\true$ given $S$.

\subsection{Static Rules for Flows-to Relation}
\label{sec:flows_to}
\begin{figure}
\framebox{
\textbf{General Lattice Rules}}
{
\begin{mathpar}
\mprset{flushleft} 
\inferrule[Lattice] 
  { \level\LEQ \level' }
  {S\vdash \level \flowsto\level' }
\and
\inferrule[Trans]
  {S\vdash \lab\flowsto \lab' \\ S\vdash \lab'\flowsto \lab''}
  {S\vdash \lab\flowsto \lab''} 
\end{mathpar}
}

\framebox{
\textbf{Directional Inference Rules}}
{\small
\begin{mathpar}
\mprset{flushleft}
\inferrule[R-To] 
  { S\vDash \neg \sevent{s_2} \\  \vdash \level\flowsto \labtwo'}
  { S\vdash \level\flowsto \sevent{s_2}?\labtwo\rightarrow \labtwo'}
\and
\inferrule[R-From] 
  { S\vDash \neg \sevent{s_1} \\  \vdash \lab'\flowsto \level}
  { S\vdash \sevent{s_1}?\lab\rightarrow\lab'\flowsto \level}
\and
\inferrule[D-To] 
  {S\vDash \neg\sevent{s_2} \\\neg\sevent{s_1} \Rightarrow \neg\sevent{s_2}  \\  
  \vdash \sevent{s_1}?\level \rightarrow \lab' \flowsto \labtwo'}
  { S\vdash \sevent{s_1}?\level\rightarrow \lab' \flowsto \sevent{s_2}?\labtwo\rightarrow \labtwo'}
\and
\inferrule[D-To Two] 
{ S\vdash \lab\flowsto \level \quad \vdash \lab'\flowsto \level}
  {S\vdash \sevent{s_1}?\lab\rightarrow \lab'\flowsto \level}
\and
\inferrule[R-To Two] 
{\neg\sevent{s_2} \Rightarrow \neg\sevent{s_1}\quad S \vdash \lab\flowsto \sevent{s_2}?\labtwo\rightarrow \labtwo' \quad \vdash \lab'\flowsto \sevent{s_2}?\labtwo\rightarrow \labtwo'}
  {S\vdash \sevent{s_1}?\lab\rightarrow \lab'\flowsto \sevent{s_2}?\labtwo\rightarrow \labtwo'}
\end{mathpar}
}
\framebox{
\textbf{Bi-directional Inference Rules}}
{\small
\begin{mathpar}
\mprset{flushleft}
\inferrule[Bi-To] 
  { \vdash \level \flowsto \labtwo \\ \vdash\level \flowsto \labtwo'}
  {S\vdash  \level \flowsto \sevent{s_2}? \labtwo \leftrightarrow \labtwo'}
\and
\inferrule[Bi-From] 
  { \vdash \lab\flowsto \level  \\ \vdash \lab'\flowsto \level}
  {S\vdash \sevent{s_1}?\lab \leftrightarrow \lab'\flowsto  \level}
\and 
\inferrule[Bi-Both] 
  { \sevent{s_1} \Leftrightarrow \sevent{s_2} \\ \vdash \lab\flowsto \labtwo \\ \vdash \lab'\flowsto \labtwo'}
  {S\vdash \sevent{s_1}?\lab \leftrightarrow \lab'\flowsto  \sevent{s_2}? \labtwo \leftrightarrow \labtwo'}
\end{mathpar}
}
\caption{Selected Static Rules for ``Flows-to'' Relation, which $S\vDash \sevent{s_1}$ denotes that $\sevent{s_1}$ must have been $\true$ given $S$.}
\Description{}
\label{fig:inference_rules}
\end{figure}

We first define a \emph{flows-to} relation on dynamic labels, parameterized by a set of security events $S$, such as $\{\sevent{s_1},\neg\sevent{s_2}\}$. Intuitively, the flows-to relation $S \vdash \lab \flowsto \labtwo$ holds if label $\lab$ is less restrictive than, or equivalent to $\labtwo$ for any event trace $\traceout{\strace}$ that satisfies $S$ and its extensions.
A selection of rules for the flows-to relation is summarized in Figure~\ref{fig:inference_rules}; all rules are included in~\cite{full_proof}.

\mypara{Base Cases}
Rules~\ruleref{Lattice} and~\ruleref{Trans} are general rules based on Denning-style security lattice and the rule of transitivity. These rules are the same as a typical static information flow system.

\mypara{Directional Cases}
Rule~\ruleref{R-To} compares a static level $\level$ against a dynamic label $\sevent{s_2}?\labtwo\rightarrow \labtwo'$ under the assumption that $\neg \sevent{s_2}$ evaluates to $\true$ (i.e., $\sevent{s_2}$ evaluates to $\false$). Hence, we can simplify the dynamic label to $\labtwo'$, as the transition to $\labtwo'$ must have happened. Rule~\ruleref{R-From} is symmetric to Rule~\ruleref{R-To}.
Note that in both rules, we intentionally omit the event facts $S$ before $\vdash$ from the assumptions to ensure soundness. The reason is that, with nested conditions, events in $S$ may be triggered before the top-level condition is evaluated. Consequently, propagating them to the more deeply nested sub-label may be unsound.

The general case when both ends of flows-to relation involve dynamic labels is a bit more complicated. To illustrate the subtlety, we consider a naive but \emph{incorrect} extension to Rule~\ruleref{R-To}
\begin{mathpar}
\small
\inferrule[] 
  { S\vDash \neg \sevent{s_2} \\  \vdash \sevent{s_1}?\lab\rightarrow \lab' \flowsto \labtwo'}
  { S\vdash \sevent{s_1}?\lab\rightarrow \lab' \flowsto \sevent{s_2}?\labtwo\rightarrow \labtwo'}
\end{mathpar}
At first glance, this rule is legit as it follows the same argument as \ruleref{R-To}: it only reduces $\sevent{s_2}?\labtwo\rightarrow \labtwo'$ when $\sevent{s_2}$ turns $\false$. However, this rule is unsound when $\labtwo'$ has nested conditions. To see why, consider the following counterexample: 
\[\sevent{s_1}?\Low \rightarrow \High \flowsto \sevent{s_2}?\Low \rightarrow (\sevent{s_1}?\Low\rightarrow\High)\] 
and a trace $\strace$ where  $\sevent{s_1}$ is set to $\false$ and then to $\true$ in sequence. After that, $\sevent{s_2}$ turns $\false$, as illustrated below. 

\begin{center}
    \includegraphics[width=0.5\columnwidth]{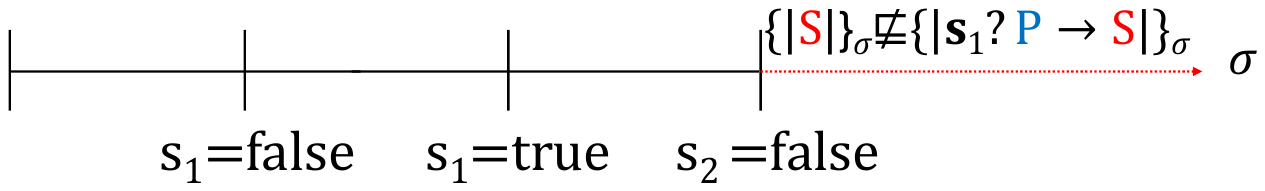}
\end{center}

At the end of $\strace$, $\neg \sevent{s_2}$ holds. Hence, we can instantiate $S$ with $\neg \sevent{s_2}$ in the proposed rule, and derive
\begin{mathpar}
\small
\inferrule[] 
  {  \neg \sevent{s_2} \vDash \neg \sevent{s_2} \\  \vdash \sevent{s_1}?\Low \rightarrow \High \flowsto \sevent{s_1}?\Low\rightarrow\High}
  { \neg \sevent{s_2} \vdash \sevent{s_1}?\Low \rightarrow \High \flowsto \sevent{s_2}? \Low \rightarrow (\sevent{s_1}?\Low\rightarrow\High)}
\end{mathpar}
where the assumption that $ \vdash \sevent{s_1}?\Low \rightarrow \High \flowsto \sevent{s_1}?\Low\rightarrow\High$ holds trivially as both sides are identical.

However, we next show that the derived flow-to relation is unsound. By the semantics, $\auxfuncold{\sevent{s_1}? \Low \rightarrow \High}{\strace}$ reduces to 
$\High$
as $\sevent{s_1}$ turned to $\false$ on the trace. Moreover, $\auxfuncold{\sevent{s_2}? \Low \rightarrow (\sevent{s_1}? \Low \rightarrow \High)}{\strace}$ reduces to $\auxfuncold{\sevent{s_1}? \Low \rightarrow \High}{\strace^{[i:]}}$ where $i$ is the first index when $\sevent{s_2}$ turned to $\false$. Note that even though $\sevent{s_1}$ \emph{was} $\false$ on the whole trace $\strace$, the nested condition $\sevent{s_1}$ is evaluated on the sub-trace \emph{after} the top-level condition $\sevent{s_2}$ is triggered (Figure~\ref{fig:label-semantics}). On that sub-trace ${\strace^{[i:]}}$, $\sevent{s_1}$ evaluates to $\true$. Hence, $\auxfuncold{\sevent{s_1}? \Low \rightarrow \High}{{\strace^{[i:]}}}$ further reduces to $\Low$. Hence, $\auxfuncold{\sevent{s_1}? \Low \rightarrow \High}{\strace}=\High \not\LEQ \Low = \auxfuncold{\sevent{s_2}? \Low \rightarrow (\sevent{s_1}? \Low \rightarrow \High)}{\strace}$, indicating that 
$\sevent{s_1}? \Low \rightarrow \High$ is not always bounded by $\sevent{s_2}? \Low \rightarrow (\sevent{s_1}? \Low \rightarrow \High)$ at $\strace$ (i.e., $\sevent{s_1}? \Low \rightarrow \High \not\LEQ_\strace \sevent{s_2}? \Low \rightarrow (\sevent{s_1}? \Low \rightarrow \High)$, Definition~\ref{def:partial_dynamic}). A similar issue occurs when $\lab$ contains nested conditions.

To ensure soundness while permitting permissiveness when possible, we observe that the soundness issue above can be avoided if $\lab$ has no nested conditions (i.e., it is a level $\level$), and $\sevent{s_1}$ is never $\false$ before $\sevent{s_2}$ turns $\false$, which can be captured as $(\neg\sevent{s_1} \Rightarrow \neg\sevent{s_2})$, where $\Rightarrow$ is a logical implication. The general case is included in the sound derivation rule~\ruleref{D-To} in Figure~\ref{fig:inference_rules}. The symmetric version Rule~\ruleref{D-From} is included in~\cite{full_proof}.

When the permissive rules~\ruleref{D-To} and~\ruleref{D-From} are not applicable, rule~\ruleref{R-To Two} conservatively checks that $\sevent{s_1}?\lab\rightarrow \lab'$ can flow to some level $\level$. For the general case of $\sevent{s_1}?\lab\rightarrow \lab'  \flowsto \sevent{s_2}?\labtwo\rightarrow \labtwo'$, similar to Rule~\ruleref{D-To}, Rule~\ruleref{D-To Two} assumes that $(\neg\sevent{s_2} \Rightarrow \neg\sevent{s_1})$ to be sound.

\mypara{Bi-Directional Cases}
Finally, we introduce the static rules for bi-directional policies, which align with the design of $\rightarrow$. Bi-directional policies only have the less permissive rules as the condition can change throughout trace extensions, and it is hard to define rules without binding to both ends. Rules~\ruleref{Bi-To} and~\ruleref{Bi-From} correspond to the cases where one side is a plain static level $\level$ in the single-direction cases, checking bounds for both parts of the dynamic policy. Rule~\ruleref{Bi-Both} assumes that the dynamic labels on both sides have equivalent conditions, and then checks the ``flows-to'' relation on the corresponding components.

Next, we establish the soundness of the static inference rules: for any labels $\lab$, $\labtwo$ and a set of security events $S$, $S\vdash \lab\flowsto \labtwo$ implies that $\labtwo$ is more restrictive or equal to $\lab$ on any event trace $\strace$ where $S$ holds, as well as on any extension of $\strace$.

\newtcbtheorem[use counter=theorem]{mytheorem}{Theorem}%
  {mytheorem}{theorem}

\begin{theorem}[Soundness of Flows-To Relation]
\label{theorem:setsound}
For all policies p and q, a trace $\strace$, a set of security events $S={\sevent{s_0}\dots\sevent{s_k}}$, if $S\vdash \lab\flowsto \labtwo$ and $S$ hold on $\strace$, then $\lab \LEQ_\strace \labtwo$.
\end{theorem}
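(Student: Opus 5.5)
We argue by induction on the derivation of $S\vdash \lab\flowsto \labtwo$, proving the statement for all traces $\strace$ on which $S$ holds simultaneously. Two auxiliary facts are used throughout. First, \emph{monotonicity of facts}: if $S$ holds on $\strace$ and $\strace\preceq\strace'$, then $S$ holds on $\strace'$. This is because every fact in $S$ is a ``has-been'' statement ($\sevent{s}$ was once true, $\neg\sevent{s}$ was once true, $\sevent{o}_\x\sevent{@\level}$ occurred), except $\sevent{\absent{s}}$, which we treat by the usual convention that the oracle only certifies it on prefixes where it remains valid. Second, \emph{closure of $\LEQ_\strace$ under extension}: if $\lab\LEQ_\strace\labtwo$ and $\strace\preceq\strace'$, then $\lab\LEQ_{\strace'}\labtwo$, immediately from Definition~\ref{def:partial_dynamic}. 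The empty-$S$ judgments $\vdash\cdot\flowsto\cdot$ appearing in premises are covered by the same induction with $S=\emptyset$, which holds on every trace and every suffix $\strace^{[i:]}$.

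The base and structural cases are routine. \ruleref{Lattice} is immediate, since $\auxfuncold{\level}{\strace'}=\level$. \ruleref{Trans} follows by chaining the two induction hypotheses pointwise on each extension $\strace'$, using transitivity of $\LEQ$ in $\Lattice$. For \ruleref{R-To}, fix $\strace'\succeq\strace$. Since $S\vDash\neg\sevent{s_2}$, the index $i=\cod{first}(\sevent{s_2},\false,\strace')$ exists and lies within $\strace$. Hence $\auxfuncold{\sevent{s_2}?\labtwo\rightarrow\labtwo'}{\strace'}=\auxfuncold{\labtwo'}{\strace'^{[i:]}}$, and the empty-$S$ hypothesis applied on the suffix gives $\level\LEQ$ this value. \ruleref{R-From} is symmetric. \ruleref{D-To Two} splits on whether $\cnd{first}$ exists in $\strace'$: if not, use the $S$-hypothesis for $\lab$ (we also need $\auxfuncold{\lab}{\strace'}$, which is evaluated on the full $\strace'$); if so, use the empty hypothesis for $\lab'$ on the suffix. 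The bidirectional rules are handled the same way. For \ruleref{Bi-To} and \ruleref{Bi-From}, whichever branch is active at the end of $\strace'$, the relevant component is evaluated on a suffix, where the empty-$S$ hypothesis applies. For \ruleref{Bi-Both}, the equivalence $\sevent{s_1}\Leftrightarrow\sevent{s_2}$ makes the $\cod{last}$ indices coincide on both sides, so the same branch and the same suffix $\strace'^{[i+1:]}$ are compared.

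The main obstacle is the pair of rules in which both sides are dynamic, namely \ruleref{D-To}, \ruleref{R-To Two}, and the omitted \ruleref{D-From}. The counterexample in the text shows exactly where the misalignment of suffixes matters, so these are the cases to get right. For \ruleref{D-To}, fix $\strace'\succeq\strace$ and let $j$ be the first index at which $\sevent{s_2}$ is false; by $S\vDash\neg\sevent{s_2}$ it lies in $\strace$. The right side then equals $\auxfuncold{\labtwo'}{\strace'^{[j:]}}$. For the left side, the plan is to use $\neg\sevent{s_1}\Rightarrow\neg\sevent{s_2}$, read as a trace-level implication (whenever $\sevent{s_1}$ has turned false, $\sevent{s_2}$ has turned false no later), to show the following. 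Evaluating $\sevent{s_1}?\level\rightarrow\lab'$ on $\strace'$ gives either $\level$, or $\auxfuncold{\lab'}{\strace'^{[i:]}}$ with $i\le j$. In both cases it agrees with, or is bounded by, its evaluation on $\strace'^{[j:]}$, because the left component is a bare level and so carries no nested history. The empty-$S$ hypothesis $\sevent{s_1}?\level\rightarrow\lab'\flowsto\labtwo'$ applied on $\strace'^{[j:]}$ then closes the case. The delicate point is precisely this comparison between evaluation on $\strace'$ and on the suffix. I would isolate it as a lemma characterising when $\auxfuncold{\lab}{\strace'}\LEQ\auxfuncold{\lab}{\strace'^{[j:]}}$ can be guaranteed, and I expect to need the precise semantic reading of the implication side-conditions there. \ruleref{R-To Two} is handled the same way, splitting on whether $\sevent{s_1}$ has turned false and using $\neg\sevent{s_2}\Rightarrow\neg\sevent{s_1}$ to align the suffix on which $\lab'$ is evaluated with the one on which the right label is evaluated.
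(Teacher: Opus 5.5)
Your skeleton matches the paper's proof: induction on the rules, fixing $\strace'\succeq\strace$, and discharging the empty-$S$ premises on suffixes. Your cases for \ruleref{Lattice}, \ruleref{Trans}, \ruleref{R-To}, \ruleref{D-To Two} and the bidirectional rules also match it.

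The gap is \ruleref{D-To}. You correctly flag it as the crux, but you defer it to an unproven lemma, and the version you do state is wrong. Let $j$ be the first index at which $\sevent{s_2}$ is false, and $i$ the first index at which $\sevent{s_1}$ is false. Your reading of $\neg\sevent{s_1}\Rightarrow\neg\sevent{s_2}$ gives $j\le i$, yet you write $i\le j$. Under $i\le j$, the claim that $\auxfuncold{\sevent{s_1}?\level\rightarrow\lab'}{\strace'}=\auxfuncold{\lab'}{\strace'^{[i:]}}$ ``agrees with, or is bounded by'' its value on $\strace'^{[j:]}$ is false. On that suffix $\sevent{s_1}$ need not be false, so the label evaluates to $\level$. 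This is exactly the counterexample in the text, with $\level=\Low$ and $\lab'=\High$, and the bare-level restriction does not help there.

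The missing step is the paper's alignment argument, and it gives equality rather than a bound. Since $j\le i$ and $\sevent{s_1}$ has not turned false before $i$, its first false index in $\strace'^{[j:]}$ is $i-j$, and $(\strace'^{[j:]})^{[i-j:]}=\strace'^{[i:]}$. Hence $\auxfuncold{\sevent{s_1}?\level\rightarrow\lab'}{\strace'}=\auxfuncold{\sevent{s_1}?\level\rightarrow\lab'}{\strace'^{[j:]}}$. If $\sevent{s_1}$ never turns false, both sides are $\level$; this is the only place the bare level is needed. The empty-$S$ hypothesis on $\strace'^{[j:]}$ then closes the case, no general lemma is required, and \ruleref{D-From} is symmetric.

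In \ruleref{R-To Two}, the analogous alignment covers only the sub-case where $\sevent{s_2}$ has also turned false. Suppose $\sevent{s_1}$ turns false at some $k$ but $\sevent{s_2}$ never does. Then the side condition says nothing. The hypothesis bounds $\auxfuncold{\lab'}{\strace'^{[k:]}}$ by $\auxfuncold{\labtwo}{\strace'^{[k:]}}$, but the target is $\auxfuncold{\labtwo}{\strace'}$, and these can differ when $\labtwo$ is nested. The paper's own case split for this rule does not spell out that sub-case either, so it needs separate care.

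Finally, your ``monotonicity of facts'' and the convention for $\sevent{\absent{s}}$ are unnecessary. Each $S$-premise already yields $\LEQ_\strace$ over all extensions, and the flows-to rules consult only facts of the form $S\vDash\neg\sevent{s}$, which persist.
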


\begin{proofsketch}
We proceed by induction on the static inference rules. We present the proof for Cases~\ruleref{R-To} and~\ruleref{D-To} here; the complete proof is available in~\cite{full_proof}. 

By Definition~\ref{def:partial_dynamic}, it is sufficient to show that for any $\strace'$ such that $\strace\preceq\strace'$, we have $\auxfuncold{\lab}{\strace'}\LEQ\auxfuncold{\labtwo}{\strace'}$. 

\begin{itemize}
    \item \ruleref{R-To}:
    By the assumption, we have $\vdash \level\flowsto \labtwo'$ and $S\vDash \neg\sevent{s_2}$. Since $S$ holds on $\strace$ by assumption, $\strace \preceq \strace'$, and $S\vDash \neg\sevent{s_2}$, there must be some index $i$ on $\strace'$ at which $\sevent{s_2}$ first turns $\false$. According to the semantics of Figure~\ref{fig:label-semantics}, we know that $\auxfuncold{\sevent{s_2}?\labtwo\rightarrow \labtwo'}{\strace'}=\auxfuncold{\labtwo'}{\strace'^{[i:]}}$.
    
    From the induction hypothesis and the assumption that $\vdash \level\flowsto \labtwo'$, we get $\auxfuncold{\level}{\strace'^{[i:]}}\LEQ \auxfuncold{\labtwo'}{\strace'^{[i:]}}$.
    Hence, $\auxfuncold{\level}{\strace'}= \level = \auxfuncold{\level}{\strace'^{[i:]}}\LEQ \auxfuncold{\labtwo'}{\strace'^{[i:]}}=\auxfuncold{\sevent{s_2}?\labtwo\rightarrow \labtwo'}{\strace'}$.
    
\item \ruleref{D-To}: Similar to \ruleref{R-To}, we first derive $\auxfuncold{\sevent{s_2}?\labtwo\rightarrow \labtwo'}{\strace'}=\auxfuncold{\labtwo'}{\strace'^{[i:]}}$ from the assumptions, where $i$ is the index when $\sevent{s_2}$ first turns $\false$.

If $\sevent{s_1}$ has been $\false$ on $\strace'$, let $j$ be the index when it turns false the first time. Since $\neg\sevent{s_1} \Rightarrow \neg\sevent{s_2}$, it is guaranteed that $j\geq i$. According to the semantics of Figure~\ref{fig:label-semantics}, we know that $\auxfuncold{\sevent{s_1}?\level\rightarrow \lab'}{\strace'}=\auxfuncold{\lab'}{\strace'^{[j:]}}=\auxfuncold{\sevent{s_1}?\level\rightarrow \lab'}{\strace'^{[i:]}}$ where the second equation holds as $\sevent{s_1}$ must remain true between $i$ and $j$. Moreover,
from the induction hypothesis and the assumption $\vdash \sevent{s_1}?\level\rightarrow\lab' \flowsto \labtwo'$, we know that 
$\auxfuncold{\sevent{s_1}?\level\rightarrow\lab'}{\strace'^{[i:]}} \LEQ \auxfuncold{\labtwo'}{\strace'^{[i:]}}$. Hence $\auxfuncold{\sevent{s_1}?\level \rightarrow \lab'}{\strace'}=\auxfuncold{\sevent{s_1}?\level\rightarrow\lab'}{\strace'^{[i:]}}\LEQ \auxfuncold{\labtwo'}{\strace'^{[i:]}}=\auxfuncold{\sevent{s_2}?\labtwo\rightarrow \labtwo'}{\strace'}$. 

If $\sevent{s_1}$ has never been $\false$ on $\strace'$, we know that $\auxfuncold{\sevent{s_1}?\level \rightarrow \lab'}{\strace'}=\level=\auxfuncold{\sevent{s_1}?\level \rightarrow \lab'}{\strace'^{[i:]}}$. 
From the induction hypothesis and the assumption $\vdash \sevent{s_1}?\level\rightarrow\lab' \flowsto \labtwo'$, we know that 
$\auxfuncold{\sevent{s_1}?\level\rightarrow\lab'}{\strace'^{[i:]}} \LEQ \auxfuncold{\labtwo'}{\strace'^{[i:]}}$. Hence $\auxfuncold{\sevent{s_1}?\level \rightarrow \lab'}{\strace'}=\auxfuncold{\sevent{s_1}?\level\rightarrow\lab'}{\strace'^{[i:]}}\LEQ \auxfuncold{\labtwo'}{\strace'^{[i:]}}=\auxfuncold{\sevent{s_2}?\labtwo\rightarrow \labtwo'}{\strace'}$. 


\end{itemize}
\end{proofsketch}

\begin{figure}
{\small
\begin{mathpar}
\mprset{flushleft} 
\inferrule[Lattice] 
  { \level\LEQ \level' }
  {S\vdash \level \releaseto\level' }
\quad
\inferrule[Trans] 
  { S\vdash \level \releaseto\level' \quad S\vdash \level' \releaseto\level'' }
  {S\vdash \level \releaseto\level'' }
\quad
\mprset{flushleft}
\inferrule[R-From-P] 
  { S\vDash \neg \sevent{s_1} \quad  \vdash \lab'\releaseto \level}
  { S\vdash \sevent{s_1}?\lab\rightarrow\lab'\releaseto \level}
\quad
\inferrule[R-From-F] 
  { S\vDash \absent{\neg \sevent{s_1}} \quad  S \vdash \lab\releaseto \level}
  { S\vdash \sevent{s_1}?\lab\rightarrow\lab'\releaseto \level} 
\quad
\inferrule[Bi-From] 
  {\vdash \lab\releaseto \level \quad \vdash \lab'\releaseto \level}
  {S\vdash \sevent{s_1}?\lab \leftrightarrow \lab'\releaseto  \level}
\end{mathpar}
}
\caption{Static Rules for ``Release-to'' Relation.}
\Description{}
\label{fig:inference_rules_no_extend}
\end{figure}

\subsection{Static Rules for Release-to Relation}
\label{sec:release_to}
Dynamic release policy controls information release at output commands. More formally, for each output command $\outcmd{\level}{\expr}{\sevent{s_0}\dots\sevent{s_k}}$, information with label $\lab$ can be released when $\auxfuncold{\lab}{\strace} \LEQ \level$ (Figure~\ref{fig:label-semantics}). Accordingly, we define a release-to relation, written as $S \vdash \lab \releaseto \level$, to statically check if we can safely release information with label $\lab$ to output command at level $\level$ under the assumption that $S$ holds when the output occurs. Compared with the flows-to relation $\flowsto$, the release-to relation is simpler as (1) the target (i.e. the right-hand side) is always a security level by design, and (2) we do not need to consider any changes to $\lab$ after the release.

Both rules~\ruleref{R-From-P} and~\ruleref{Bi-From} are similar to their counterparts in the flows-to relation. The key difference is in rule~\ruleref{R-From-F}: it reduces the dynamic label to $\lab$ whenever $S\vDash \absent{\neg \sevent{s_1}}$. This is sound for the release-to relation as by definition (Figure~\ref{fig:label-semantics}), how the interpretation of where information can be released to is only determined by the current event trace. But for the flows-to relation, we have to consider all extensions to the current trace, which may violate the assumption.

\begin{theorem}[Soundness of Release-To Relation]
\label{theorem:setsound_no_extend}
For all policies p, a trace $\strace$, a set of security events $S={\sevent{s_0\dots s_k}}$, if $S\vdash \lab\releaseto\level$ and $S$ holds on $\strace$, then $\auxfuncold{p}{\strace}\LEQ\level$.
\end{theorem}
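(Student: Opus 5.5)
We proceed by induction on the derivation of $S\vdash \lab\releaseto\level$, mirroring the structure of the proof of Theorem~\ref{theorem:setsound}. Throughout, we fix a trace $\strace$ on which $S$ holds and show $\auxfuncold{\lab}{\strace}\LEQ\level$. Note that, unlike Theorem~\ref{theorem:setsound}, no quantification over extensions $\strace\preceq\strace'$ is needed: the release-to relation only concerns the interpretation of $\lab$ at the current trace.

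The base and structural cases are routine. For \ruleref{Lattice}, the left side is a level $\level_0$ and $\auxfuncold{\level_0}{\strace}=\level_0\LEQ\level$ directly. For \ruleref{Trans}, both premises are about levels, so the induction hypotheses give $\level\LEQ\level'$ and $\level'\LEQ\level''$, and transitivity of $\LEQ$ in $\Lattice$ closes the case. For \ruleref{Bi-From}, the interpretation of $\sevent{s_1}?\lab\leftrightarrow\lab'$ on $\strace$ is, by Figure~\ref{fig:label-semantics}, either $\auxfuncold{\lab}{\strace^{[i+1:]}}$ or $\auxfuncold{\lab'}{\strace^{[i+1:]}}$ for a suitable index $i$. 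The premises are derived under the empty fact set, which holds vacuously on every trace, in particular on $\strace^{[i+1:]}$. Hence the induction hypothesis bounds both branches by $\level$.

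The directional cases carry the real content. For \ruleref{R-From-P}, $S\vDash\neg\sevent{s_1}$ and $S$ holding on $\strace$ guarantee an index $i=\cod{first}(\sevent{s_1},\false,\strace)\neq -1$. Thus $\auxfuncold{\sevent{s_1}?\lab\rightarrow\lab'}{\strace}=\auxfuncold{\lab'}{\strace^{[i:]}}$. Because the premise $\vdash\lab'\releaseto\level$ uses the empty fact set, the induction hypothesis applies on the suffix $\strace^{[i:]}$ and yields the bound. For \ruleref{R-From-F}, $S\vDash\absent{\neg\sevent{s_1}}$ means $\sevent{s_1}$ has never evaluated to $\false$ on $\strace$, so $\cod{first}(\sevent{s_1},\false,\strace)=-1$. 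The label therefore reduces to $\auxfuncold{\lab}{\strace}$ on the \emph{same} trace $\strace$. Since $S$ still holds on $\strace$, the induction hypothesis for the premise $S\vdash\lab\releaseto\level$ applies.

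The main obstacle is precisely this asymmetry between the rules. In \ruleref{R-From-F} we may keep $S$ because the trace is unchanged. In \ruleref{R-From-P} and \ruleref{Bi-From} we must drop $S$, since facts in $S$ need not hold on the suffix $\strace^{[i:]}$; this is the nested-condition issue discussed for the flows-to rules. To make the argument go through, we state the induction hypothesis generally: it must hold for every trace on which the fact set of the sub-derivation holds, not just for the fixed $\strace$. We must also check that ``$S$ holds on $\strace$'' makes $\cod{first}$ and $\cod{last}$ behave as claimed, in particular the ``never'' predicate $\absent{\cdot}$ and conditions that are compound rather than atomic. The remaining release-to rules, available in~\cite{full_proof}, should follow the same two patterns: reduce on the current trace while keeping $S$, or reduce on a suffix with an empty fact set.
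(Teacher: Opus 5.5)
Your proposal is correct and follows the same route as the paper. It argues by induction on the release-to derivation: \ruleref{R-From-F} is the only genuinely new case, since the label reduces to $\auxfuncold{\lab}{\strace}$ on the same trace and the induction hypothesis applies with $S$ retained, while the remaining cases mirror their flows-to counterparts from Theorem~\ref{theorem:setsound}. The paper leaves implicit your point that the induction hypothesis must hold for every trace satisfying the sub-derivation's fact set, so that it can be applied on suffixes $\strace^{[i:]}$ with the empty fact set, but its argument relies on exactly that.
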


The proof is similar to the proof of Theorem~\ref{theorem:setsound}. The full proof is included in~\cite{full_proof}.

\subsection{Typing Rules}
\label{sec:typesystem}
\begin{figure*}
{\small
\begin{mathpar}
\mprset{flushleft} 

\inferrule[Skip] 
  {  }
  { pc, \Gamma \vdash \Skip}
\and
\inferrule[Sequence]
  {  pc, \Gamma\vdash c_1 \quad pc, \Gamma \vdash c_2 }
  {pc, \Gamma\vdash c_1; c_2}
\and
\inferrule[If]
  {  \Gamma\vdash e: \pc' \quad  \pc', \Gamma\vdash c_1 \quad \pc',\Gamma \vdash c_2 \quad \vdash \pc\flowsto \pc'}
  { \pc,\Gamma\vdash \ifcmd{e}{c_1}{c_2}} 
  \and
 \inferrule[Assign]
  {\vdash pc \flowsto \Gamma(x)  \\ \Gamma \vdash e:\Gamma(x)}
  {pc, \Gamma\vdash x:=e} 
\and
\inferrule[While]
  {  \Gamma\vdash e: \pc' \quad  \pc', \Gamma\vdash c\quad \vdash \pc\flowsto \pc'}
  { \pc,\Gamma\vdash \while{e}{c}} 
\and
\inferrule[Output]
  { \Gamma\vdash e:\lab \\ \sevent{s_0...s_k}\vdash \lab \releaseto \level \\ \vdash \pc\flowsto \level }
  { \pc, \Gamma \vdash \outcmd{\level}{\expr}{\sevent{s_0...s_k}}  }
\and
\inferrule[Output-Per]
  {x\text{ immutable } \\ \Gamma(x) \text{ is persistent} \\  \vdash\pc\flowsto\level}
  { \pc, \Gamma \vdash \outcmd{\level}{x}{ \sevent{o}_\x \sevent{@\level, s_0...s_k}}  }
\and
\inferrule[Relabel]
  { \Gamma\vdash e:\lab_f\\\vdash \pc \flowsto \Gamma(x)\\ \sevent{s}_0...\sevent{s}_k\vdash \level_t\flowsto \Gamma(x)\\\\ \sevent{s}_0...\sevent{s}_k\vdash \lab_f\flowsto \level_t}
  {\pc, \Gamma\vdash x:=\relabel{e}{\lab_f}{\level_t}{\sevent{s_0...s_k}}}
\and
\inferrule[EventOn($\sevent{s}$)]
  { \vdash\pc\flowsto \bot }
  { \pc, \Gamma \vdash \cod{EventOn}(\sevent{s})  }
\and
\inferrule[EventOff($\sevent{s}$)]
  { \vdash\pc\flowsto \bot }
  { \pc, \Gamma \vdash \cod{EventOff}(\sevent{s}) }
\end{mathpar}
}
\caption{Typing Rules for Commands.}
\Description{}

\label{fig:type_enforcement}
\end{figure*}

Lastly, we present a type system that enforces the end-to-end dynamic release policy on the whole program.
The overall structure of the typing rules is similar to other information flow type systems~\cite{Sabeleld2003, Chong2008,Chong2005}, where each command is checked under two conditions $\pc$ and $\Gamma$. Here, $\pc$ is the standard program counter label used to control implicit information flows~\cite{Sabeleld2003}, and $\Gamma$ is a mapping that assigns each variable to its security label. The typing rules on expressions, in the form of $\Gamma \vdash \expr : \lab$, are completely standard, which is included in~\cite{full_proof}.

Several typing rules for commands are standard. Rule~\ruleref{$\Skip$} denotes that a nop command is always secure. Rule~\ruleref{Sequence} enforces that both commands in a sequence are typed under the same $\pc$ label. Rule~\ruleref{If} guards both branches of a conditional with a program counter $\pc'$, which must be at least as restrictive as the condition's label, in order to control implicit flows. Rule~\ruleref{While} is similar to Rule~\ruleref{If}. Rule~\ruleref{Assign} requires that the policy on $\x$ is at least as restrictive as that on $\expr$ to control explicit information flow from $\expr$ to $x$.

Other rules are novel and carefully designed to check the dynamic release policy. Rule~\ruleref{Relabel} is the first rule that utilizes annotated events $\sevent{s_0\dots s}_k$ to check information flow from $\expr$ to $x$ in a more permissive way than Rule~\ruleref{Assign}. It allows information to be ``relabeled'' when possible.

\mypara{Example 1: Bidding Game (Figure~\ref{fig:dynamicpolapp}.i)} 

\noindent We examine the relabel command on Line 9, which occurs after $\sevent{release}$ is set to $\true$. This line type-checks under rule~\ruleref{Relabel} as follows (we omit less interesting parts of the proof for simplicity):
\begin{mathpar}
\small
\inferrule[]
  {
    \Gamma \vdash \cod{bid} : \neg \sevent{release}? \High \rightarrow_t \Low\\
    \inferrule
      {
        \sevent{release} \vDash \sevent{release} \\
        \vdash \Low \flowsto \Low
      }
      {
        \sevent{release} \vdash \neg\sevent{release}? \High \rightarrow_t \Low \flowsto \Low
      }
  }
  {
    \Low, \Gamma \vdash \cod{wbid} := \relabel{\cod{bid}}{\neg \sevent{release}? \High \rightarrow_t \Low}{\Low}{\sevent{release}}
  }
\end{mathpar}

Recall that at runtime, \cod{relabel} checks if all events after $\cod{using}$ hold; it turns the command into a nop if any assumed event is not correct (Figure~\ref{fig:key_semantics}). Hence, the typing rule can safely use event $\sevent{release}$ to check if a flow from dynamic label $\neg \sevent{release}? \High \rightarrow_t \Low$ to $\Low$ is secure, via the permissive Rule~\ruleref{R-From} in Figure~\ref{fig:inference_rules}.

Rule~\ruleref{Output} is the second typing rule that utilizes annotated events $\sevent{s_0\dots s_k}$ to check if the value of $\expr$ can be released in a permissive way. 

\mypara{Example 2: Credit Card (Figure~\ref{fig:dynamicpolapp}.ii)} 

\noindent Both variables $\cod{copy}$ and $\cod{credit\_card}$ are labeled with dynamic labels $\sevent{trans}?\cod{M}\rightarrow_t\top$. The output in Line 6 can be type checked by the rule~\ruleref{Output}, which allows the value of $\cod{copy}$ to be released as $\absent{\neg\sevent{trans}}$ at the point:
\begin{mathpar}
\footnotesize
\inferrule[]
{
  \Gamma \vdash \cod{copy} : \neg\sevent{trans}? \cod{M} \rightarrow_t \top \\
  \inferrule[]
  {
    \absent{\sevent{trans}} \vDash \absent{\sevent{trans}} \\
    \vdash \cod{M} \releaseto \cod{M}
  }
  {
    \absent{\sevent{trans}} \vdash \neg\sevent{trans}? \cod{M} \rightarrow_t \top \releaseto \cod{M}
  }
}
{
  \cod{M}, \Gamma \vdash \outcmd{\cod{M}}{\cod{copy}}{\absent{\sevent{trans}}}
}

\end{mathpar}
Recall that at runtime, the output command checks if all events after $\cod{using}$ are correct; it turns the command into a nop if any assumed event is not correct (Figure~\ref{fig:key_semantics}). Hence, the typing rule can use the fact that $\sevent{trans}$ has never been $\true$ (i.e., $\absent{\sevent{trans}}$) to check if dynamic label $\neg\sevent{ trans}? \cod{M} \rightarrow_t \top$ can be released to $\cod{M}$, via Rule~\ruleref{R-From-F} in Figure~\ref{fig:inference_rules_no_extend}. 

On the other hand, the same output command at Line 9, after the transaction is completed, is insecure. Although the command can still be type checked, we note that since the transaction is completed, the assumed event $\absent{\sevent{trans}}$ is no longer correct. Hence, nothing will be released at runtime (Rule~\ruleref{OS-Output-Fail} in Figure~\ref{fig:key_semantics}).

Rule~\ruleref{Output-Per} is a more permissive version of Rule~\ruleref{Output}, in the sense that it allows the value of $x$ to be released \emph{again} when $x$ is immutable and $\Gamma(x)$ has a persistent policy. Here, the output event $\sevent{o}_\x$ assumes that $x$ has already been released before the current output command.

\mypara{Example 3: Library System (Figure~\ref{fig:dynamicpolapp}.iii)}

\noindent Before Alice returns the book, the variable \cod{notes} was released in Line 5, which can be type-checked in the same way as in Line 6 of the credit card example described above. According to the semantics of the $\cod{output}$ command, the output event $\sevent{o_{notes}}$ is added to the execution trace at that point.
After Alice returns the book on Line 6, the same value that was released on Line 5 can still be safely released again on Line 10 as the policy on \cod{notes} and \cod{book} is a persistent policy. This can be confirmed by the following proof tree:
\begin{mathpar}
\footnotesize
\inferrule[]
{
  \inferrule[]{\cod{notes} \text{ immutable } \\ \Gamma(\cod{notes}) \text{ is persistent}}{}
}
{
  \Low, \Gamma \vdash \outcmd{\Low}{\cod{notes}}{\sevent{o_{notes}@\Low}}
}

\end{mathpar}
On the other hand, if we replace the same line with $\outcmd{\Low}{\cod{book}}{\sevent{o_{book}}}$, the command is insecure as $\cod{book}$ was never released before it was returned. In this case, the insecure command is still type-checked. But since the assumed event $\sevent{o_{book}}$ is no longer correct, nothing will be released at runtime (Rule~\ruleref{OS-Output-Fail} in Figure~\ref{fig:key_semantics}).

Finally, Rules~\ruleref{eventon($\sevent{s}$)} and~\ruleref{eventoff($\sevent{s}$)} handle distinguished commands that set/unset security events. For both of them, we require that the $\pc$ label be public (i.e., at the lowest level $\Low$ in the lattice). The reason is to avoid label channels, where secret information interferes with the sensitivity level of variables~\cite{Chong2008}.

\section{Soundness Proof}
\label{sec:endtoendproof}
\begin{figure}
    \centering
    \includegraphics[width=0.5\columnwidth]{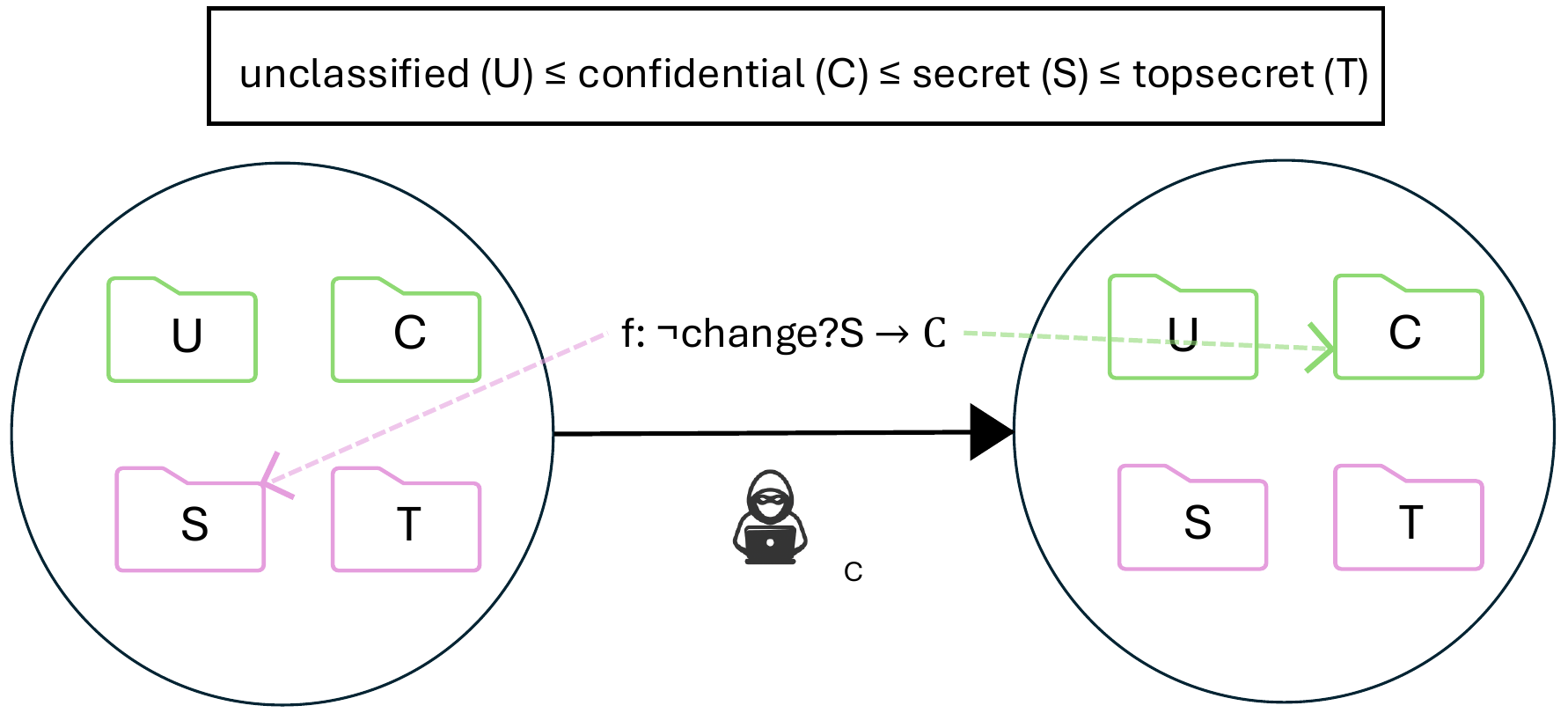}
    \caption{An example in a multi-level military system, where the variable $f$ belongs to levels before and after the security event $\sevent{change}$ is set to $\true$.}
    \Description{}
    \vspace{-2ex}
    \label{fig:theorem3_set}
\end{figure}
The hallmark of information flow security is its ability to maintain end-to-end confidentiality and integrity across all system executions. For noninterference with a static policy, various proof techniques (e.g.,~\cite{li2017arxiv,flowcaml,volpano1996}) have been developed to prove that static type systems can soundly enforce noninterference policy. 

In those standard noninterference proofs, all variables' sensitivity/integrity are assumed to be fixed throughout program execution. Hence, for any attacker at a certain security level $\fixedlevel$, we can statically partition all variables into two sets: public set $\{ \x \mid \Gamma(x)\LEQ \fixedlevel\}$ and secret set $\{ \x \mid \Gamma(x)\not\LEQ \fixedlevel\}$. At a high level, standard noninterference proofs are built on an \emph{indistinguishability relation} on memory states: two memory states $\Mem_1$ and $\Mem_2$ are said to be indistinguishable for an attacker at $\fixedlevel$ when they agree on the public set of variables:
\[\Mem_1 \approx \Mem_2 \triangleq \forall x.~\Gamma(x)\LEQ \fixedlevel \Rightarrow \Mem_1(\x)=\Mem_2(\x) \]
Then, a standard proof proceeds by showing that the indistinguishability relation is an invariance on any two program executions with indistinguishable initial memory states.

To see the invariance more clearly, we consider a military-grade security system that includes four security levels: unclassified (U), confidential (C), secret (S), and top secret (T) in Figure~\ref{fig:theorem3_set}. Here, an attacker with a confidential security clearance can access only unclassified and confidential files. Hence, two memory states are considered indistinguishable if they agree on $\{x \mid \Gamma(x)=\text{U} \lor \Gamma(x)=\text{C}\}$. Note that for a static policy, the set remains unchanged throughout program execution.

However, in the presence of dynamic labels, such a proof strategy no longer applies, as the set of secret variables can change throughout program execution. Suppose that there is a variable $f$ associated with dynamic release label $\neg\sevent{ change}?\text{S} \rightarrow_t \text{C}$.  
A naive extension of the indistinguishability relation to the dynamic setting is as follows:
\[\Mem_1 \approx_\strace \Mem_2 \triangleq \forall \x.~\auxfuncold{\Gamma(\x)}{\strace}\LEQ \text{C} \Rightarrow \Mem_1(\x)=\Mem_2(\x) \]
However, maintaining such an indistinguishability relation is impossible. Let $\strace$ and $\strace'$ be traces right before and after $\sevent{change}$ is set to $\true$ in Figure~\ref{fig:theorem3_set}. Given $\Mem_1 \approx_{\strace} \Mem_2$, $\Mem_1(f)$ and $\Mem_2(f)$ might hold different values since $\auxfuncold{\Gamma(f)}{\strace}\not\LEQ \text{C}$. Hence, we are unable to prove $\Mem_1 \approx_{\strace'} \Mem_2$ which requires that $\Mem_1(f)=\Mem_2(f)$ as $\auxfuncold{\Gamma(f)}{\strace'}\LEQ \text{C}$.

To address the challenge, in addition to the attacker level $\fixedlevel$, we also introduce a distinguished, arbitrary dynamic release label $\fixedlab$ and prove that the desired property of dynamic release (Definition~\ref{def:newpolicy}) holds against any attacker's level $\fixedlevel$ and source's label $\fixedlab \in  \mathbb{B}$. Intuitively, $\fixedlab$ is a distinguished \emph{dynamic} label denoting the source label that is considered ``sensitive'' per security reasoning. With a fixed attacker level $\fixedlevel$ and a fixed source label $\fixedlab$, indistinguishability in memory is defined as
\begin{definition}
\label{def:indistinguishibility}
Two memories $\Mem_1$ and $\Mem_2$ are indistinguishable under an event trace $\strace$ if they agree on all variables such that $\fixedlab \not\LEQ_\strace \Gamma(\x)$:
\[
  {\Mem_1 \approx_\strace \Mem_2} \iff {\forall \x.~\fixedlab \not\LEQ_\strace \Gamma(\x) \Rightarrow \Mem_1(\x)= \Mem_2(\x) }
\]
\end{definition}
Recall that by Definition~\ref{def:partial_dynamic}, $\lab \sqsubseteq_\sigma \lab'$ denotes that $\auxfuncold{\lab}{\strace'} \LEQ  \auxfuncold{\lab'}{\strace'}$ for any extension $\strace'$ of $\strace$.

A key insight here is that we can maintain the invariance since the size of $\{\x \mid \fixedlab \not\LEQ_\strace \Gamma(\x) \}$ can never increase through program execution due to the following fact.
\newtcbtheorem[use counter=lemma]{mylemma}{Lemma}%
  {mylemmaobox}{lemma}

\begin{lemma}[Trace Preservation]
$$ \forall \lab,\lab',\strace.~\lab \LEQ_\strace\lab' \Rightarrow (\forall \strace'.~\strace\preceq\strace'\Rightarrow  \lab\LEQ_{\strace'}\lab')$$
\label{lemma:trace_preservation}
\vspace{-2ex}
\end{lemma}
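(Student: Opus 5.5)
This lemma follows directly from Definition~\ref{def:partial_dynamic} together with the transitivity of the prefix ordering $\preceq$ on security event traces. I would not induct on labels or on the static rules. The only semantic ingredient is the definition of $\LEQ_\strace$ as a universally quantified statement over all extensions of $\strace$.

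Fix $\lab$, $\lab'$, and $\strace$, and assume $\lab \LEQ_\strace \lab'$. By Definition~\ref{def:partial_dynamic}, this assumption unfolds to
\[
\forall \strace''.~\strace\preceq\strace'' \Rightarrow \auxfuncold{\lab}{\strace''}\LEQ\auxfuncold{\lab'}{\strace''}.
\]
Now take an arbitrary $\strace'$ with $\strace\preceq\strace'$. The goal $\lab\LEQ_{\strace'}\lab'$ unfolds, again by Definition~\ref{def:partial_dynamic}, to showing that $\auxfuncold{\lab}{\strace''}\LEQ\auxfuncold{\lab'}{\strace''}$ for every $\strace''$ with $\strace'\preceq\strace''$.

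Fix such a $\strace''$. Since $\strace\preceq\strace'$ and $\strace'\preceq\strace''$, transitivity of the prefix relation gives $\strace\preceq\strace''$. Instantiating the hypothesis at $\strace''$ yields $\auxfuncold{\lab}{\strace''}\LEQ\auxfuncold{\lab'}{\strace''}$, which is exactly what is needed.

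The only step that needs any care is the transitivity of $\preceq$. In this setting $\preceq$ is the prefix order on finite sequences of security events: $\strace\preceq\strace'$ iff $\strace'=\strace\cdot\strace_1$ for some $\strace_1$. Transitivity then follows by concatenating the two witnessing suffixes, using associativity of sequence concatenation.

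I expect no real obstacle. The statement is essentially the observation that the set of extensions of $\strace'$ is contained in the set of extensions of $\strace$. The label semantics of Figure~\ref{fig:label-semantics}, including nested conditions, never needs to be opened, because $\auxfuncold{\cdot}{\cdot}$ is only ever evaluated at the same trace $\strace''$ on both sides.
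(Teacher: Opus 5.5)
Your proof is correct and matches the paper's: both unfold Definition~\ref{def:partial_dynamic} and observe that every extension of $\strace'$ is also an extension of $\strace$, so the hypothesis applies directly at $\strace''$. Your explicit justification of transitivity of $\preceq$ is a small detail the paper leaves implicit.
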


The second challenge is that given an attacker's level $\fixedlevel$ and the source label $\fixedlab$, information leakage is allowed under some circumstances. Recall that under Definition~\ref{def:Know_gain_last}, the consistency relation on traces ignores all outputs produced when $\auxfuncold{\fixedlab}{\strace}\LEQ \fixedlevel$ (i.e., when information with $\fixedlab$ can be released to the attacker at $\fixedlevel$). Moreover, when $\fixedlab$ is a persistent policy, dynamic release allows information leaked in the past to be revealed again (Definition~\ref{def:newpolicy}). Accordingly, in a few lemmas and theorems in the soundness proof, we added extra conditions under which intentional leakage is allowed. For persistent policy, note that the type system conservatively allows leakage when the value of $\x$ has been revealed previously (by checking $\sevent{o}_\x$ under \cod{using}) and that $\x$ is immutable (Rule~\ruleref{Output-Per} in Figure~\ref{fig:type_enforcement}). Hence, the relaxing condition is that the command $c$ being executed can be type-checked by Rule~\ruleref{Output-Per}.

The formal soundness proof follows a previous small-step noninterference proof~\cite{li2017arxiv}, with notable changes as stated above. In the concrete proof, we introduce an augmented semantics where the ``tainted'' values are wrapped in brackets, written as $[n]$ instead of $n$. Moreover, instructions executed under “tainted” branch conditions (i.e., branch conditions that evaluate to bracketed values) are also wrapped in brackets, written as $[c]$ instead of $c$.
A memory is said to be ``well-formed'' if all bracketed values are only stored in variables that are considered sensitive to an attacker at level $\fixedlevel$.

The full soundness proof, including definition of the augmented language, bracketed values and extended semantics, is provided in~\cite{full_proof}. Next, we highlight important results and a proof sketch. 

We consider any attacker's level $\fixedlevel$ and source label $\fixedlab$ in the proof.
The first important lemma states that indistinguishability in memory (Definition~\ref{def:indistinguishibility}) is preserved in each step, as long as starting memories $\Mem_1$ and $\Mem_2$ are indistinguishable and each is well-formed ($\wellformmem{\strace}{\Mem_1}$ and $\wellformmem{\strace}{\Mem_2}$), $\command_1$ and $\command_2$ both type-check and are equivalent ($\command_1\approx \command_2$), and the relaxing conditions are \emph{not} met (i.e., $\auxfuncold{B}{\strace}\not\LEQ\fixedlevel$ and both $\command_1$ and $\command_2$ type-check without using the relax rule~\ruleref{Output-Per}). Note that the Lemma also allows $\command_2$ to diverge, making the dynamic release assurance termination-insensitive~\cite{askarov2007, askarov2008}.

\begin{lemma} [Unwinding]
\label{lemma:unwinding_dynamicrelease}
\small
\vspace{-1ex}
\begin{multline*}
\forall \Mem_1, \Mem_2, \command_1, \command_2, \tau_1, \tau_2, \strace, x, \pc, \Gamma.~\wellformmem{\strace}{\Mem_1} \land \wellformmem{\strace}{\Mem_2} \land \Mem_1\approx_\strace \Mem_2 \land \\
 (\command_1\neq \outcmd{\level'}{\x}{\sevent{o}_\x, \sevent{s_0\dots s_k}})  \land
 \auxfuncold{\fixedlab}{\strace}\not\LEQ \fixedlevel \land
 \command_1\approx \command_2 \land 
\configThree{\Mem_1}{\command_1}{\strace}\xrightarrow{\tau_1}\configThree{\Mem_1'}{\command_1'}{\strace'}\Rightarrow \\
(\exists \Mem_2',\command_2',\tau_2. \configThree{\Mem_2} {\command_2}{\strace} \xrightarrow{\tau_2}^{*}\configThree{\Mem_2'}{\command_2'}{\strace'}\land \Mem_1'\approx_{\strace'} \Mem_2'\land \command_1'\approx \command_2'\land \proj{\trace_1}_\fixedlevel=\proj{\trace_2}_\fixedlevel) 
\lor(\exists \command.~\command_2=[\command] \text{ and } \command \text{ diverges})
\end{multline*}
\end{lemma}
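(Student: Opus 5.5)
We prove the Unwinding lemma by induction on the derivation of the step $\configThree{\Mem_1}{\command_1}{\strace}\xrightarrow{\tau_1}\configThree{\Mem_1'}{\command_1'}{\strace'}$, with a case analysis on the structure of $\command_1$, following the bracketed small-step proof of~\cite{li2017arxiv}. The relation $\command_1\approx\command_2$ gives two shapes. In the first, $\command_1$ and $\command_2$ are syntactically identical up to bracketed subterms and are not themselves bracketed. In the second, both are bracketed, i.e.\ they run under a tainted guard.

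\textbf{Unbracketed case.} We take $\command_2$ to perform the matching single step. We first need an expression lemma: if $\Gamma\vdash\expr:\lab$, $\Mem_1\approx_\strace\Mem_2$, and both memories are well-formed, then $\expr$ evaluates to the same unbracketed value in both memories whenever $\fixedlab\not\LEQ_\strace\lab$. Otherwise it evaluates to a bracketed value.

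For \ruleref{Assign}, we have $\Gamma\vdash\expr:\Gamma(\x)$. If $\fixedlab\not\LEQ_\strace\Gamma(\x)$, the values agree; if not, $\x$ is outside the indistinguishability set. The event trace is unchanged, so $\strace'=\strace$ and $\approx_{\strace'}$ is preserved.

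For \ruleref{Relabel}, the \cod{using} events evaluate identically on both sides because the trace $\strace$ is shared, so either both runs succeed or both are nops. In the success case, $\sevent{s_0\dots s_k}$ hold on $\strace$. Theorem~\ref{theorem:setsound} applied to the premises then gives $\lab_f\LEQ_\strace\level_t\LEQ_\strace\Gamma(\x)$. If $\fixedlab\not\LEQ_\strace\Gamma(\x)$, transitivity gives $\fixedlab\not\LEQ_\strace\lab_f$, so the values agree.

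For \cod{eventon}/\cod{eventoff}, the memories are unchanged and $\strace'$ extends $\strace$. By Lemma~\ref{lemma:trace_preservation}, any $\x$ with $\fixedlab\not\LEQ_{\strace'}\Gamma(\x)$ also has $\fixedlab\not\LEQ_\strace\Gamma(\x)$. The indistinguishability set only shrinks, so $\Mem_1\approx_{\strace'}\Mem_2$.

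For \ruleref{Output}, the premise $\sevent{s_0\dots s_k}\vdash\lab\releaseto\level$ together with Theorem~\ref{theorem:setsound_no_extend} gives $\auxfuncold{\lab}{\strace}\LEQ\level$. If $\level\LEQ\fixedlevel$ and $\fixedlab\LEQ_\strace\lab$, we would get $\auxfuncold{\fixedlab}{\strace}\LEQ\fixedlevel$, contradicting the hypothesis. Hence $\fixedlab\not\LEQ_\strace\lab$, the output values agree, and the two projections onto $\fixedlevel$ coincide. The pushed event $\sevent{o}_\x\sevent{@\level}$ is identical on both sides, so the traces stay in lockstep.

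The \ruleref{Output-Per} case is excluded by hypothesis. For \ruleref{If} and \ruleref{While} with an untainted guard, both runs take the same branch. With a tainted guard, both become bracketed commands, which moves us to the second shape.

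\textbf{Bracketed case (main obstacle).} Here $\command_1=[c]$, and $\command_2=[c']$ must be driven, by $\xrightarrow{}^{*}$, until it also terminates its bracketed region. If it never does, we take the divergence disjunct. The obstacle is showing that bracketed execution leaves $\strace$ unchanged and produces no $\fixedlevel$-visible output. This is where the typing premises do the work. The $\pc'$ of a tainted branch satisfies $\fixedlab\LEQ_\strace\pc'$, since $\pc'$ is bounded by the guard's label. The premise $\vdash\pc'\flowsto\bot$ of the event rules is then impossible unless $\fixedlab$ is trivially releasable, which the hypothesis $\auxfuncold{\fixedlab}{\strace}\not\LEQ\fixedlevel$ excludes. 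So there are no event commands inside, and both runs finish with the same $\strace'=\strace$. Likewise, $\vdash\pc'\flowsto\level$ in the output rules forces $\level\not\LEQ\fixedlevel$, so there is no visible output. Every assignment inside writes a variable with $\pc'\flowsto\Gamma(\x)$, hence $\fixedlab\LEQ_\strace\Gamma(\x)$, so such variables lie outside the indistinguishability set and well-formedness is preserved. Once both sides leave their brackets, they resume at equal continuations and $\command_1'\approx\command_2'$. Throughout, soundness of the ``$\vdash$ with empty $S$'' premises is obtained from Theorem~\ref{theorem:setsound} with $S=\emptyset$.
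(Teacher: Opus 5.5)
Your overall route is the paper's. Both proofs use rule induction on the step and settle the Output case via Theorem~\ref{theorem:setsound_no_extend} and a contradiction with $\auxfuncold{\fixedlab}{\strace}\not\LEQ\fixedlevel$. Both use Trace Preservation for \cod{eventon}/\cod{eventoff}, and handle brackets with a High-Step argument plus the divergence disjunct. The gap is in the bracketed case. The lemma concerns a \emph{single} step of $\command_1$, and a step of $\command_1=[c]$ normally stays inside the bracket: $\configThree{\Mem_1}{[c]}{\strace}\to\configThree{\Mem_1'}{[c'']}{\strace'}$. If you then drive $\command_2=[c']$ until it leaves its bracket, you reach $\Skip$. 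But $[c'']\not\approx\Skip$, since no rule of $\approx$ relates a bracketed command to an unbracketed one, so $\command_1'\approx\command_2'$ fails. Your closing step, ``once both sides leave their brackets'', assumes $\command_1$ exits in one step, which holds only when $c=\Skip$.

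The paper splits this into two cases. For $[c]\to[c'']$, $\command_2$ takes \emph{zero} steps. Lemma~\ref{lemma:high_main} (High Step) then gives $\Mem_1'\approx_\strace\Mem_1\approx_\strace\Mem_2$, $\strace'=\strace$ and an empty $\fixedlevel$-projection, and $[c'']\approx[c']$ holds trivially. Only for $[\Skip]\to\Skip$ is $\command_2$ run to the end of its bracket, iterating High Step, or else the divergence disjunct is used. Your reasoning (no events, no visible output, only high writes) is exactly the content of High Step, so the repair is this case split. Relatedly, inside the lemma you only know $\command_1=[c]$. The fact that $c$ is checked under a sufficiently high $\pc'$ must therefore come from the typing invariant for bracketed commands, namely the paper's \ruleref{AugComm} premise $\fixedlab,\Gamma\vdash c$. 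It cannot come from the If rule at the branching point.

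A smaller point: the second half of your expression lemma is false. When $\fixedlab\LEQ_\strace\lab$, $\expr$ need not evaluate to a bracketed value. For example, a constant $n$ typed at a high label via \ruleref{SubType} evaluates to the unbracketed $n$. You rely on that half in the If/While case: a bracketed guard under $\Mem_1$ must be bracketed under $\Mem_2$, and an unbracketed one must have the same value. The right tool there is the typing-free Lemma~\ref{lemma:eppe}. If $\Mem_1\approx_\strace\Mem_2$ and $\expr$ evaluates to $v_1$ under $\Mem_1$, then it evaluates to some $v_2\approx v_1$ under $\Mem_2$. That is, either both values are bracketed, or both are the same unbracketed value. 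The first half of your lemma is correct, being Lemma~\ref{lem:highexpr} combined with Lemma~\ref{lemma:eppe}. It suffices for your Assign, Relabel and Output cases.
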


\begin{proofsketch}
We sketch the proof of the most interesting case with Rule~\ruleref{Output} here.
\begin{itemize}
    \item \ruleref{Output}:
    Based on the assumption that  $\command_1\neq \outcmd{\level'}{\x}{\sevent{o}_\x, \sevent{s_0\dots s_k}}$, we know that only the typing rule~\ruleref{Output} is applicable in this case. 
  As $\command_1\approx \command_2$, it is trivial that $\Skip\approx\Skip$ and $\Mem_1'=\Mem_1\approx_\strace \Mem_2=\Mem_2'$. To prove $\proj{\trace_1}_\fixedlevel=\proj{\trace_2}_\fixedlevel$, there are two cases.
  \begin{itemize}
      \item  $\level'\LEQ\fixedlevel$: We show that $\configs{\Mem,e}\Downarrow v$ for some $n$ without brackets in this case. We proceed by assuming that $\configs{\Mem,\expr}\Downarrow[n]$ and show a contradiction.
      
      From the typing rule, we know that $\Gamma\vdash \expr:\lab$ and $\sevent{s_0}\dots\sevent{s_k}\vdash\lab\releaseto\level'$. From the assumption $\configs{\Mem,\expr}\Downarrow[n]$ and Lemma 4 
      in~\cite{full_proof}, we have $\fixedlab\LEQ_\strace \lab$, which implies $\auxfuncold{\fixedlab}{\strace}\LEQ\auxfuncold{\lab}{\strace}$ by definition. By Theorem~\ref{theorem:setsound_no_extend} and $\sevent{s_0}\dots\sevent{s_k}\vdash\lab\releaseto\level'$, we know $\auxfuncold{\lab}{\strace}\LEQ\level'$ whenever the output semantics is not a nop. Hence, we can derive $\auxfuncold{\fixedlab}{\strace}\LEQ L'\LEQ \fixedlevel$, which contradicts the assumption that $\auxfuncold{\fixedlab}{\strace}\not\LEQ \fixedlevel$ in the theorem statement.

    Hence, it must be true that $\configs{\Mem_1,\expr}\Downarrow n$ for some $n$. By Lemma 6
    in~\cite{full_proof}, $\configs{\Mem_2,\expr}\Downarrow n$ too. Hence, both output traces under $\Mem_1$ and $\Mem_2$ produce the same output event $\configs{\level',n,\strace}$. 
    
      \item  $\level'\not\LEQ\fixedlevel$: The projection traces on $\fixedlevel$ are empty ($\proj{\traceout{\tau_1}}_{\fixedlevel}=\emptyset=\proj{\traceout{\tau_2}}_{\fixedlevel}$) as $\level'\not\LEQ\fixedlevel$.
  \end{itemize}
\end{itemize}
\end{proofsketch}

\begin{figure}
    \centering
    \begin{subfigure}[b]{0.6\textwidth}
        \centering
        \includegraphics[width=\textwidth]{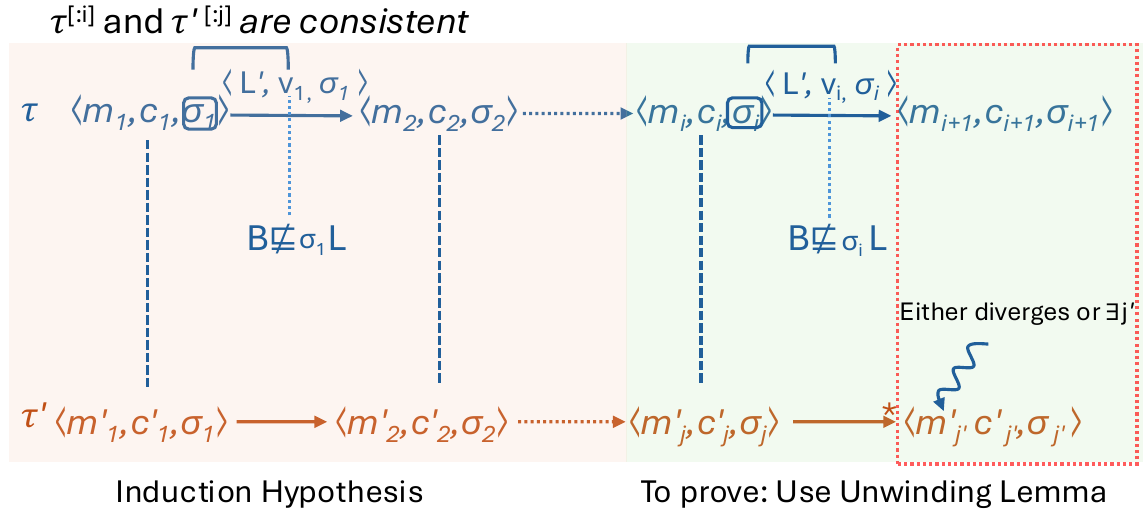}
        \caption{The Figure illustrates the proof concept on transient policy. 
        The orange part is the induction hypothesis and the green part is the part 
        to be proved with the Unwinding Lemma.}
        \label{fig:theorem3_figure_main}
    \end{subfigure}
    \hfill
    \begin{subfigure}[b]{0.6\textwidth}
        \centering
        \includegraphics[width=\textwidth]{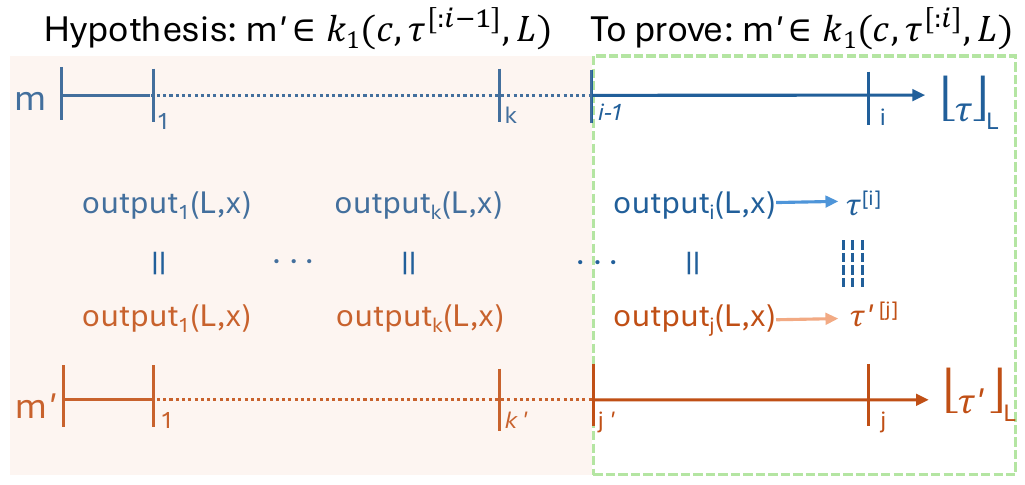}
        \caption{This Figure shows that $\Mem'\in k_1(c,\trace^{[:i],\level})$, 
        which indicates the variable $\x$ has been released. 
        This difference causes the proof not to be concluded with Unwinding Lemma. 
        We only show part of the output syntax due to space.}
        \label{fig:theorem3_figure_per_main}
    \end{subfigure}
    \caption{Illustrations on the proof of Theorem~\ref{theorem:typesound}. 
    Conceptual proof on (a) transient policy, 
    (b) persistent policy.}
    \label{fig:theorem3_combined}
\end{figure}

Based on the unwinding lemma, we next show the main soundness result.
\begin{theorem}[End-to-end Soundness]
\label{theorem:typesound}
For all command $\command$ and typing environment $\Gamma$, if $\pc, \Gamma \vdash \command$ for some program counter $\pc$, then the program $c$ satisfies dynamic release policy, which means
{\footnotesize
\begin{multline*}
\forall \Mem, \strace, \trace.~\configThree{\Mem}{\command}{\strace} \termout \trace  
\Rightarrow 
\forall 1\leq i\leq \len{\trace }.\quad
k_2(c,\trace^{[:i]}, \fixedlevel, \fixedlab) 
\supseteq 
\begin{cases} 
\closure{\Mem}_{\neq \fixedlab}, & \text{transient} \\
\closure{\Mem}_{\neq \fixedlab} \cap k_1(\command, \trace^{[:{i-1}]},\fixedlevel), & 
\text{persistent}
\end{cases}
\end{multline*}
}
\end{theorem}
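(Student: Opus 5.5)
We fix an arbitrary attacker level $\fixedlevel$ and source label $\fixedlab$, a typed program $\command$, an initial memory $\Mem$, an execution $\configThree{\Mem}{\command}{\strace}\termout\trace$, and an index $i$. We take an arbitrary $\Mem_2$ in the right-hand allowance set and show $\Mem_2\in k_2(\command,\trace^{[:i]},\fixedlevel,\fixedlab)$. By Definition~\ref{def:Know_gain_last}, it suffices to build a run from $\Mem_2$ with a prefix $\trace_2^{[:j]}$ such that $\trace_2^{[:j]}\equiv_{\fixedlab,\fixedlevel}\trace^{[:i]}$ and $\Mem_2\in k_1(\command,\trace_2^{[:j]},\fixedlevel)$. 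The natural choice is to take $\trace_2^{[:j]}$ to be the run of $\command$ from $\Mem_2$ itself, cut right after the output matching the $i$-th one. Then $k_1$-membership is immediate by reflexivity of $\sim_\fixedlevel$, and what remains is consistency.

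If the $i$-th output is not visible at $\fixedlevel$, or $\auxfuncold{\fixedlab}{\strace_{i}}\LEQ\fixedlevel$ at that point, the secret projections need only agree on earlier effective outputs. So the core case is an effective output.

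To get consistency, I would build the bracketed (augmented) execution from the paper's full development. Initial memories are related by $\Mem\approx_{\strace_0}\Mem_2$, since $\Mem_2\in\closure{\Mem}_{\neq\fixedlab}$ and every variable $\x$ with $\fixedlab\not\LEQ_{\strace_0}\Gamma(\x)$ has $\Gamma(\x)\neq\fixedlab$. Both memories are well-formed once the variables labeled $\fixedlab$ are bracketed.

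We then induct on the number of steps, applying the Unwinding Lemma~\ref{lemma:unwinding_dynamicrelease} step by step. Lemma~\ref{lemma:trace_preservation} guarantees that $\approx_\strace$ is preserved as $\strace$ grows, because the set $\{\x\mid\fixedlab\not\LEQ_\strace\Gamma(\x)\}$ only shrinks. The unwinding hypotheses hold at effective outputs, where $\auxfuncold{\fixedlab}{\strace}\not\LEQ\fixedlevel$. At non-effective points they need not hold, but such outputs are discarded by the secret projection, and for the remaining non-output steps the memory relation is still maintained. I would strengthen the induction invariant to record this: the two runs advance in lockstep, with identical security event traces, because eventon/eventoff are typed at $\pc\flowsto\bot$ and hence are never bracketed. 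The secret projections of the two runs then agree step by step, which gives equal length and an equal last element. The divergence disjunct of the lemma is handled by termination-insensitivity: if the $\Mem_2$ run diverges inside a bracketed branch before reaching the $i$-th effective output, a low-visible output under a bracketed $\pc$ is impossible, since \ruleref{Output} requires $\pc\flowsto\level'$.

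The persistent case needs more, and I expect it to be the main obstacle. Here the step may be an output typed by \ruleref{Output-Per}, which the Unwinding Lemma excludes. In that case we additionally use $\Mem_2\in k_1(\command,\trace^{[:i-1]},\fixedlevel)$. This gives a run from $\Mem_2$ whose $\fixedlevel$-projection extends that of $\trace^{[:i-1]}$, so every earlier low-visible output coincides. The \ruleref{Output-Per} output re-releases an immutable $\x$ whose release $\sevent{o}_\x\sevent{@\level'}$ occurred earlier in the trace. Because $\x$ is immutable, its value at that earlier output equals its current value. That earlier output is already matched in $\Mem_2$'s run, so the current outputs coincide as well. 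The delicate parts are ensuring that the run from $\Mem_2$ actually reaches the same \ruleref{Output-Per} command with the same event trace, and carrying the induction through such non-unwinding steps by re-establishing $\approx_{\strace'}$ afterwards. Since the output does not change memory, the relation is trivially preserved. I would therefore prove an auxiliary lemma: two lockstep runs that agree on all prior low outputs also agree on any \ruleref{Output-Per} output of an immutable variable.
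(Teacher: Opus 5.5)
Your proposal follows the paper's proof essentially step for step. You reduce membership in $k_2$ to exhibiting a prefix of $\Mem_2$'s own run consistent with $\trace^{[:i]}$ (so the $k_1$ part is by reflexivity), start from $\Mem\approx\Mem_2$ in the bracketed semantics, and push the relation forward with the Unwinding Lemma and Trace Preservation while discarding non-effective outputs. Like the paper, you appeal to termination-insensitivity for the divergence disjunct, and you treat the \ruleref{Output-Per} step of the persistent case separately via $\Mem_2\in k_1(\command,\trace^{[:i-1]},\fixedlevel)$ and the immutability of $\x$.

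Keeping the two runs related through phases where $\auxfuncold{\fixedlab}{\strace}\LEQ\fixedlevel$ is a sensible tightening, not a different route: there the Unwinding Lemma's hypothesis fails, and the paper simply sets $j'=j$ without advancing the second run. One caveat: your reason for identical event traces (eventon/eventoff are never bracketed) is incomplete. \ruleref{OS-Output-Succ} also appends $\sevent{o}_\x\sevent{@\level}$, and it can fire for a high output under a bracketed $\pc$, so the two traces can differ. The paper's High Step lemma glosses over this in the same way.
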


\begin{proofsketch}
We conduct a proof of induction on the trace length $i$. Note that unlike a traditional proof for static noninterference, we need to show that under two indistinguishable initial memories, they both produce \emph{consistent}, rather than indistinguishable traces. 

\mypara{Transient} For a transient policy $\fixedlab$, proving $k_2(\command,\trace^{[:i]}, \fixedlevel, \fixedlab) \supseteq \closure{\Mem}_{\neq \fixedlab}$
boils down, after expanding the definitions of $k_1$ and $k_2$, to show that for any $\Mem'$ indistinguishable from $\Mem$ and $\configThree{\Mem'}{c}{\strace} \termout \trace'$, we can find some index $j$ on $\tau'$ such that $\trace'^{[:j]}$ and $\trace^{[:i]}$ are consistent (i.e., $\trace'^{[:j]} \equiv_{\fixedlab, \fixedlevel}  \trace^{[:i]}$). The induction step is illustrated in Figure~\ref{fig:theorem3_figure_main}, where the green part illustrates the new index $j'$ that we need to find in order to match index $i+1$ on trace $\tau$. The orange part illustrates the induction hypothesis: traces $\trace^{[:i]}$ and $\trace'^{[:j]}$ are consistent. 

There are two cases to consider at each induction step: $\auxfuncold{\fixedlab}{\strace_i}\not\LEQ \fixedlevel$ and $\auxfuncold{\fixedlab}{\strace_i}\LEQ \fixedlevel$. 
\begin{itemize}
    \item When $\auxfuncold{\fixedlab}{\strace_i}\not\LEQ \fixedlevel$, we can directly apply Lemma~\ref{lemma:unwinding_dynamicrelease} to show that $\proj{\trace'^{[j:j']}}_{B,\level}=\proj{\trace^{[i:i+1]}}_{B,\level}$. In addition to the induction hypothesis, we know that $\trace'^{[:j']} \equiv_{B,\level} \trace^{[:i+1]}$ stands.

    \item When $\auxfuncold{\fixedlab}{\strace_i}\LEQ \fixedlevel$, the event $v$ produced by $\configThree{\Mem_i}{\command_i}{\strace_i}$ (in the green part of Figure~\ref{fig:theorem3_figure_main}) must have an empty projection:
        \[\proj{\configThree{
		\Mem_i}{\command_i}{\strace_i}\xrightarrow{\configThree{\level'}{v}{\strace_i}}\configThree{
		\Mem_{i+1}}{\command_{i+1}}{\strace_{i+1}}}_{\fixedlab, \fixedlevel}~\triangleq 
        \emptyset\]
Hence, we simply set $j'=j$ and have $\trace'^{[:j']} \equiv_{B,\level} \trace^{[:i+1]}$ as $\trace'^{[:j']} =\trace'^{[:j]}$, $\trace^{[:i]}=\trace'^{[:i+1]}$ and, from the induction hypothesis, $\trace'^{[:j]} \equiv_{B,\level} \trace^{[:i]}$.
\end{itemize}

\mypara{Persistent} For persistent policy, the proof is mostly identical to the transient case, since, by definition, its security condition is a relaxation of its corresponding transient policy.
The exception is for 
the case of $$(\command_i=\outcmd{\level'}{\x}{\sevent{o}_\x,\sevent{s_0\dots s_k}})\land (\x \text{ is immutable})$$
because the unwinding lemma (Lemma~\ref{lemma:unwinding_dynamicrelease}) is not applicable. We need a new proof strategy for this case, which is illustrated by the green dotted box in Figure~\ref{fig:theorem3_figure_per_main}. 
To proceed, we utilize the fact that $\Mem'\in k_1(\command,\tau^{[:i-1]}, \fixedlevel)$ in the definition of persistent policy. By definition of $k_1$, we further know that $\proj{\trace^{[:i-1]}}_{\fixedlevel} \preceq  
\proj{\trace'^{[:j']}}_{\fixedlevel}$ where $j'$ is the index that corresponds to $i-1$ from the induction hypothesis, as illustrated in the orange part in Figure~\ref{fig:theorem3_figure_per_main}. Moreover, from the induction hypothesis that $\trace'^{[:j']} \equiv_{\fixedlab,\fixedlevel}  \trace^{[:i-1]}$, we know that all past outputs of $\x$ must produce the same value, say $n$. Note that since $\x$ is immutable, both $\trace^{[i+1]}$ and $\trace'^{[j'+1]}$ will output the same value $n$. Hence, we let $j=j'+1$ and have $\trace'^{[:j]} \equiv_{\fixedlab,\fixedlevel}  \trace^{[:i+1]}$. 
\end{proofsketch}

\section{Implementation}
\label{sec:case_studies}

Prior work such as Jif~\cite{myers2001} provides fine-grained information flow control for a Java-like language, but its implementation requires modifications to the programming language and relies on nonstandard compilation tools. To make our language and type system more accessible to end users, we follow the approach of recent systems such as Cocoon~\cite{lamba2024}, Carapace~\cite{carapace}, and Filament~\cite{filament}, which are implemented as Rust libraries without requiring language and compiler modifications. 

In particular, our prototype implementation is built on top of Filament~\cite{filament}, which provides Jif-like fine-grained information flow control for Rust. Consequently, our implementation is fully compatible with the standard Rust language and compiler.

\subsection{Dynamic Release on Rust} 

The recent work Filament~\cite{filament} presents a static information flow library in Rust that enforces noninterference with no runtime checks. To attach a value with a label, a Filament program wraps the value in a generic struct $\text{Label}\langle T, \level\rangle$, where $T$ is the data type and $\level$ is a (static) security level drawn from a predefined lattice. Filament uses traits to define the ordering of security levels (i.e., the lattice structure) based on these labels. However, Filament only supports static security levels.

To support dynamic release labels, we expand the generic struct to support dynamic labels and leverage Rust's phantom types for compile-time checking. More specifically, we define our label and value wrapper as a generic struct $\text{DRLabel}\langle T, S, LFrom, LTo \rangle$, representing data of type $T$ associated with a security label $S?LFrom\rightarrow LTo$ using the syntax in Figure~\ref{fig:while-syntax}, where $LFrom$ and $LTo$ are security levels\footnote{\revise{Our current implementation does not support first-order functions; we leave this feature for future work (Section~\ref{sec:language_features}).}}. Moreover, we maintain both a Boolean variable $\sevent{s}$ and a unique type $S$ to represent a security event. The difference is that $\sevent{s}$ can be set by library functions $\cod{eventon}$ and $\cod{eventoff}$, while $S$ is a static type indicating $\sevent{s}$ holds. The event trace $\strace$ is implemented as a vector that tracks eventon/eventoff operations at runtime. The library function $\cod{relabel}$ only statically checks information flow constraints with the assumption that $S$ holds when $\sevent{s}$ is set; otherwise, it does nothing (see~\ruleref{OS-Relabel-Succ} and~\ruleref{OS-Relabel-Fail} in Figure~\ref{fig:key_semantics}).
Similarly to prior works (\cite{lamba2024,carapace,filament}), our implementations of flows-to and release-to relations use Rust traits for compile-time checking as shown in Figure~\ref{fig:flows_to_imp}, covering the cases from the typing rules in Figures~\ref{fig:inference_rules} and~\ref{fig:inference_rules_no_extend}. The trait implementation models the security lattice, and by encoding these rules directly in the security functions (\cod{relabel} and \cod{Output}), we are able to implement and type-check all secure examples throughout the paper with our prototype.

\begin{figure}
\centering
\begin{subfigure}[t]{0.45\textwidth}
\centering
\begin{lstlisting}[style=ruststyle]
impl<T, S, P, Pp, L> FlowsTo<DRLabel<T, S, L, L>> for (DRLabel<T, S, P, Pp>, DynamicLabel<P, Pp>)
    where Pp: LEQ<L>, SEvent<S>: Holds, {}
\end{lstlisting}
\vspace{-2ex}
\caption{Implementing rule~\ruleref{R-From} with Rust's trait.}
\label{fig:flows_to_imp_R_From}
\end{subfigure}
\hfill
\begin{subfigure}[t]{0.45\textwidth}
\centering
\begin{lstlisting}[style=ruststyle]
impl<L, S, Q, Qq> FlowsTo<(DRLabel<(), S, Q, Qq>, DynamicLabel<Q, Qq>)> for DRLabel<(), S, L, L>
    where L: LEQ<Qq>, SEvent<S>: Holds {}
\end{lstlisting}
\vspace{-2ex}
\caption{Implementing rule~\ruleref{R-To} with Rust's trait.}
\Description{}
\vspace{-2ex}
\label{fig:flows_to_imp_R_To}
\end{subfigure}

\caption{Rule~\ruleref{R-From} and~\ruleref{R-To} from the implementation.}
\label{fig:flows_to_imp}
\end{figure}

Following Filament, our implementation also supports procedure calls, where labeled parameters and return values are statically checked. Consider the following example:

\begin{minipage}{0.43\textwidth}
\begin{lstlisting}[style=ruststyle]
fn pwd_reminder(..., mut password: DRLabel<String, B1, S, P>,)-> DRLabel<String, B1, P, P> {
  //set B1 if the user owns the password
  ...}
\end{lstlisting}
\end{minipage}
\hfill
\begin{minipage}{0.47\textwidth}
\begin{lstlisting}[style=ruststyle]
let password=DRLabel::<String, B1, S, P> {
    value: "secret123".to_string(),
    cond: false,...};
let reminder = pwd_reminder(..., password);
\end{lstlisting}
\end{minipage}

The parameter \texttt{password} is annotated with the label $\sevent{B1}?S\rightarrow P$, where the security event $\sevent{B1}$ is set when the user owns the password. Our implementation statically checks that the label of the actual argument \cod{password} does not exceed the corresponding annotation in \cod{pwd\_reminder}. Moreover, Rust's type system automatically infers the label of \cod{reminder} to match the function's declared return label.

For simplicity, our prototype currently lacks the following features (1) nested dynamic labels, (2) user-defined security events and (3) first-order functions. We plan to add them in future work.

\subsection{Case Studies}
\label{sec:cases}

To demonstrate the efficacy of dynamic release policy, we ported a conference management system written in Lifty~\cite{polikarpova2020} and
Civitas~\cite{clarkson2008civitas, juels2005coercion} (excluding its cryptography proof component) written in Jif in approximately 1,700 lines of Rust code on top of our prototype implementation. The evaluation demonstrates that, in contrast to prior systems, each built for a specific dynamic policy and its corresponding enforcement mechanism, our framework can accommodate a broader class of dynamic policies within the same type system. For example, Lifty supports only declassification policies and therefore cannot express erasure policies required by Civitas. The full implementation of the case studies is available in~\cite{implementation}.

\subsubsection{Conference Reviewing system}
\begin{figure}
\centering
\begin{subfigure}{0.43\textwidth}
\begin{lstlisting}[style=ruststyle]
eventoff(&mut password);
if is_requester_owner(current_user, password_owner) {
  eventon(&mut password);
  let relabel_result = relabel::< _, TrueB1, S, P, P, P > (&password, &conditions);
  relabel_result.assign_to(&mut passcode_out);
  output_to(&output_label, &passcode_out, &conditions);}
else {//reject, line 8-9 in pseudocode}
\end{lstlisting}
\subcaption{Implementation on the password example.}
\Description{}
\label{fig:relabel_call_a}
\end{subfigure}
\hspace{0.05\textwidth}
\begin{subfigure}{0.43\textwidth}
\begin{lstlisting}[style=mystyle]
//password: $\neg$checkuser?S^P, passcode:P
eventoff(checkuser);
if (isUser == true) {  //check user
  eventon(checkuser);
  passcode=relabel(password,"$\neg$checkuser?S^P" to "P" using checkuser);  
  output(P, passcode, using checkuser);}
else {  // administrator
  passcode = password; //reject
  output(P, passcode, using checkuser);}//nop
\end{lstlisting}
\subcaption{Pseudocode of the password example. }
\Description{Rust code showing a generic call to \texttt{relabel} with type arguments and guards.}
\label{fig:relabel_call_b}
\end{subfigure}
\caption{Password Disclosure Example from Conference Reviewing System.}

\label{fig:relabel_call}
\end{figure}

HotCRP~\cite{hotcrp} is a widely used conference reviewing system with known security flaws~\cite{yip2009}. Previous work Lifty~\cite{polikarpova2020} provides several code snippets that model the essence of those flaws in their Haskell-like domain specific language. Under the hood, Lifty detects such flaws by combining Labeled IO (LIO)~\cite{stefan2011} with Liquid Types~\cite{rondon2008}. We port the Lifty-based implementation with all core features such as score release, review release, and password reminder. Below, we discuss two representative examples.

\mypara{Password Disclosure} A representative security vulnerability in HotCRP involves password disclosure \cite{yip2009}. HotCRP has two features. First, when a user forgets the password, HotCRP sends a reminder email to the user. Second, HotCRP has an email preview mode that allows an administrator to display emails in the browser, including those reminder emails as shown in Figure~\ref{fig:relabel_call_b}. Consequently, in the email preview mode, an administrator can trigger a reminder email, which directly reveals a user’s password in the browser to the administrator~\cite{yip2009}.

To prevent the accidental leak of a password, we assign it the label $\neg\sevent{checkuser}?\High \rightarrow_t \Low$. This label ensures the password remains at a high security level until a specific security event, $\sevent{checkuser}$, is activated (\texttt{eventon} from line 3 in Figure~\ref{fig:relabel_call_a}). To achieve this, the implementation checks the functions \texttt{relabel} and \texttt{output\_to} at compile time. This check is enforced by the flows-to relations: if it compiles, then the execution reaches lines 4-6 in Figure~\ref{fig:relabel_call_b}; otherwise, it fails to compile according to the type system.

\mypara{Releasing papers' score}
The system checks other scenarios, including paper score release, which is a microbenchmark used in Lifty~\cite{polikarpova2020}. This scenario requires that the paper's score should be visible only when the review is finished. To address this, the scores of the papers are labeled with $\neg\sevent{done}?\High \rightarrow_t \Low$. The typing rule $\cod{relabel}$ checks the status indicator $\sevent{done}$, which determines when the result can be viewed.

\subsubsection{Secure voting system}

Civitas~\cite{clarkson2008civitas, juels2005coercion} is a secure voting system that enforces declassification and erasure policies across four phases: setup, registration, voting, and tabulation. Prior case studies~\cite{Chong2008} highlight two erasure scenarios involving credential shares by authorities and voters, written in Jif~\cite{myers2001}. We ported all Civitas components related to vote delivery and credential erasure.

For authorities, the credentials must be removed to ensure voters' anonymity after they are delivered to the voters. Hence, we label credentials with $\neg\sevent{delivered}? \cod{M} \rightarrow \top$, so that the credential must be erased after delivery. The output rule uses the fact that $\sevent{delivered}$ has never been $\true$ (i.e., $\absent{\sevent{delivered}}$) to check that the dynamic label can be released through the output channel. 

For voters, each share on the voter side is labeled with $\neg\sevent{postCombined}?\cod{M} \rightarrow \top$. After receiving all the shares from the authorities and combining the shares, the voter should delete any record provided by the authorities. The proof tree for output is the same as the previous example. The code is available in~\cite{implementation}.

\subsection{Performance and Correctness}
We implemented the semantics shown in Figure~\ref{fig:key_semantics} for the novel constructs introduced in our language. Since the unique semantics require tracking a trace of security events and evaluating event conditions at runtime, our approach may introduce additional runtime overhead. To assess this overhead, we measured the median execution time over 100 runs for both the original and the verified Rust implementations across our case studies. The results are shown below.

\begin{table}[h]
\centering

\begin{tabular}{l r r }
\hline
\textbf{} & \textbf{Original} & \textbf{IFC}  \\
\hline
conference & 0.029ms   & 0.033ms   \\
Civitas     & 5.694ms & 5.736ms  \\
\hline
\end{tabular}
\vspace{-2ex}
\end{table} 
The evaluation results suggest that the overhead is low. We attribute the low overhead to the fact that even though the event trace grows during program execution, the overall overhead remains low as (1) event trace only grows with eventon/eventoff, and output commands, and (2) the trace is only checked with a relabel command. Since both kinds of policy-specific commands occur infrequently in practice, the overall overhead remains low.

Moreover, because \cod{relabel} and \cod{output} commands may alter program semantics when the corresponding \cod{using} clauses are not satisfied, we verified that all case studies produce the same outputs as the original programs without information flow control.

\section{Related Work}
\mypara{Dynamic Policies} To address the limitations of noninterference, prior work introduces various mechanisms~\cite{askarov2007,askarov2009,Chong2008,clarkson2008civitas,Chong2005} to allow information sensitivity/integrity to change over time. Additional studies include extensions to gradual release~\cite{banerjee2008}, temporally verified logic~\cite{balliu2011}, and security-specific languages~\cite{swamy2006,hicks2005}, which support role-based information flow policies with dynamic updates. These works enable flexible label updates while accommodating both Denning-style and role-based lattices. Recent work~\cite{cecchetti2025} introduces nonmalleable progress leakage, adapting nonmalleable information flow to secure progress channels for downgrading policies.
More recently, Ahmadian~\cite{ahmadian2022} introduced an attacker knowledge-based information flow policy. Zegarelli et al.~\cite{zegarelli2024} propose a modular enforcement mechanism that supports multiple interpretations of facets within a single framework. In contrast, our work uses the dynamic release policy~\cite{Li2022}, which provides a uniform formalization of various dynamic policies.

\mypara{Enforcement of Dynamic Policies in Applications}
Building on Myers’ influential work~\cite{myers1998} in establishing a formal semantics for decentralized labels, various applications have been built on top of Jif~\cite{myers2001}, including some that involve dynamic policies such as declassification and erasure. 
Dynamic policies are also widely used in systems research. Guarnieri et al.~\cite{guarnieri2016,guarnieri2019} explore the “forgetful attacker” model~\cite{askarov2012} to implement declassification via a security monitor. In addition, dynamic policies have been applied to applications such as conference reviewing systems~\cite{polikarpova2020,stefan2011}, Battleship Game~\cite{myers2001,stoughton2014}, and social media platforms~\cite{bauereiss2018}.

In comparison, our dynamic release Rust library constitutes the first approach capable of expressing and verifying the full range of such policies. A detailed comparison with dynamic release policy and other variants of dynamic policies can be found in~\cite{Li2022}.

\mypara{Soundness Proof}
Most information flow policies enforce noninterference, and are typically enforced via static type systems, whose soundness has been proven using both big- and small-step semantics~\cite{flowcaml,volpano1996,li2017arxiv,cecchetti2021,rajani2020,gregersen2021}. As discussed in Section~\ref{sec:endtoendproof}, these proof techniques are not directly applicable to dynamic policies.

Prior work in dynamic policies, notably~\cite{Chong2008}, established the soundness of static enforcement for declassification and erasure. Unlike~\cite{Chong2008}, which uses two separate sets of rules, our dynamic release policy is a superset of those supported policies and provides a single set of rules for various dynamic policies. Furthermore, differences in policy formalization mean that the soundness proof of~\cite{Chong2008} is not applicable to our dynamic release enforcement.

\section{Discussion and Future work}

\subsection{Annotation Burden}
In this paper, we assume that \cod{using} clauses are provided by programmers. From the programmer's perspective, an appropriate annotation should specify, for each \cod{output} and \cod{relabel} command: (1) the conditions under which information may be released safely, thereby enabling the command to satisfy the typing rules (Section~\ref{sec:typesystem}); and (2) the conditions that hold at the corresponding program point, ensuring that the annotated program exhibits the same behavior as the original program whenever the original program is secure (Section~\ref{sec:commandsemantics}).
Across all of our case studies (Section~\ref{sec:case_studies}), these conditions are sufficiently simple to identify and annotate manually. However, constructing such annotations may become more challenging in large code bases. Hence, an important direction for future work is the automatic inference of \cod{using} clauses through static analysis. Our language design is well suited to this objective because security events are explicitly distinguished from ordinary program variables and can only be modified by the \cod{eventon} and \cod{eventoff} commands (Section~\ref{sec:syntax}). This separation enables a compositional inference strategy that we plan to explore: an intraprocedural dataflow analysis can propagate event-related facts along the control-flow graph to infer conditions that must hold at each program point, while the effect of each procedure on security events can be summarized independently by the set of events it may modify.

Moreover, our type system currently does not support label inference. Consequently, all label annotations, including the ``from'' and ``to'' labels in \cod{relabel} commands, must be provided explicitly by programmers. A promising direction for future work is to extend the language with automatic label inference by formulating the flows-to relation as a system of constraints to be solved by a constraint solver. The primary challenge lies in the presence of dynamic release labels, which introduce dependencies among these constraints and substantially complicate the inference process. Addressing such dependencies is beyond the capabilities of our current Rust-based implementation.

\subsection{Language Features}
\label{sec:language_features}
Our primary goal is to establish a theoretical foundation for enforcing dynamic release policies. In this context, the use of a minimal imperative language is deliberate and consistent with prior type systems that come with formal soundness proofs. Building on this foundation, an important direction for future work is to extend both the source language and the type system with more expressive language features, such as first-class functions and heap manipulation. In pursuing these extensions, we plan to draw on established designs such as Jif~\cite{myers2001} and Cocoon~\cite{lamba2024}.

For example, supporting first-class functions requires assigning security labels to those functions (including their parameters, return values, and side effects) just as for any other value in the program. These labels capture the security constraints governing the use of a function, such as the maximum sensitivity of its arguments and the minimum security level of the $\pc$ context in which it may be invoked securely. In the setting of dynamic release policies, an additional challenge arises because these constraints also depend on the state of security events at the call site. Consequently, the annotation may also require a \cod{using} clause, specifying the necessary event states to ensure that function invocations remain secure.

\section{Conclusion}

In this paper, we present the first sound enforcement of the dynamic release policy introduced by Li and Zhang~\cite{Li2022}. Compared to prior enforcements of dynamic policies, which are based on ad hoc, policy-specific enforcement mechanisms, our work offers simple, intuitive, and unified enforcement of various dynamic policies, such as declassification, endorsement, and erasure policies. We developed a metalanguage with carefully designed features, such as $\cod{relabel}$, $\cod{eventon}$, $\cod{eventoff}$ commands for writing code governed by embedded dynamic release policies. We further developed a type system that is formally proven to soundly enforce dynamic release policies, and implemented a prototype as an extension to Rust.

\section*{Data Availability Statement}
An initial artifact for this paper consists of the prototype and the case studies from Section~\ref{sec:case_studies}. A version of the source code is available at the link~\cite{implementation}. 
The full proof can be found in~\cite{full_proof}.

\begin{acks}
We would like to express our gratitude to the anonymous reviewers for their insightful feedback and suggestions. We also appreciate the insights from Peixuan Li and Quan Zhou. This research is sponsored
by National Science Foundation grants CNS 2401496 and CNS 2401182. 
\end{acks}

\bibliographystyle{ACM-Reference-Format}
\bibliography{ref}

\clearpage
\appendix
\setcounter{page}{1}
\renewcommand{\thepage}{\arabic{page}}

\begin{figure}
\begin{mathpar}
\mprset{flushleft} 
\inferrule[]
  { }
  {\configs{m,n} \Downarrow n}

\inferrule[]
  { }
  {\configs{m,x} \Downarrow m(x)}

\inferrule[]
  { \configs{m,e_1}\Downarrow n_1 \\\configs{m,e_2}\Downarrow n_2\\n_3=n_1\op\; n_2}
  {\configs{m,e_1 \;\op \;e_2} \Downarrow n_3}
\end{mathpar}
\caption{Big-Step Semantics of Expression}
\Description{}
\label{fig:arith_semantics}
\end{figure}

\begin{figure*}
\begin{mathpar}
\mprset{flushleft}

\inferrule[OS-Skip] 
  { }
  { \configThree{\Mem}{\Skip;c}{\strace}\rightarrow\configThree{\Mem}{c}{\strace} }
\and
\inferrule[OS-Sequence] 
  {\configThree{\Mem}{c_1}{\strace}\rightarrow\configThree{\Mem'}{c_1'}{\strace'}}
  { \configThree{\Mem}{c_1;c_2}{\strace}\rightarrow\configThree{\Mem'}{c_1';c_2}{\strace'} }
\and
\inferrule[OS-Assign] 
  { \configs{\Mem,\expr_1} \Downarrow n }
  { \configThree{\Mem}{x:=e_1}{\strace} \rightarrow \configThree{\Mem[x:=n]}{\Skip}{\strace}}
\and
\inferrule[OS-If-True] 
{ \configs{\Mem,\expr} \Downarrow \true}
{ \configThree{\Mem}{\ifcmd{\expr}{c_1}{c_2}}{\strace}\rightarrow \configThree{\Mem}{c_1}{\strace}} 
\and
\inferrule[OS-If-False] 
{ \configs{\Mem,\expr} \Downarrow \false}
{ \configThree{\Mem}{\ifcmd{\expr}{c_1}{c_2}}{\strace}\rightarrow \configThree{\Mem}{c_2}{\strace}} 
\and
\inferrule[OS-While] 
{ }
{ \configThree{\Mem}{\while{\expr}{c}}{\strace}\rightarrow \configThree{\Mem}{\ifcmd{b}{(c;\while{b}{c})}{\Skip}}{\strace} }
\and
\inferrule[OS-Output-Succ]
{\configTwo{\Mem}{e}\evalto v \quad
\inferrule{\forall i \in 0...k.~\\\\
\cod{eval}(\sevent{s_i},\sigma)=\true}{} \quad
\strace'=
\text{
$\begin{cases}
\strace\cdot\sevent{o_x@\level}, & \text{$\expr$ is a variable $x$} \\
\strace, & \text{otherwise}
\end{cases}$
}}
{\configThree{\Mem}{\outcmd{L}{e}{\sevent{s_0}\dots\sevent{s_k}}}{\strace } \xrightarrow[]{\configThree{\level}{v}{\strace}}
\configThree{\Mem}{\Skip}{\strace} }
\and
\inferrule[OS-Output-fail]
{\configTwo{\Mem}{e}\evalto v \\ 
\inferrule{\exists i \in 0...k.~\\\\
\cod{eval}(\sevent{s_i},\sigma) = \false}{}
}
{\configThree{\Mem}{\outcmd{L}{e}{\sevent{s_0}\dots\sevent{s_k}}}{\strace} \xrightarrow[]{}
\configThree{\Mem}{\Skip}{\strace}}
\and
\inferrule[OS-eventon]
{}
{\configThree{\Mem}{\cod{eventon}(\sevent{s})}{\strace} \To 
\configThree{\Mem}{\Skip}{\strace\cdot \sevent{s} }}
\and
\inferrule[OS-eventoff]
{}
{\configThree{\Mem}{\cod{eventoff}(\sevent{s})}{\strace} \To 
\configThree{\Mem}{\Skip}{\strace\cdot \neg\sevent{s} }}
\and
\inferrule[OS-Relabel-Succ] 
{\configTwo{\Mem}{e} \Downarrow n \\ \forall i \in 0...k.~\cod{eval}(\sevent{s_i},\sigma)=\true }
{ \configThree{m}{x:=\relabel{e}{\lab_f}{\level_t}{\sevent{s_0}\dots\sevent{s_k}}}{\strace}\rightarrow \configThree{\Mem[x:=n]}{\Skip}{\strace}}
\and
\inferrule[OS-Relabel-Fail] 
{\exists i \in 0...k.~\cod{eval}(\sevent{s_i},\sigma)=\false}
{ \configThree{\Mem}{x:=\relabel{e}{\lab_f}{\level_t}{\sevent{s_0}\dots\sevent{s_k}}}{\strace}\rightarrow \configThree{\Mem}{\Skip}{\strace}}
\end{mathpar}
\caption{Operational Semantics for Commands}
\Description{}
\label{fig:command_semantics}
\end{figure*}

\section{Full Operational Semantics of Expressions and Commands}
\label{appendix:os}
Figure~\ref{fig:arith_semantics} and Figure~\ref{fig:command_semantics} present the full operational semantics.

\section{Full Static Flows-to Rules}
\label{appendix:flows}
\begin{figure}
\framebox{
\textbf{General Lattice Rules}}
\begin{mathpar}
\mprset{flushleft} 
\inferrule[Lattice] 
  { \level\LEQ\level' }
  {S\vdash \level \flowsto\level' }

\inferrule[Trans]
  {\vdash \lab\flowsto \lab' \\ \vdash \lab'\flowsto \lab''}
  {S\vdash \lab\flowsto \lab''} 
\end{mathpar}

\framebox{
\textbf{Directional Inference Rules}}
\begin{mathpar}
\mprset{flushleft}
\inferrule[R-To] 
  { S\vDash \neg \sevent{s_2} \\  \vdash \level\flowsto \labtwo'}
  { S\vdash \level\flowsto \sevent{s_2}?\labtwo\rightarrow \labtwo'}
\and
\inferrule[D-To] 
  {S\vDash \neg\sevent{s_2} \\\neg\sevent{s_1} \Rightarrow \neg\sevent{s_2}  \\  
  \vdash \sevent{s_1}?\level \rightarrow \lab' \flowsto \labtwo'}
  { S\vdash \sevent{s_1}?\level\rightarrow \lab' \flowsto \sevent{s_2}?\labtwo\rightarrow \labtwo'}
\and
\inferrule[R-From] 
  { S\vDash \neg \sevent{s_1} \\  \vdash \lab'\flowsto \level}
  { S\vdash \sevent{s_1}?\lab\rightarrow\lab'\flowsto \level}
\and
\inferrule[D-From] 
  {S\vDash \neg\sevent{s_1} \\\neg\sevent{s_2} \Rightarrow \neg\sevent{s_1}  \\  
  \vdash \lab' \flowsto \sevent{s_2}?\level\rightarrow \labtwo'}
  { S\vdash \sevent{s_1}?\lab\rightarrow \lab' \flowsto \sevent{s_2}?\level\rightarrow \labtwo'}
\and
\inferrule[D-To Two] 
{ S\vdash \lab\flowsto \level \quad \vdash \lab'\flowsto \level}
  {S\vdash \sevent{s_1}?\lab\rightarrow \lab'\flowsto \level}
\and
\inferrule[R-To Two] 
{\neg\sevent{s_2} \Rightarrow \neg\sevent{s_1}\quad S \vdash \lab\flowsto \sevent{s_2}?\labtwo\rightarrow \labtwo' \quad \vdash \lab'\flowsto \sevent{s_2}?\labtwo\rightarrow \labtwo'}
  {S\vdash \sevent{s_1}?\lab\rightarrow \lab'\flowsto \sevent{s_2}?\labtwo\rightarrow \labtwo'}
  \and
\inferrule[D-From Two] 
{ S\vdash \level \flowsto \lab \quad \vdash \level\flowsto \lab'}
  {S\vdash \level \flowsto \sevent{s_2}?\labtwo\rightarrow \labtwo'}
\and
\inferrule[R-From Two] 
{\neg\sevent{s_1} \Rightarrow \neg\sevent{s_2}\quad S \vdash \sevent{s_1}?\lab\rightarrow \lab' \flowsto \labtwo \quad \vdash \sevent{s_1}?\lab\rightarrow \lab' \flowsto \labtwo'}
  {S\vdash \sevent{s_1}?\lab\rightarrow \lab'\flowsto \sevent{s_2}?\labtwo\rightarrow \labtwo'}
\end{mathpar}

\framebox{
\textbf{Bi-directional Inference Rules}}
\begin{mathpar}
\mprset{flushleft}
\inferrule[Bi-To] 
  { \vdash \level \flowsto \labtwo \\ \vdash\level \flowsto \labtwo'}
  {S\vdash  \level \flowsto \sevent{s_2}? \labtwo \leftrightarrow \labtwo'}
\and
\inferrule[Bi-From] 
  { \vdash \lab\flowsto \level  \\ \vdash \lab'\flowsto \level}
  {S\vdash \sevent{s_1}?\lab \leftrightarrow \lab'\flowsto  \level}
\and 
\inferrule[Bi-Both] 
  { \sevent{s_1} \Leftrightarrow \sevent{s_2} \\ \vdash \lab\flowsto \labtwo \\ \vdash \lab'\flowsto \labtwo'}
  {S\vdash \sevent{s_1}?\lab \leftrightarrow \lab'\flowsto  \sevent{s_2}? \labtwo \leftrightarrow \labtwo'}
\end{mathpar}

\caption{In top-down order, these are the general rules, the directional inference rules, and bidirectional rules.}
\Description{}
\label{fig:inference_rules_full}
\end{figure}
Figure~\ref{fig:inference_rules_full} presents the full static flows-to rules.

\section{Typing Rules for Expressions}
\label{appendix:arrows}
The typing rules for expressions are in Figure~\ref{fig:type_expression}.
\begin{figure*}
{\small
\begin{mathpar}
\mprset{flushleft} 
    
\inferrule[Var]
{ }
  {\Gamma\vdash x:\Gamma(x)}
\and  
\inferrule[Num]
  { }
  {\Gamma\vdash n : \bot}
\and
\inferrule[Op]
  {\Gamma\vdash \expr_1: \lab_1 \\ \Gamma\vdash \expr_2 :\lab_2 \\ \vdash \lab_1\flowsto \lab_2}
  {\Gamma\vdash \expr_1 \op\; \expr_2: \lab_2 } 
\and
\inferrule[SubType]
  {\Gamma\vdash \expr: \lab \\ \vdash  \lab\flowsto \lab' }
  {\Gamma\vdash \expr: \lab'}
\end{mathpar}
}
\caption{Typing Rules for Expressions.}
\Description{}
\label{fig:type_expression}
\end{figure*}

\section{Soundness of Flows-To and Release-To Relation}
\label{appendix:t1proof}
\begin{theorem*}[Local Soundness, Theorem~\ref{theorem:setsound}]

For all policies p and q, a trace $\strace$, a set of security events $S={\sevent{s_0}\dots\sevent{s_k}}$, if $S\vdash \lab\flowsto \labtwo$ and $S$ holds on $\strace$, then $\lab \LEQ_\strace \labtwo$.
\end{theorem*}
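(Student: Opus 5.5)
We prove the statement by induction on the derivation of $S\vdash \lab\flowsto\labtwo$, using the full rule set in Figure~\ref{fig:inference_rules_full}. We strengthen the induction hypothesis slightly so that it also covers the premises written with an empty fact set, $\vdash \lab\flowsto\labtwo$. For those premises the claim is that $\auxfuncold{\lab}{\strace}\LEQ\auxfuncold{\labtwo}{\strace}$ holds for \emph{every} trace $\strace$, including suffixes $\strace'^{[i:]}$ and $\strace'^{[i+1:]}$. This holds because the empty set is satisfied by every trace, and $\LEQ_\strace$ quantifies over all extensions. The strengthened form is what lets us apply the hypothesis to the sub-traces produced by the label semantics of Figure~\ref{fig:label-semantics}.

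In each case we unfold Definition~\ref{def:partial_dynamic}. We fix an extension $\strace'\succeq\strace$ and compare the two interpretations at $\strace'$. Since $S$ holds on $\strace$, every fact of the form $S\vDash\neg\sevent{s}$ or $S\vDash\sevent{s}$ still holds on $\strace'$, as these facts assert that an event has occurred. The base and structural cases are direct. \ruleref{Lattice} is immediate because levels are interpreted constantly. \ruleref{Trans} follows from the induction hypothesis and transitivity of $\LEQ$ at the fixed $\strace'$.

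\ruleref{R-To} and \ruleref{D-To} are handled as in the sketch above. \ruleref{R-From} and \ruleref{D-From} are symmetric. In each, the fact $S\vDash\neg\sevent{s}$ fixes the first index $i$ at which the condition turns false, which reduces the dynamic label to its second component evaluated on $\strace'^{[i:]}$. The implication premise ($\neg\sevent{s_1}\Rightarrow\neg\sevent{s_2}$ or its converse) orders the two first-false indices, which justifies re-evaluating the other side on the same suffix.

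\ruleref{D-To Two} and \ruleref{D-From Two} split on whether $\sevent{s_1}$ has turned false on $\strace'$. If it has not, we use the $S$-premise at $\strace'$. If it has, we use the unconditional premise on the suffix, and the level side is constant. \ruleref{R-To Two} and \ruleref{R-From Two} split similarly. Before the relevant index the $S$-premise applies directly. After it, the implication premise guarantees that the other label's top-level condition has not yet fired, or fired at a later index, so the unconditional premise applied to the correct suffix gives the bound.

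The bidirectional rules rely on $\cod{last}$ instead of $\cod{first}$. \ruleref{Bi-To} and \ruleref{Bi-From} bound both components unconditionally, so whichever branch is selected at $\strace'$, on whatever suffix, the bound holds by the strengthened hypothesis. For \ruleref{Bi-Both}, the equivalence $\sevent{s_1}\Leftrightarrow\sevent{s_2}$ means the two labels select corresponding branches with the same last index. Hence both sides are evaluated on the same suffix $\strace'^{[i+1:]}$, and the matching component premise closes the case.

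I expect the main obstacle to be the ``Two'' cases, specifically aligning the suffixes. The counterexample in Section~\ref{sec:flows_to} shows that a nested label evaluated on $\strace'^{[i:]}$ can behave differently than on $\strace'$. So I must check carefully that the implication premise forces both sides to be read on a common suffix, or that one side is a level and therefore suffix-independent. If the direction of the implication premise in some rule turns out not to give this alignment, I would first try to route that case through \ruleref{Trans} with an intermediate level, and only then fall back to a finer case split on the relative order of the two first-false indices.
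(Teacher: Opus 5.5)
Your plan follows the paper's proof. Like the paper, you induct on the derivation, unfold Definition~\ref{def:partial_dynamic} at an arbitrary extension $\strace'$, split on first-false indices, and use the implication side conditions to read both sides on a common suffix. Your strengthened hypothesis for $S$-free premises is what the paper uses implicitly when it applies the induction hypothesis on $\strace'^{[i:]}$.

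One step does not go through as you state it. In \ruleref{R-To Two}, take the subcase where $\sevent{s_1}$ first turns false at some $j$ on $\strace'$ but $\sevent{s_2}$ never does; $\neg\sevent{s_2}\Rightarrow\neg\sevent{s_1}$ permits this. The left side is then $\auxfuncold{\lab'}{\strace'^{[j:]}}$ and the right side is $\auxfuncold{\labtwo}{\strace'}$, so there is no common suffix. The premise $\vdash \lab'\flowsto\sevent{s_2}?\labtwo\rightarrow\labtwo'$ at $\strace'^{[j:]}$ only bounds the left side by $\auxfuncold{\labtwo}{\strace'^{[j:]}}$. At $\strace'$ it only bounds $\auxfuncold{\lab'}{\strace'}$. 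When both $\lab'$ and $\labtwo$ have nested conditions, neither instance gives what you need; this is the phenomenon behind the counterexample in Section~\ref{sec:flows_to}. The mirror subcase of \ruleref{R-From Two} ($\sevent{s_2}$ fires at $i$, $\sevent{s_1}$ never) has the same problem, with $\auxfuncold{\lab}{\strace'}$ against $\auxfuncold{\labtwo'}{\strace'^{[i:]}}$. The paper's proof does not resolve this either: its case split for \ruleref{R-To Two} covers only ``$\sevent{s_2}$ has been false'' and ``$\sevent{s_1}$ has never been false'', so this subcase is skipped.

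Your fallback of re-routing through \ruleref{Trans} is not available inside an induction over a fixed derivation. You can get the same effect from a separate lemma about $S$-free derivations.
\begin{itemize}
\item Since $\emptyset\vDash\neg\sevent{s}$ never holds, \ruleref{R-To}, \ruleref{R-From}, \ruleref{D-To} and \ruleref{D-From} cannot occur in such derivations.
\item Show, by induction over the remaining rules, that if $\vdash\lab_1\flowsto\lab_2$ and at least one of $\lab_1,\lab_2$ is not bidirectional at its root, then every level occurring in $\lab_1$ is below every level occurring in $\lab_2$.
\item \ruleref{Bi-Both} is the only rule where this uniform bound fails, and it is excluded at the root by that hypothesis.
\item A \ruleref{Trans} step through a bidirectional intermediate label uses two simpler facts, valid for every $S$-free derivation: a common lower bound of all levels of the left label remains one for the right label, and dually for upper bounds.
\end{itemize}
Since $\auxfuncold{\lab}{\strace}$ is always some level occurring in $\lab$, the lemma makes the comparison independent of the suffix and closes both subcases. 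The remaining cases work as you describe.
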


\begin{proof} 
By Definition~\ref{def:partial_dynamic}, it is sufficient to show that for any $\strace'$ such that $\strace\preceq\strace'$, we have $\auxfuncold{\lab}{\strace'}\LEQ\auxfuncold{\labtwo}{\strace'}$. The proof is by induction on the static inference rules:

\begin{itemize}
\item \ruleref{Lattice} ($S\vdash \level \flowsto\level'$): It is trivial.

\item \ruleref{Trans} ($S\vdash \lab\flowsto \lab''$): It is trivial.

\item \ruleref{R-To} ($S\vdash \level\flowsto \sevent{s_2}?\labtwo\xrightarrow{} \labtwo'$):
By the assumption, we have $\vdash \level\flowsto \labtwo'$ and $S\vDash \neg\sevent{s_2}$. Since $S$ holds on $\strace$ by assumption, $\strace \preceq \strace'$, and $S\vDash \neg\sevent{s_2}$, there must be some index $i$ on $\strace'$ at which $\sevent{s_2}$ first turns $\false$. According to the semantics of Figure~\ref{fig:label-semantics}, we know that $\auxfuncold{\sevent{s_2}?\labtwo\rightarrow \labtwo'}{\strace'}=\auxfuncold{\labtwo'}{\strace'^{[i:]}}$.  
    
    From the induction hypothesis and the assumption that $\vdash \level\flowsto \labtwo'$, we get $\auxfuncold{\level}{\strace'^{[i:]}}\LEQ \auxfuncold{\labtwo'}{\strace'^{[i:]}}$.
    Hence, $\auxfuncold{\level}{\strace'}= \level = \auxfuncold{\level}{\strace'^{[i:]}}\LEQ \auxfuncold{\labtwo'}{\strace'^{[i:]}}=\auxfuncold{\sevent{s_2}?\labtwo\rightarrow \labtwo'}{\strace'}$.


\item \ruleref{R-From} Symmetric to the case above.

\item \ruleref{D-To}
Similar to \ruleref{R-To}, we first derive $\auxfuncold{\sevent{s_2}?\labtwo\rightarrow \labtwo'}{\strace'}=\auxfuncold{\labtwo'}{\strace'^{[i:]}}$ from the assumptions, where $i$ is the index when $\sevent{s_2}$ first turns $\false$. 

If $\sevent{s_1}$ has been $\false$ on $\strace'$, let $j$ be the index when it turns false the first time. Since $\neg\sevent{s_1} \Rightarrow \neg\sevent{s_2}$, it is guaranteed that $j\geq i$. According to the semantics of Figure~\ref{fig:label-semantics}, we know that $\auxfuncold{\sevent{s_1}?\level\rightarrow \lab'}{\strace'}=\auxfuncold{\lab'}{\strace'^{[j:]}}=\auxfuncold{\sevent{s_1}?\level\rightarrow \lab'}{\strace'^{[i:]}}$ where the second equation holds as $\sevent{s_1}$ must remain true between $i$ and $j$. Moreover,
from the induction hypothesis and the assumption $\vdash \sevent{s_1}?\level\rightarrow\lab' \flowsto \labtwo'$, we know that 
$\auxfuncold{\sevent{s_1}?\level\rightarrow\lab'}{\strace'^{[i:]}} \LEQ \auxfuncold{\labtwo'}{\strace'^{[i:]}}$. Hence $\auxfuncold{\sevent{s_1}?\level \rightarrow \lab'}{\strace'}=\auxfuncold{\sevent{s_1}?\level\rightarrow\lab'}{\strace'^{[i:]}}\LEQ \auxfuncold{\labtwo'}{\strace'^{[i:]}}=\auxfuncold{\sevent{s_2}?\labtwo\rightarrow \labtwo'}{\strace'}$. 

If $\sevent{s_1}$ has never been $\false$ on $\strace'$, we know that $\auxfuncold{\sevent{s_1}?\level \rightarrow \lab'}{\strace'}=\level=\auxfuncold{\sevent{s_1}?\level \rightarrow \lab'}{\strace'^{[i:]}}$. 
From the induction hypothesis and the assumption $\vdash \sevent{s_1}?\level\rightarrow\lab' \flowsto \labtwo'$, we know that 
$\auxfuncold{\sevent{s_1}?\level\rightarrow\lab'}{\strace'^{[i:]}} \LEQ \auxfuncold{\labtwo'}{\strace'^{[i:]}}$. Hence $\auxfuncold{\sevent{s_1}?\level \rightarrow \lab'}{\strace'}=\auxfuncold{\sevent{s_1}?\level\rightarrow\lab'}{\strace'^{[i:]}}\LEQ \auxfuncold{\labtwo'}{\strace'^{[i:]}}=\auxfuncold{\sevent{s_2}?\labtwo\rightarrow \labtwo'}{\strace'}$.

\item \ruleref{D-From} Symmetric to the case above.

\item \ruleref{D-To Two} ($S\vdash \sevent{s_1}?\lab\rightarrow \lab'\flowsto \level$): By the assumption, we have $S\vdash \lab\flowsto \level$ and $\vdash \lab'\flowsto\level$. 

If $\sevent{s_1}$ has been $\false$ on $\strace'$, let $i$ be the first such index on $\strace'$. By the semantics of Figure~\ref{fig:label-semantics}, we have $\auxfuncold{\sevent{s_1}?\lab\rightarrow \lab'}{\strace'} = \auxfuncold{\lab'}{\strace'^{[i:]}}$. From the induction hypothesis we can derive that $\auxfuncold{\lab'}{\strace'^{[i:]}}\LEQ \auxfuncold{\level}{\strace'^{[i:]}} = \level = \auxfuncold{\level}{\strace'}$. So $\auxfuncold{\sevent{s_1}?\lab\rightarrow \lab'}{\strace'} \LEQ \auxfuncold{\level}{\strace'}$.

If $\sevent{s_1}$ has never been $\false$ on $\strace'$, by the semantics of Figure~\ref{fig:label-semantics}, we have $\auxfuncold{\sevent{s_1}?\lab\rightarrow \lab'}{\strace'} = \auxfuncold{\lab}{\strace'}$. From the induction hypothesis we can derive that $\auxfuncold{\lab}{\strace'}\LEQ \auxfuncold{\level}{\strace'}$. So $\auxfuncold{\sevent{s_1}?\lab\rightarrow \lab'}{\strace'}\LEQ \auxfuncold{\level}{\strace'}$.

\item\ruleref{D-From Two} Symmetric to the case above.

\item \ruleref{R-To Two} ($S\vdash \sevent{s_1}?\lab\rightarrow \lab'\flowsto \sevent{s_2}?\labtwo \rightarrow \labtwo'$): By the assumption, we have $S\vdash \lab\flowsto \sevent{s_2}?\labtwo \rightarrow \labtwo'$ and $\vdash \lab'\flowsto \sevent{s_2}?\labtwo \rightarrow \labtwo'$. 

If $\sevent{s_2}$ has been $\false$ on $\strace'$ and let $i$ be the first such index on $\strace'$. Since $\neg\sevent{s_2} \Rightarrow \neg\sevent{s_1}$, it is guaranteed that $\sevent{s_1}$ has been $\false$ on $\strace'$ with the first index $j$ such that $j\leq i$. According to the semantics of Figure~\ref{fig:label-semantics}, we know that $\auxfuncold{\sevent{s_2}?\labtwo\rightarrow \labtwo'}{\strace'}=\auxfuncold{\labtwo'}{\strace'^{[i:]}}=\auxfuncold{\sevent{s_2}?\labtwo\rightarrow \labtwo'}{\strace'^{[j:]}}$ where the second equation holds as $\sevent{s_2}$ must remain true between $j$ and $i$. Moreover,
from the induction hypothesis and the assumption $\vdash \lab'\flowsto \sevent{s_2}?\labtwo \rightarrow \labtwo'$, we know that 
$\auxfuncold{\lab'}{\strace'^{[j:]}} \LEQ \auxfuncold{\sevent{s_2}?\labtwo \rightarrow \labtwo'}{\strace'^{[j:]}}$. Hence $\auxfuncold{\sevent{s_1}?\lab \rightarrow \lab'}{\strace'}=\auxfuncold{\lab'}{\strace'^{[j:]}}\LEQ \auxfuncold{\sevent{s_2}?\labtwo\rightarrow\labtwo'}{\strace'^{[j:]}} =\auxfuncold{\sevent{s_2}?\labtwo\rightarrow \labtwo'}{\strace'}$.

If $\sevent{s_1}$ has never been $\false$ on $\strace'$, we have $\auxfuncold{\sevent{s_1}?\lab \rightarrow \lab'}{\strace'}=\auxfuncold{\lab}{\strace'}$ according to the semantics of Figure~\ref{fig:label-semantics}. Moreover,
from the induction hypothesis and the assumption $S \vdash \lab\flowsto \sevent{s_2}?\labtwo \rightarrow \labtwo'$, we know that $\auxfuncold{\lab}{\strace'} \LEQ \auxfuncold{\sevent{s_2}?\labtwo \rightarrow \labtwo'}{\strace'}$. Hence $\auxfuncold{\sevent{s_1}?\lab \rightarrow \lab'}{\strace'} \LEQ \auxfuncold{\sevent{s_2}?\labtwo \rightarrow \labtwo'}{\strace'}$.

\item\ruleref{R-From Two} Symmetric to the case above.

\item\ruleref{Bi-Both}:
By the assumption, we have $\sevent{s_1} \Leftrightarrow \sevent{s_2}$, $\vdash \lab\flowsto \labtwo$ and $\vdash \lab'\flowsto \labtwo'$. 

With $\sevent{s_1} \iff \sevent{s_2}$, it is guaranteed that conditions $\sevent{s_2}$ and $\sevent{s_1}$ have the same value. If they both turn from $\true$ to $\false$ at index $i$ towards the end of $\strace'$, we have $\auxfuncold{\sevent{s_1}?\lab\leftrightarrow\lab'}{\strace'}=\auxfuncold{\lab'}{\strace'^{[i+1:]}}$ and 
$\auxfuncold{\sevent{s_2}?\labtwo\leftrightarrow\labtwo'}{\strace'}=\auxfuncold{\labtwo'}{\strace'^{[i+1:]}}$ according to the semantics of Figure~\ref{fig:label-semantics}. Moreover, by the induction hypothesis and the assumption that $\vdash \lab'\flowsto \labtwo'$, we have $\auxfuncold{\lab'}{\strace'^{[i+1:]}}\LEQ \auxfuncold{\labtwo'}{\strace'^{[i+1:]}}$. Hence, $\auxfuncold{\sevent{s_1}?\lab\leftrightarrow\lab'}{\strace'}\LEQ \auxfuncold{\sevent{s_2}?\labtwo\leftrightarrow\labtwo'}{\strace'}$.

If they both turn from $\false$ to $\true$ at index $i$ towards the end of $\strace'$, we have $\auxfuncold{\sevent{s_1}?\lab\leftrightarrow\lab'}{\strace'}=\auxfuncold{\lab}{\strace'^{[i+1:]}}$ and 
$\auxfuncold{\sevent{s_2}?\labtwo\leftrightarrow\labtwo'}{\strace'}=\auxfuncold{\labtwo}{\strace'^{[i+1:]}}$ according to the semantics of Figure~\ref{fig:label-semantics}. Moreover, by the induction hypothesis and the assumption that $\vdash \lab\flowsto \labtwo$, we have $\auxfuncold{\lab}{\strace'^{[i+1:]}}\LEQ \auxfuncold{\labtwo}{\strace'^{[i+1:]}}$. Hence, $\auxfuncold{\sevent{s_1}?\lab\leftrightarrow\lab'}{\strace'}\LEQ \auxfuncold{\sevent{s_2}?\labtwo\leftrightarrow\labtwo'}{\strace'}$.

\item\ruleref{Bi-To} ($S\vdash\level \flowsto \sevent{s_2}? \labtwo \leftrightarrow\labtwo'$):
By the assumption we have $\vdash\level \flowsto\labtwo$ and $\vdash\level \flowsto\labtwo'$. If $\sevent{s_2}$ turns from $\true$ to $\false$ at index $i$ towards the end of $\strace'$, we have $\auxfuncold{\sevent{s_2}?\labtwo\leftrightarrow\labtwo'}{\strace'}=\auxfuncold{\labtwo'}{\strace'^{[i+1:]}}$ according to the semantics of Figure~\ref{fig:label-semantics}. By the induction hypothesis and the assumption that $\vdash \level\flowsto \labtwo'$, we have $\auxfuncold{\level}{\strace'^{[i+1:]}}\LEQ \auxfuncold{\labtwo'}{\strace'^{[i+1:]}}$. Hence, $\auxfuncold{\level}{\strace'}=\level=\auxfuncold{\level}{\strace'^{[i+1:]}}\LEQ \auxfuncold{\sevent{s_2}?\labtwo\leftrightarrow\labtwo'}{\strace'}$.

If $\sevent{s_2}$ turns from $\false$ to $\true$ at index $i$ towards the end of $\strace'$, we have $\auxfuncold{\sevent{s_2}?\labtwo\leftrightarrow\labtwo'}{\strace'}=\auxfuncold{\labtwo}{\strace'^{[i+1:]}}$ according to the semantics of Figure~\ref{fig:label-semantics}. By the induction hypothesis and the assumption that $\vdash \level\flowsto \labtwo$, we have $\auxfuncold{\level}{\strace'^{[i+1:]}}\LEQ \auxfuncold{\labtwo}{\strace'^{[i+1:]}}$. Hence, $\auxfuncold{\level}{\strace'}=\level=\auxfuncold{\level}{\strace'^{[i+1:]}}\LEQ \auxfuncold{\sevent{s_2}?\labtwo\leftrightarrow\labtwo'}{\strace'}$.

\item\ruleref{Bi-From} Symmetric to the case above.

\end{itemize}
\end{proof}

\begin{theorem*}[Soundness of Release-To Relation, Theorem~\ref{theorem:setsound_no_extend}]
For all policies p, a trace $\strace$, a set of security events $S={\sevent{s_0}\dots\sevent{s_k}}$, if $S\vdash\lab\releaseto\level$ and $S$ holds on $\strace$, then $\auxfuncold{p}{\strace}\LEQ\level$.
\end{theorem*}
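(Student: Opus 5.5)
We argue by induction on the derivation of $S\vdash\lab\releaseto\level$, following the structure of the proof of Theorem~\ref{theorem:setsound}. Two things make the argument simpler. The right-hand side is always a static level, so $\auxfuncold{\level}{\strace}=\level$. And the conclusion concerns only $\strace$ itself, with no quantification over extensions. We strengthen the induction hypothesis to cover judgments with an empty fact set: whenever $\vdash\lab\releaseto\level$ is derivable, $\auxfuncold{\lab}{\strace''}\LEQ\level$ for \emph{every} trace $\strace''$. This is needed because several premises (\ruleref{R-From-P}, \ruleref{Bi-From}) drop $S$ and must be applied to suffixes of $\strace$.

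For the base and structural cases, \ruleref{Lattice} is immediate since $\auxfuncold{\level}{\strace}=\level\LEQ\level'$. \ruleref{Trans} follows from transitivity of $\LEQ$ in $\Lattice$, because both premises are between static levels.

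For \ruleref{R-From-P}, the fact $S\vDash\neg\sevent{s_1}$ together with $S$ holding on $\strace$ gives a first index $i=\cod{first}(\sevent{s_1},\false,\strace)\neq -1$. By Figure~\ref{fig:label-semantics}, $\auxfuncold{\sevent{s_1}?\lab\rightarrow\lab'}{\strace}=\auxfuncold{\lab'}{\strace^{[i:]}}$. The strengthened hypothesis for $\vdash\lab'\releaseto\level$ then bounds this by $\level$.

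For \ruleref{R-From-F}, the fact $S\vDash\absent{\neg\sevent{s_1}}$ means $\sevent{s_1}$ was never false on $\strace$, so $\cod{first}$ returns $-1$ and the label reduces to $\auxfuncold{\lab}{\strace}$. Since $S$ still holds on the same $\strace$, the induction hypothesis for $S\vdash\lab\releaseto\level$ applies directly. This is exactly where release-to differs from flows-to: we never need to consider extensions of $\strace$, which could falsify $\absent{\neg\sevent{s_1}}$.

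For \ruleref{Bi-From}, we split on whether the last relevant index makes $\sevent{s_1}$ true or false at the end of $\strace$. In either case the label evaluates to $\lab$ or $\lab'$ on a suffix $\strace^{[i+1:]}$, and the corresponding unconditional premise bounds it by $\level$. Any further rules in the full rule set (for instance a symmetric $\leftarrow$ form) are handled the same way via the desugaring $\cnd{\neg cnd}?\labtwo\rightarrow\lab$.

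The main obstacle is making the semantic reading of the facts precise. We need that ``$S$ holds on $\strace$'' with $S\vDash\neg\sevent{s_1}$ really yields a $\cod{first}$ index, and that $S\vDash\absent{\neg\sevent{s_1}}$ yields $\cod{first}=-1$. We also need the induction hypothesis for premises with empty $S$ to be usable on arbitrary suffix traces. We would state this as a small auxiliary lemma on $\cod{eval}$, then carry the case analysis above.
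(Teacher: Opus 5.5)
Your proposal is correct and follows the paper's proof: induction on the release-to derivation, with \ruleref{R-From-F} as the only genuinely new case (where $\cod{first}$ returns $-1$ on the same trace, so the hypothesis for $S\vdash\lab\releaseto\level$ applies directly), and the remaining cases mirroring Theorem~\ref{theorem:setsound}. Your explicit requirement that premises with an empty fact set apply to arbitrary suffix traces is what the paper uses implicitly, e.g.\ in \ruleref{R-To}. It needs no real strengthening, since $\emptyset$ holds on every trace once the trace is universally quantified in the induction hypothesis.
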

\begin{proof} 
The proof is by induction on the static inference rules. The only interesting case is Rule~\ruleref{R-From-F}; the rest are the same as their corresponding rules in Theorem~\ref{theorem:setsound}.
\begin{itemize}


\item \ruleref{R-From-F} ($S\vdash \sevent{s_1}?\lab\xrightarrow{} \lab'\releaseto \level$):
By the assumption, we have $S \vdash \lab\releaseto \level$ and $S\vDash\absent{\neg\sevent{s_1}}$. Since $S$ holds on $\strace$ and $S\not\vDash \neg\sevent{s_1}$, according to the semantics of Figure~\ref{transient_policy_label_semantics}, we know that
$\auxfuncold{c?\lab\rightarrow \lab'}{\traceout{\strace}}$ evaluates to $\auxfuncold{\lab}{\strace}$. From the induction hypothesis and the assumption that $S \vdash \lab\releaseto \level$, we get $\auxfuncold{\lab}{\strace}\LEQ\level$. Hence, $\auxfuncold{\cnd{cnd}?\lab\rightarrow \lab'}{\strace} = \auxfuncold{\lab}{\strace}\LEQ\level$ which the relation stands.


\end{itemize}
\end{proof}

\section{Soundness of Type System}
\label{appendix:type_proof}
The structure of the proof follows a previous small-step noninterference proof~\cite{li2017arxiv}, formed with supporting lemmas. In the rest of the proof, we introduce a distinguished \emph{dynamic} label $\fixedlab$ to denote the source label that is considered ``sensitive'' per security reasoning. In addition, we introduce a distinguished level $\fixedlevel$ to denote the output that can be viewed by an attacker at the security level $\fixedlevel$. Since the proof below is applicable to any $\fixedlab$ and $\fixedlevel$, we show that dynamic release policy is enforced by the proposed type system.

\subsection{Typing Rules for Expressions}

The typing rules for expressions (Figure~\ref{fig:type_expression}) are straightforward. Rules~\ruleref{Var} and~\ruleref{Num} compute the policy on a variable and a value, respectively. Rule~\ruleref{Op} ensures that the policy $\lab_2$ serves as an upper-bound on the policies on subexpressions $\expr_1$ and $\expr_2$. Since our type system supports dynamic labels, we include a subtyping rule~\ruleref{SubType} to compute the upper bound of an expression and enable comparisons among different labels.

\subsection{Augmented Language}

Following~\cite{li2017arxiv}, we first introduce an augmented semantics that ``protects'' all sensitive values with brackets. Hence, a memory is said to be ``well-formed'' if all sensitive values with brackets are only stored in variables that are considered as sensitive at the end of $\strace$: $\fixedlab \LEQ_\strace \Gamma(x)$.

\subsubsection{Augmented Syntax}
We augment memories to allow mapping variables $x$ to a bracketed value. Intuitively, a bracketed value is information to be protected: given any two executions on initial memory states that agree on all labels except $\fixedlab$, only bracketed values might differ throughout program execution. In this sense, we augment the language syntax to include bracketed commands and values. 
\begin{align*}
 e::=\dots\mid [n]\\
 c::=\dots\mid [c]
\end{align*}

\subsubsection{Equivalence on Commands and Memories}
Based on the intuition that bracketed values are the ``sensitive'' data to be protected, we define an equivalence relation $\approx$ on values and commands in Figure~\ref{fig:equivalence_commands}. Two values are indistinguishable if either (1) they both have the same value without brackets, or (2) both have bracketed values. Accordingly, two memories $\Mem_1$ and $\Mem_2$ are indistinguishable if for each variable $x$ that is considered as sensitive, $\Mem_1(x)$ and $\Mem_2(x)$ are indistinguishable: 
\[
  {m_1 \approx_\strace m_2} \iff {\forall x.~ \fixedlab\not\LEQ_\strace \Gamma(x) \Rightarrow m_1(x)\approx m_2(x) }
\]

Bracketed commands represent the commands being executed under ``high'' $\pc$, in order to rule out implicit flows. Hence, two equivalent commands are either structurally the same with equivalent expressions,  or both of them are bracketed commands.
\begin{figure*}
\begin{mathpar}
\mprset{flushleft} 


\inferrule[] 
{ }
  {n\approx n}
\and
\inferrule[] 
{ }
  {[n_1]\approx [n_2]}
\and
\inferrule[] 
 { }
  {[c]\approx [c']}
\and
\inferrule[]
  { }
  {\Skip\approx \Skip}
\and
\inferrule[]
  {c_1 \approx c_1' \\ c_2\approx c_2'}
  {c_1;c_2 \approx c_1';c_2'}
\and
\inferrule[]
  { c_1 \approx c_1' \\ c_2 \approx c_2'}
  {\ifcmd{e}{c_1}{c_2}\approx \ifcmd{e}{c_1'}{c_2'}}
\and
\inferrule[]
  {c\approx c'}
  {\while{e}{c}\approx \while{e}{c'}}
\and
\inferrule[] 
  { \sevent{s_0}\dots\sevent{s_k} = \sevent{s_0'}\dots\sevent{s_k'} }
  { x:=\relabel{e}{\lab_f}{L_t}{\sevent{s_0}\dots\sevent{s_k}}\approx x:=\relabel{e}{\lab_f}{L_t}{\sevent{s_0'}\dots\sevent{s_k'}} } 
\and
\inferrule[]
  { \sevent{s} = \sevent{s'}}
  {\cod{EventOn}(\sevent{s})\approx \cod{EventOn}(\sevent{s'})}
\and
\inferrule[]
  { \sevent{s} = \sevent{s'}}
  {\cod{EventOff}(\sevent{s})\approx \cod{EventOff}(\sevent{s'})}
\end{mathpar}
\caption{Two commands are equivalent if they are equal modulo bracketed commands.}
\Description{}
\label{fig:equivalence_commands}
\end{figure*}

\subsubsection{Memory Well-Formedness}
Intuitively, labels that are more restrictive than $\fixedlab$ under the current event trace $\strace$ are considered to be sensitive. Hence, only those variables can hold bracketed values. This property is formalized below as Memory Well-Formedness:

\begin{definition}[Memory Well-Formedness]
\label{def:mem_well}
\[\wellformmem{\strace}{\Mem} \triangleq \forall x.~(\exists n.~\Mem(x)=[n])\Rightarrow \fixedlab\LEQ_\strace \Gamma(x)\]
\end{definition}

\begin{figure}
\begin{mathpar}
\mprset{flushleft} 
    
\inferrule[]
  { }
  {\configs{\Mem,[n]}\Downarrow_\strace [n]}
\and
\inferrule[]
  { B \not\LEQ_\strace \Gamma(x) 
  \lor \Mem(x)=[n]}
  {\configs{\Mem,x}\Downarrow_\strace \Mem(x)}
\and  
\inferrule[]
  { B \LEQ_\strace \Gamma(x) \land \Mem(x)=n  }
  {\configs{\Mem,x}\Downarrow_\strace [n]}
\and  
\inferrule[]
{\configs{\Mem,\expr_1}\Downarrow_\strace [n_1]\\\configs{\Mem,\expr_2}\Downarrow_\strace n_2\\n=n_1 \;\op\; n_2}
{\configs{\Mem,\expr_1\;op\;\expr_2}\Downarrow_\strace [n]}
\and  
\inferrule[]
  {\configs{\Mem,\expr_1}\Downarrow_\strace n_1\\\configs{\Mem,\expr_2}\Downarrow_\strace [n_2]\\n=n_1 \;\op \;n_2}
  {\configs{\Mem,e_1\;op\;\expr_2}\Downarrow_\strace [n]}
\and
\inferrule[] {\configs{n,\expr_1}\Downarrow_\strace [n_1]\\\configs{\Mem,\expr_2}\Downarrow_\strace [n_2]\\ n=n_1\; \op \; n_2}
  {\configs{\Mem,e_1\;op\;\expr_2}\Downarrow_\strace [n]}
\end{mathpar}
\caption{Augmented operational semantics for expressions.}
\Description{}
\label{fig:bracket_semantics_arith}
\end{figure}
\begin{figure*}
\begin{mathpar}
\mprset{flushleft} 
\inferrule[]
  { }
  {\configThree{\Mem}{[\Skip]}{\strace}\rightarrow \configThree{\Mem}{\Skip}{\strace}}
\and
\inferrule[]
  {\configThree{\Mem}{c}{\strace}\xrightarrow{} \configThree{\Mem'}{c'}{\strace'}}
  {\configThree{\Mem}{[c]}{\strace}\xrightarrow{} \configThree{\Mem'}{[c']}{\strace'}}
\and
\inferrule[]
  {\configs{\Mem,\expr}\Downarrow [\true]}
  {\configThree{\Mem}{\ifcmd{\expr}{c_1}{c_2}}{\strace} \rightarrow \configThree{\Mem}{[c_1]}{\strace}}
\and
 \inferrule[]
  {\configs{\Mem,\expr}\Downarrow [\false]}
  {\configThree{\Mem}{\ifcmd{\expr}{c_1}{c_2}}{\strace}\rightarrow \configThree{\Mem}{[c_2]}{\strace}}
\inferrule[OS-Assign-b1] 
  { \configs{\Mem,\expr} \Downarrow n \\ \fixedlab \LEQ_{\strace}\Gamma (x)}
  { \configThree{\Mem}{x:=\expr}{\strace} \rightarrow \configThree{\Mem[x:=[n]]}{\Skip}{\strace} }
\and
\inferrule[OS-Assign-b2] 
  { \configs{\Mem, \expr} \Downarrow [n] }
  {  \configThree{\Mem}{x:=\expr}{\strace} \rightarrow \configThree{m[x:=[n]]}{\Skip}{\strace} }
\and
\inferrule[OS-Assign-b3] 
  { \configs{\Mem, \expr} \Downarrow n \\ \fixedlab \not\LEQ_{\strace}\Gamma (x)}
  {  \configThree{\Mem}{x:=e}{\strace} \rightarrow \configThree{\Mem[x:=n]}{\Skip}{\strace} }
\and
\inferrule[OS-Relabel-r1] 
    {\configs{\Mem,\expr} \Downarrow n \\ \fixedlab \LEQ_{\strace}\Gamma (x)\\ \forall i \in 0...k.~\cod{eval}(\sevent{s_i},\sigma)=\true}
    { \configThree{\Mem}{x:=\relabel{\expr}{\lab_f}{\level_t}{\sevent{s_0}\dots\sevent{s_k}}}{\strace}\rightarrow \configThree{\Mem[x:=[n]]}{\Skip}{\strace}}
\and
\inferrule[OS-Relabel-r2] 
    {\configs{\Mem,\expr} \Downarrow [n] \\ \forall i \in 0...k.~\cod{eval}(\sevent{s_i},\sigma)=\true}
    { \configThree{\Mem}{x:=\relabel{\expr}{\lab_f}{\level_t}{\sevent{s_0}\dots\sevent{s_k}}}{\strace}\rightarrow \configThree{\Mem[x:=[n]]}{\Skip}{\strace}}
\and
\inferrule[OS-Relabel-r3] 
    {\configs{\Mem,\expr} \Downarrow n \\ \fixedlab\not\LEQ_{\strace}\Gamma (x)\\ \forall i \in 0...k.~\cod{eval}(\sevent{s_i},\sigma)=\true}
    { \configThree{\Mem}{x:=\relabel{\expr}{\lab_f}{\level_t}{\sevent{s_0}\dots\sevent{s_k}}}{\strace}\rightarrow \configThree{\Mem[x:=n]}{\Skip}{\strace}}
\and
\inferrule[OS-Output]
{\configTwo{\Mem}{e}\evalto v \quad
\inferrule{\forall i \in 0...k.~\\\\
\cod{eval}(\sevent{s_i},\sigma)=\true}{} \quad
\strace'=
\text{
$\begin{cases}
\strace \cdot \sevent{o_x@\level}, & \text{$\expr$ is a variable $x$} \\
\strace, & \text{otherwise}
\end{cases}$
}}
{\configThree{\Mem}{\outcmd{\level}{\expr}{\sevent{s_0}\dots\sevent{s_k}}}{\strace} \xrightarrow[]{\configThree{\level}{v}{\strace}}
\configThree{\Mem}{\Skip}{\strace'} }

\end{mathpar}
\caption{Augmented operational semantics for commands.}
\Description{}
\label{fig:bracket_assign_semantics}
\end{figure*}

\subsubsection{Augmented operational semantics}


The operational semantics is extended to model the bracketed syntax, as shown in Figures~\ref{fig:bracket_semantics_arith} and~\ref{fig:bracket_assign_semantics}. The rules here are extended for operational semantics with brackets, nothing special on the computational perspective. The interesting part only happens in assignment and relabel commands, where three cases are distinguished depending on whether $x$ and $\expr$ has a bracket or not. For assignments, there are three rules. \ruleref{OS-Assign-B1} makes sure when a variable $x$ satisfies $\fixedlab \LEQ_{\strace} \Gamma(x)$, it is assigned to a bracketed value. \ruleref{OS-Assign-B2} is trivial. \ruleref{OS-Assign-B3} states that when $B \not\LEQ_{\strace} \Gamma(x)$, the value $n$ is not bracketed. The augmented rules for relabel commands are similar.

\subsubsection{Extra typing rules}
Since the static semantics are augmented, under the definition of memory well-formedness, we also need to provide a type (i.e., label) for bracketed values and commands. We define extra typing rules for~\ruleref{AugNum} and~\ruleref{AugComm} in Figure~\ref{fig:bracketassign_type} for that purpose. These rules ensure that the type system treats the label of bracketed expressions as sensitive (i.e., higher than the distinguished $\fixedlab$). Bracketed commands are treated as commands that can be type-checked under a sensitive $\pc$ label. 

\begin{figure}
\begin{mathpar}
\mprset{flushleft} 

\inferrule[AugNum]
  {}
  {\Gamma\vdash [n]: \fixedlab}
\and
\inferrule[AugComm]
  { \pc, \Gamma\vdash c \\ \fixedlab, \Gamma\vdash c}
  {\pc, \Gamma\vdash [c]} 
\end{mathpar}
\caption{Augmented typing rules for bracketed syntax.}
\Description{}
\label{fig:bracketassign_type}
\end{figure}

\subsubsection{Completeness}

Completeness means that every step in the augmented semantics (with brackets) can be performed in the non-augmented semantics. Therefore, any full evaluation of a program in the original language can be simulated in the augmented language. With an augmented program $c$ and memory $m$, $\lfloor c\rfloor$ denotes removing all the brackets from $c$ and $\lfloor m\rfloor$ as the converted memory. Completeness can be expressed with the following lemma.

    \begin{lemma}
Completeness of the augmented language
	\label{lem:completeness}
	\[
	\Gamma, \pc \vdash c \AND \configThree{\lfloor c \rfloor}{\lfloor \Mem
		\rfloor}{\strace} \rightarrow^*
	\configThree{\Skip}{\Mem'}{\strace'} \rightarrow \exists \Mem''.\
        \configThree{c}{\Mem}{\strace}\rightarrow^* 
	\configThree{\Skip}{\Mem''}{\strace}~\AND~ \Mem'=\lfloor \Mem'' \rfloor
	\]
\end{lemma}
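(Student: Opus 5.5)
We prove the lemma by induction on the length $n$ of the unbracketed evaluation $\configThree{\lfloor c \rfloor}{\lfloor \Mem\rfloor}{\strace} \rightarrow^n \configThree{\Skip}{\Mem'}{\strace'}$. We strengthen the inductive statement to a single-step simulation lemma. It says: whenever $\Gamma,\pc\vdash c$, $\wellformmem{\strace}{\Mem}$, and $\configThree{\lfloor c \rfloor}{\lfloor \Mem\rfloor}{\strace}\rightarrow\configThree{c_0'}{\Mem_0'}{\strace_0'}$, there exist $c''$ and $\Mem''$ such that
\[
\configThree{c}{\Mem}{\strace}\rightarrow^{+}\configThree{c''}{\Mem''}{\strace_0'},
\]
with $\lfloor c''\rfloor = c_0'$, $\lfloor \Mem''\rfloor=\Mem_0'$, $\Gamma,\pc\vdash c''$, $\wellformmem{\strace_0'}{\Mem''}$, and the same emitted output event. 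We read the final $\strace$ in the statement as the resulting trace $\strace'$.

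The one-step claim goes by induction on the derivation of the unbracketed step, with a case split on the shape of $c$. The first ingredient is an auxiliary expression lemma: if $\configs{\lfloor\Mem\rfloor,\expr}\Downarrow n$, then $\configs{\Mem,\expr}\Downarrow_\strace v$ for some $v\in\{n,[n]\}$. This follows by induction on $\expr$ using the rules of Figure~\ref{fig:bracket_semantics_arith}. The variable rules there are exhaustive: either $\fixedlab\LEQ_\strace\Gamma(x)$ or not, and $\Mem(x)$ is either bracketed or not.

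With the expression lemma in hand, the command cases are as follows.
\begin{itemize}
\item Assignments: exactly one of \ruleref{OS-Assign-b1}--\ruleref{OS-Assign-b3} applies. The result differs from the unbracketed memory only by brackets. Well-formedness is preserved because a bracket is written to $x$ only when $\fixedlab\LEQ_\strace\Gamma(x)$. In \ruleref{OS-Assign-b2} this requires showing $\fixedlab\LEQ_\strace\Gamma(x)$ from typing: $\Gamma\vdash\expr:\Gamma(x)$ together with the bracketed result gives $\fixedlab\LEQ_\strace\Gamma(x)$ via the lemma relating bracketed evaluation to $\fixedlab\LEQ_\strace$ of the expression's label (Lemma 4 of~\cite{full_proof}) and Theorem~\ref{theorem:setsound} for \ruleref{SubType}/\ruleref{Op}.
\item Relabel: handled the same way using \ruleref{OS-Relabel-r1}--\ruleref{OS-Relabel-r3}. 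The event checks depend only on $\strace$, which is shared, so the succeed/fail choice coincides.
\item Eventon, eventoff and output: the trace evolves identically because $\cod{eval}$ looks only at $\strace$.
\item Conditionals: if the guard evaluates to $[b]$, the augmented step produces $[c_i]$, and $\lfloor [c_i]\rfloor = c_i$.
\item Bracketed commands $[c]$: use the congruence rule and the inner induction hypothesis. If $\lfloor c\rfloor$ is $\Skip$ in a sequence, we need the extra step $[\Skip]\rightarrow\Skip$; this is why the simulation is stated with $\rightarrow^{+}$.
\end{itemize}
Typing preservation uses \ruleref{AugComm}.

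Well-formedness must be carried across steps where $\strace$ grows (eventon/eventoff/output). This is handled by Lemma~\ref{lemma:trace_preservation}: if $\fixedlab\LEQ_\strace\Gamma(x)$ then $\fixedlab\LEQ_{\strace'}\Gamma(x)$, so existing brackets remain legal.

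The main obstacle is the bracketed-assignment case, including assignment inside a bracketed command. There we need a bracket written to $x$ to be legal, i.e. $\fixedlab\LEQ_\strace\Gamma(x)$. I would obtain it from the $\fixedlab$-typing $\fixedlab,\Gamma\vdash c$ required by \ruleref{AugComm}: \ruleref{Assign} then gives $\vdash\fixedlab\flowsto\Gamma(x)$, and Theorem~\ref{theorem:setsound} converts this to $\fixedlab\LEQ_\strace\Gamma(x)$.

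Finally, we iterate the one-step lemma along the $n$ unbracketed steps. At the end the unbracketed command is $\Skip$, so the augmented command is $\Skip$ or a nest of $[\Skip]$; finitely many $[\Skip]\rightarrow\Skip$ steps reach $\Skip$ without changing memory or trace. This yields $\Mem''$ with $\lfloor\Mem''\rfloor=\Mem'$.
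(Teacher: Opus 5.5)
Your plan (simulate each unbracketed step by one or more augmented steps, iterate, then flush trailing $[\Skip]$'s) is what the paper's one-line proof (``by induction on each evaluation step'') amounts to. The gap is in the strengthened one-step lemma you induct on: it is false, because its conjunct $\Gamma,\pc\vdash c''$ is not preserved.

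Brackets are introduced semantically: a variable evaluates to $[n]$ as soon as $\fixedlab\LEQ_\strace\Gamma(x)$, which is a trace-dependent fact. By contrast, \ruleref{AugComm} demands the syntactic judgment $\fixedlab,\Gamma\vdash c_i$. Here is a concrete counterexample.
\begin{itemize}
\item Take the lattice $\Low\sqsubset\High$, $\Gamma(x)=\Low$, $\pc=\Low$ and $\fixedlab=\sevent{s}?\High\rightarrow_t\Low$.
\item Take a trace $\strace$ in which $\sevent{s}$ has already been turned off, and an unbracketed (hence well-formed) $\Mem$ with $\Mem(x)=\true$.
\item Take the well-typed command $c=\ifcmd{x}{\cod{eventon}(\sevent{t})}{\Skip}$.
\end{itemize}
Since $\fixedlab\LEQ_\strace\Low$, $x$ evaluates to $[\true]$. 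The step therefore yields $[\cod{eventon}(\sevent{t})]$, which is the only candidate for $c''$. Typing it needs $\vdash\fixedlab\flowsto\bot$. By Theorem~\ref{theorem:setsound} that would give $\fixedlab\LEQ_{\strace_0}\bot$ on every trace $\strace_0$, which is false on any trace where $\sevent{s}$ is never turned off.

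Lemma~\ref{lem:highexpr} only yields the semantic fact $\fixedlab\LEQ_\strace\lab_e$, never $\vdash\fixedlab\flowsto\lab_e$. So your claim that ``typing preservation uses \ruleref{AugComm}'' has no justification in the bracketed-guard case. The paper's Lemma~\ref{lemma:preservation} glosses over the same step, so citing it does not help. Your well-formedness conjunct depends on this typing invariant, both in the \ruleref{OS-Assign-b2}/\ruleref{OS-Relabel-r2} cases and in your ``main obstacle'' argument. It is also a hypothesis the statement does not have.

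The repair is to drop both invariants, since neither is needed for completeness. For an arbitrary augmented memory, every case split of the augmented semantics is exhaustive:
\begin{itemize}
\item the two variable rules;
\item \ruleref{OS-Assign-b1}--\ruleref{OS-Assign-b3};
\item \ruleref{OS-Relabel-r1}--\ruleref{OS-Relabel-r3} together with the failure rule;
\item output success and failure;
\item guards $b$ versus $[b]$.
\end{itemize}
Your expression lemma makes the underlying values agree with the unbracketed run, and all event checks read only the shared $\strace$. So prove the plain simulation (any $\Mem$, no typing, no well-formedness) by structural induction on the augmented command $c$. Do not induct on the unbracketed derivation, since wrapping in $[\cdot]$ leaves it unchanged. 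This also shows that the typing premise of the lemma is superfluous. Finally, emitted output events agree only up to erasing brackets, not literally.
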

\begin{proof}
    By induction on each evaluation step.
\end{proof}

\subsection{Useful Lemmas}
\label{sec:lemmas}
We first show that for any well-formed memory, the label of any expression $e$ that evaluates to a bracketed value $[n]$ must always be sensitive (i.e., $\fixedlab \LEQ_\strace \lab$ where $\lab$ is the label of $\expr$).
\newtheorem*{lemma*}{Lemma}
\begin{lemma}
\label{lem:highexpr}
Expression evaluation
    $$ \forall \Mem, \strace, e, \lab,\Gamma.~ \wellformmem{\strace}{\Mem} \land \Gamma\vdash e: \lab \land \exists n. \configs{\Mem,\expr}\Downarrow [n] \Rightarrow \ \fixedlab \LEQ_\strace \lab 
    $$
\end{lemma}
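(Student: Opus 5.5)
We prove the statement by induction on the derivation of $\Gamma\vdash e:\lab$, generalizing over $\lab$, using the augmented expression semantics of Figure~\ref{fig:bracket_semantics_arith} to decide which evaluation rule produced the bracketed value $[n]$. Throughout, we use that $\LEQ_\strace$ is reflexive and transitive. Both facts follow directly from Definition~\ref{def:partial_dynamic}, since $\LEQ$ on $\Lattice$ is a partial order and the quantification over extensions $\strace\preceq\strace'$ is pointwise. We also use Theorem~\ref{theorem:setsound} instantiated with $S=\emptyset$, which holds on every trace: $\vdash \lab\flowsto\lab'$ implies $\lab\LEQ_\strace\lab'$ for every $\strace$.

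The cases go as follows.
\begin{itemize}
\item \ruleref{Num}: A plain literal $n$ evaluates to the unbracketed $n$, so the premise $\configs{\Mem,\expr}\Downarrow[n]$ fails and the case is vacuous.
\item \ruleref{AugNum}: Here $e=[n]$ and $\lab=\fixedlab$, so $\fixedlab\LEQ_\strace\fixedlab$ holds by reflexivity.
\item \ruleref{Var}: Here $\lab=\Gamma(x)$. The rule yielding $[n]$ either has premise $\fixedlab\LEQ_\strace\Gamma(x)$, which gives the claim directly, or returns $\Mem(x)=[n]$ because $\Mem(x)$ is bracketed. In the latter case, $\wellformmem{\strace}{\Mem}$ (Definition~\ref{def:mem_well}) gives $\fixedlab\LEQ_\strace\Gamma(x)$.
\item \ruleref{SubType}: From $\Gamma\vdash e:\lab_0$ and $\vdash\lab_0\flowsto\lab$, the induction hypothesis gives $\fixedlab\LEQ_\strace\lab_0$. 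The soundness theorem gives $\lab_0\LEQ_\strace\lab$, and transitivity concludes.
\item \ruleref{Op}: Here $e=e_1\op e_2$ with $\Gamma\vdash e_1:\lab_1$, $\Gamma\vdash e_2:\lab_2$, $\vdash\lab_1\flowsto\lab_2$, and $\lab=\lab_2$. A bracketed result requires at least one operand to evaluate to a bracketed value. If $e_2$ does, the induction hypothesis gives $\fixedlab\LEQ_\strace\lab_2$ immediately. If $e_1$ does, we get $\fixedlab\LEQ_\strace\lab_1$, then lift $\lab_1\LEQ_\strace\lab_2$ via Theorem~\ref{theorem:setsound} with $S=\emptyset$, and conclude by transitivity.
\end{itemize}

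The main obstacle is the mismatch between the unindexed evaluation $\Downarrow$ in the statement and the trace-indexed evaluation $\Downarrow_\strace$ of the augmented semantics. We read the statement's $\Downarrow$ as $\Downarrow_\strace$ for the same $\strace$ that appears in $\wellformmem{\strace}{\Mem}$ and in the conclusion; otherwise the \ruleref{Var} case would not line up. The other point needing care is that the flows-to premises in \ruleref{Op} and \ruleref{SubType} carry the empty event set. This is exactly what lets Theorem~\ref{theorem:setsound} apply at an arbitrary $\strace$ without any assumption about which events hold.
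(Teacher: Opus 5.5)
Your proposal is correct and follows essentially the same argument as the paper: induction, with memory well-formedness handling bracketed variables and Theorem~\ref{theorem:setsound} at the empty event set lifting the $\vdash\cdot\flowsto\cdot$ premises of \ruleref{SubType} and \ruleref{Op} to $\LEQ_\strace$. The differences are small improvements in rigor. You induct on the typing derivation rather than structurally on $e$, so \ruleref{SubType} becomes its own case instead of being folded into each expression case. Your \ruleref{Var} case also covers the augmented rule that brackets an unbracketed $\Mem(x)$ when $\fixedlab\LEQ_\strace\Gamma(x)$, which the paper's proof glosses over.
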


\begin{proof}
By structural induction on $e$:
\begin{itemize}
    \item case $\expr=n$: vacuously true since $n$ evaluates to a value without brackets.

    \item case $\expr=[n]$: by the typing rules~\ruleref{AugNum} and \ruleref{SubType}, we have $\vdash \fixedlab \flowsto \lab$. Hence, we have $\fixedlab \LEQ_\strace \lab$ by Theorem~\ref{theorem:setsound}.
    
    \item case $\expr=x$: Due to the semantics, we have $\Mem(x)=[n]$.   
    Since $\Mem$ is well-formed, we have $\fixedlab \LEQ_\strace \Gamma(x)$. From Rules~\ruleref{Var}, \ruleref{SubType} and the assumption that $\Gamma\vdash x:\lab$, we have $\vdash \Gamma(x)\flowsto \lab$, which implies $\Gamma(x) \LEQ_\strace \lab$ by Theorem~\ref{theorem:setsound}. Hence, $B \LEQ_\strace \Gamma(x) \LEQ_\strace \lab$.
    
    \item case $\expr=\expr_1\:\op\:\expr_2$: From the assumption of Rule~\ruleref{Op}, we have $\vdash \expr_1:\lab_1$, $\vdash \expr_2:\lab_2$ and $ \vdash \lab_1\flowsto \lab_2$. By evaluation rules, at least one of $\expr_1$ and $\expr_2$ is evaluated to some $[\Mem]$. By the induction hypothesis, at least one of $\fixedlab \LEQ_\strace p_1$ and $\fixedlab \LEQ_\strace p_2$ is correct. The harder case is when $\fixedlab \LEQ_\strace p_1$. Since $ \vdash \lab_1\flowsto \lab_2$, we have $\lab_1 \LEQ_\strace \lab_2$ by Theorem~\ref{theorem:setsound}. Hence, $\fixedlab\LEQ_\strace p_2$ 
\end{itemize}
\end{proof}

We next show that if a program is well-typed under a program counter label $\pc$, then it is also well-typed under any $\pc'$ such that $\vdash \pc'\flowsto \pc$.
\begin{lemma}
PC subsumption
\label{lemma:pcsump}
$$\forall \pc, \pc', \Gamma, c.~\pc, \Gamma\vdash c \land \vdash \pc' \flowsto \pc \Rightarrow \pc',\Gamma\vdash c$$
\end{lemma}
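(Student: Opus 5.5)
We prove the statement by induction on the derivation of $\pc, \Gamma \vdash c$, with a case split on the last typing rule applied. The only property of the flows-to relation we need is that $\flowsto$ is transitive. Rule~\ruleref{Trans} gives this directly, since the premises here are of the form $\vdash \lab \flowsto \lab'$ with empty (or arbitrary) $S$. So whenever a premise reads $\vdash \pc \flowsto \lab$, we replace it by $\vdash \pc' \flowsto \lab$, obtained from $\vdash \pc' \flowsto \pc$ and $\vdash \pc \flowsto \lab$ via \ruleref{Trans}.

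The cases go as follows, in this order.

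\begin{itemize}
\item \ruleref{Skip} is immediate.

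\item \ruleref{Sequence} applies the induction hypothesis to both $c_1$ and $c_2$.

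\item \ruleref{Assign}, \ruleref{Output}, \ruleref{Output-Per} and \ruleref{Relabel} each mention $\pc$ only in a single premise of the form $\vdash \pc \flowsto \Gamma(x)$ or $\vdash \pc \flowsto \level$. That premise is rebuilt by transitivity as above. The remaining premises (expression typing, release-to and flows-to under the \cod{using} events, immutability and persistence of $x$) do not depend on $\pc$, so they carry over unchanged.

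\item \ruleref{EventOn} and \ruleref{EventOff} need $\vdash \pc' \flowsto \bot$. Transitivity gives this from $\vdash \pc' \flowsto \pc$ and $\vdash \pc \flowsto \bot$.

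\item \ruleref{If} and \ruleref{While} type the bodies under the guard label $\pc''$, not under $\pc$. The body derivations are reused verbatim, and only the side condition $\vdash \pc \flowsto \pc''$ must become $\vdash \pc' \flowsto \pc''$, again by \ruleref{Trans}. The induction hypothesis is not even needed here, because $\pc$ does not flow into the subderivations.
\end{itemize}

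The one case requiring care, and the main obstacle, is \ruleref{AugComm}. There the premises are $\pc, \Gamma \vdash c$ and $\fixedlab, \Gamma \vdash c$. We apply the induction hypothesis to the first premise to obtain $\pc', \Gamma \vdash c$, and keep the second premise unchanged. This is legitimate because the induction is on derivations: the first premise is a strict subderivation, so the hypothesis applies. The \ruleref{SubType} rule for expressions plays no role, since expression typing does not involve $\pc$.

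A subtle point is whether \ruleref{Trans} is really available with an empty fact set. The appendix version of \ruleref{Trans} has premises $\vdash \lab \flowsto \lab'$ and concludes $S \vdash \lab \flowsto \lab''$ for any $S$, so instantiating $S = \emptyset$ gives exactly the transitivity we use. If one prefers the version in Figure~\ref{fig:inference_rules}, it is the same rule with $S$ threaded through, and it works identically.
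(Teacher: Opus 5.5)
Your proposal is correct and matches the paper's proof. Both argue by induction on the typing derivation, rebuild each $\pc$-dependent side condition ($\vdash \pc \flowsto \Gamma(x)$, $\vdash \pc \flowsto \level$, $\vdash \pc \flowsto \bot$, and $\vdash \pc \flowsto \pc''$ for \ruleref{If}/\ruleref{While}, reusing the same guard label) via \ruleref{Trans}, and apply the induction hypothesis only in \ruleref{Sequence} and in the first premise of \ruleref{AugComm}, leaving $\fixedlab, \Gamma \vdash c$ unchanged.
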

\begin{proof}

By induction on the typing derivation $\pc, \Gamma \vdash c$:
\begin{itemize}
  \item case $pc,\Gamma\vdash \Skip$: Trivial.
  
  \item case $pc,\Gamma\vdash x:=e$: 
  From the typing rule~\ruleref{Assign}, we have $\vdash pc \flowsto \Gamma(x)$ and $\Gamma\vdash \expr: \Gamma(x)$. Since $\vdash \pc'\flowsto \pc$, we have $\vdash \pc'\flowsto \Gamma(x)$ by transitivity. So $\pc',\Gamma\vdash x:=e$.
  
  \item case $\pc,\Gamma\vdash c_1;c_2$:
  From typing rules, $\pc,\Gamma \vdash c_1$ and $\pc,\Gamma \vdash c_2$. With the induction hypothesis, we have $\pc',\Gamma \vdash c_1$ and $\pc',\Gamma \vdash c_2$. Hence, $\pc', \Gamma\vdash c_1;c_2$.
  
   \item case $\pc,\Gamma\vdash{\ifcmd{e}{c_1}{c_2}}$:
   From typing rules, we have $\Gamma\vdash e:\pc_1$, $\pc_1,\Gamma\vdash c_1$, $\pc_1, \Gamma\vdash c_2$, and $\vdash \pc\flowsto \pc_1$ for some $\pc_1$. With the relation $\vdash \pc'\flowsto \pc$, we know that $\vdash  \pc'\flowsto \pc_1$ still holds due to transitivity. Therefore, $\pc',\Gamma\vdash \ifcmd{e}{c_1}{c_2}$ stands under the same $\pc_1$.
   
    \item case $\pc,\Gamma\vdash \while{e}{c}$: From typing rules, we have $\Gamma\vdash e: \pc_1$, $pc_1, \Gamma\vdash c$ and $\vdash \pc \flowsto \pc_1$ for some $\pc_1$. With the relation $\vdash \pc'\flowsto \pc$, the condition $\vdash \pc'\flowsto \pc_1$ still holds due to transitivity. Therefore, $\pc',\Gamma\vdash\while{e}{c}$ stands under $\pc_1$.
    
    \item case $\pc, \Gamma\vdash x:=\relabelusing{e}{\lab_f}{L_t}{\sevent{s_0}\dots\sevent{s_k}}$:
   From the typing rule~\ruleref{Relabel} and the assumption, we have $\vdash \pc\flowsto\Gamma(x)$, $\Gamma\vdash e:\lab_f$, $\sevent{s_0}\dots\sevent{s_k} \vdash \level_t\flowsto \Gamma(x)$, and $\sevent{s_0}\dots\sevent{s_k}\vdash \lab_f\flowsto L_t$. Moreover, by $\vdash \pc'\flowsto \pc$, we have $\vdash \pc'\flowsto \Gamma(x)$. The other conditions do not change under $\pc'$. Therefore, $\pc', \Gamma\vdash x:=\relabelusing{e}{\lab_f}{L_t}{\sevent{s_0}\dots\sevent{s_k}}$.
    
  \item case $\pc, \Gamma \vdash \outcmd{\level}{\expr}{\sevent{s_0}\dots\sevent{s_k}}$: From typing rules, we have $\Gamma\vdash e: \lab$, $\sevent{s_0}\dots\sevent{s_k}\vdash \lab \releaseto \level$ and $\vdash \pc\flowsto \level$. By $\vdash \pc'\flowsto \pc$, we have $\vdash \pc'\flowsto \level$. Therefore, $\pc',\Gamma\vdash\outcmd{\level}{e}{\sevent{s_0}\dots\sevent{s_k}}$ stands.
    
  \item case $\pc, \Gamma \vdash \outcmd{\level}{x}{\sevent{o}_\x, \sevent{s_1}\dots\sevent{s_k}}$: same as the case above.

  \item case $\pc, \Gamma \vdash \cod{eventon}(\sevent{s})$: From the typing rules, we have $\vdash \pc\flowsto\bot$. By $\vdash\pc'\flowsto\pc$, we have $\vdash \pc'\flowsto \bot$. Therefore, the rule stands.
  
  \item case $\pc, \Gamma \vdash \cod{eventoff}(\sevent{s})$: Same as the case above.
    
  \item case $\pc, \Gamma\vdash [c]$: From the typing rule, we have $\pc, \Gamma\vdash c$ and $\fixedlab, \Gamma\vdash c$. By the induction hypothesis and the assumption that $\vdash \pc'\flowsto \pc$, we have $\pc', \Gamma\vdash c$. Therefore, $\pc', \Gamma\vdash [c]$ stands.

\end{itemize}
\end{proof}


Next, we show that for any expression that evaluates to a value under some memory $m_1$, it must evaluate to an equivalent value under any $m_2$ that is equivalent to $m_1$, under the equivalence relation in Figure~\ref{fig:equivalence_commands}.

\begin{lemma}
Expression preserves equivalence
\label{lemma:eppe}
$$\forall\Mem_1,\Mem_2,\strace, \expr, v_1, v_2.~ \Mem_1\approx_\strace \Mem_2\land \configs{\Mem_1,\expr}\Downarrow v_1\Rightarrow \exists v_2. \configs{\Mem_2,e}\Downarrow v_2 \land v_1\approx v_2$$
\end{lemma}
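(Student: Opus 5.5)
We argue by structural induction on the expression $\expr$, following the case structure of the augmented expression semantics in Figure~\ref{fig:bracket_semantics_arith}. The statement is purely about evaluation, so no typing information is needed. In each case we construct the evaluation of $\expr$ under $\Mem_2$ and check that the resulting value is related to $v_1$ by the value equivalence of Figure~\ref{fig:equivalence_commands}. That relation holds exactly when the two values are the same unbracketed $n$, or are both bracketed.

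\textbf{Base cases.} For a constant $n$, both memories evaluate to $n$, and $n\approx n$. For a bracketed constant $[n]$, both evaluate to $[n]$, and $[n]\approx[n]$.

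For a variable $x$ we split on $\fixedlab\LEQ_\strace\Gamma(x)$. Importantly, this condition depends only on $\Gamma$ and $\strace$, not on the memory, so the same augmented rule family applies in both runs.
\begin{itemize}
\item If $\fixedlab\LEQ_\strace\Gamma(x)$, then the rules yield a bracketed value in both runs. Either $\Mem_i(x)$ is already bracketed, or the third rule wraps it. Hence $v_1=[n_1]$ and $v_2=[n_2]$ for some $n_1,n_2$, and $[n_1]\approx[n_2]$.
\item If $\fixedlab\not\LEQ_\strace\Gamma(x)$, then $v_i=\Mem_i(x)$. By the definition of $\Mem_1\approx_\strace\Mem_2$, we have $\Mem_1(x)\approx\Mem_2(x)$, which is exactly what is required.
\end{itemize}

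\textbf{Binary operations.} For $\expr_1\;\op\;\expr_2$, the induction hypothesis gives $v_{1,k}\approx v_{2,k}$ for $k=1,2$.
\begin{itemize}
\item If either $v_{1,1}$ or $v_{1,2}$ is bracketed, the corresponding $v_{2,k}$ is also bracketed, since $\approx$ never relates a bracketed value to an unbracketed one. So both results are bracketed by one of the three bracket-propagating rules, and are related.
\item Otherwise both pairs consist of identical unbracketed numbers, so both runs compute the same $n_1\;\op\;n_2$.
\end{itemize}

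\textbf{Expected obstacle.} The main subtlety is the variable case. We need that the wrapping rules of Figure~\ref{fig:bracket_semantics_arith} use the trace-dependent predicate $\fixedlab\LEQ_\strace\Gamma(x)$ \emph{uniformly} across both runs, so that the augmented semantics never produces an unbracketed value in one run and a bracketed one in the other.

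A related point concerns the standard binary-operation rule, which the figure does not list explicitly. We must take it to apply only when both operands are unbracketed, so that evaluation is deterministic and the case split above is exhaustive.

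Evaluation is total for the expressions considered here. This gives the existence of $v_2$ immediately from the constructions in each case.
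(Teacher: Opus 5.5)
Your proposal is correct and takes essentially the same route as the paper. It uses structural induction on $\expr$, splits the variable case on whether $\fixedlab\LEQ_\strace\Gamma(x)$, and splits the operator case on whether some operand evaluates to a bracketed value. Your remarks that the wrapping predicate depends only on $\Gamma$ and $\strace$, and that the plain operator rule applies only to unbracketed operands, make explicit what the paper leaves to ``from the semantics.''
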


\begin{proof} By induction on the structure of expression $e$.
\begin{itemize}
    \item case $\expr=n$. We have $\configs{\Mem,e}\Downarrow n$ for both $\Mem_1$ and $\Mem_2$. So $v_1=v_2=n$.

    \item case $\expr=[n]$. Since we have $\configs{\Mem,\expr}\Downarrow [n]$ for both $\Mem_1$ and $\Mem_2$, so $v_1=v_2=[n]$.
    
    \item case $\expr=x$. When $\fixedlab \not\LEQ_\strace \Gamma(x)$, we have $v_1=\Mem_1(x)\approx \Mem_2(x)=v_2$ from the assumption that $\Mem_1\approx_\strace \Mem_2$. Otherwise, we have $v_1=[n_1]\approx [n_2]=v_2$ for some $n_1$ and $n_2$ from the semantics.

    \item case $\expr = \expr_1$ op $\expr_2$. By the semantics, $\configs{\Mem_1,\expr_1}\Downarrow v_3$, $\configs{\Mem_2,\expr_1}\Downarrow v_3'$,
    $\configs{\Mem_1,\expr_2}\Downarrow v_4$, $\configs{\Mem_2,\expr_2}\Downarrow v_4'$. 
    \begin{itemize}
        \item If $v_3$, $v_4$ have no brackets, $v_3= v_3'$ and $v_4= v_4'$ by the induction hypothesis. Further, $v_1= v_2$.
        
        \item When one of $v_3$ or $v_4$ is a bracketed value, $v_3'$ or $v_4'$ are bracketed value by the induction hypothesis. From the semantics, we know $e$ has to be evaluated to bracketed values under both memories $\Mem_1$ and $\Mem_2$. So $v_1\approx v_2$.
\end{itemize}
\end{itemize}
\end{proof}

Next, we show that the partial ordering on dynamic release labels is preserved over any extension of the current event. Intuitively, this property is important to show that after any changes to the event trace, with $\cod{eventon}$ and $\cod{eventoff}$ commands, the ``protected'' values with brackets are still stored in ``protected'' memory locations (Lemma~\ref{lemma:preservation}, Preservation).

\begin{lemma*} Trace Preservation, Lemma~\ref{lemma:trace_preservation}
$$ \forall \lab,\lab',\strace.~\lab \LEQ_\strace\lab' \Rightarrow (\forall \strace'.~\strace\preceq\strace'\Rightarrow  \lab\LEQ_{\strace'}\lab')$$
\end{lemma*}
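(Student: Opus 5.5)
The statement follows directly from Definition~\ref{def:partial_dynamic} together with the transitivity of the prefix (extension) relation $\preceq$ on event traces. I would not induct on labels or on the flows-to rules, because the ordering $\LEQ_\strace$ is defined semantically, by quantifying over all extensions of $\strace$.

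The plan runs as follows. Fix $\lab$, $\lab'$, and $\strace$ with $\lab \LEQ_\strace \lab'$. Unfolding Definition~\ref{def:partial_dynamic} gives the hypothesis $\forall \strace''.~\strace\preceq\strace'' \Rightarrow \auxfuncold{\lab}{\strace''}\LEQ\auxfuncold{\lab'}{\strace''}$. Now take an arbitrary $\strace'$ with $\strace\preceq\strace'$. To show $\lab\LEQ_{\strace'}\lab'$, we unfold the definition again and must prove $\auxfuncold{\lab}{\strace''}\LEQ\auxfuncold{\lab'}{\strace''}$ for every $\strace''$ with $\strace'\preceq\strace''$. For such a $\strace''$, transitivity of $\preceq$ (prefix of a prefix is a prefix) yields $\strace\preceq\strace''$. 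Instantiating the hypothesis at $\strace''$ then gives the required inequality directly.

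The only point that needs care is confirming that $\preceq$ on security event traces is the ordinary prefix order, as it is used in Definitions~\ref{def:Indistinguishability_appendix} and~\ref{def:partial_dynamic}, so that it is indeed transitive. One should also note that nothing depends on the shape of the labels: in particular, nesting and the transient/persistent distinction play no role, since $\auxfuncold{\cdot}{\cdot}$ is only ever evaluated pointwise. I expect no real obstacle here. The lemma is essentially the observation that the set of extensions of $\strace'$ is contained in the set of extensions of $\strace$, so a universally quantified property over the larger set restricts to the smaller one.
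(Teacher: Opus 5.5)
Your proof is correct and is the same argument the paper gives. The paper also unfolds Definition~\ref{def:partial_dynamic} at $\strace'$, notes that every extension $\strace''$ of $\strace'$ is also an extension of $\strace$, and instantiates the hypothesis $\lab \LEQ_\strace \lab'$ at $\strace''$.
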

\begin{proof}
Consider any extension of $\strace$, denoted as $\strace'$. By Definition~\ref{def:partial_dynamic}, it is sufficient to show that $\forall \strace''.~\strace'\preceq\strace''\Rightarrow \auxfuncold{\lab}{\strace''} \LEQ \auxfuncold{\lab'}{\strace''}$. Note that any such $\strace''$ is also an extension of $\strace$. Hence, $\auxfuncold{\lab}{\strace''} \LEQ \auxfuncold{\lab'}{\strace''}$ by the definition of $\lab \LEQ_\strace\lab'$.

\end{proof}

\subsection{Preservation}

We prove that program evaluation under the augmented semantics preserves typing and well-formedness during evaluation.

\begin{lemma}
Preservation
$$\forall c, c', \strace, \strace', \Mem, \Mem', \pc, \Gamma.~\wellformmem{\strace}{\Mem} \land \pc, \Gamma\vdash c \land \configs{\Mem,c,\strace} \rightarrow \configs{\Mem',c',\strace'}\Rightarrow \wellformmem{\strace'}{\Mem'} \land \pc,\Gamma\vdash c'$$
\label{lemma:preservation}
\end{lemma}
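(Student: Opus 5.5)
We argue by induction on the derivation of the small step $\configs{\Mem,c,\strace} \rightarrow \configs{\Mem',c',\strace'}$, with case analysis on the last evaluation rule of the augmented semantics (Figures~\ref{fig:command_semantics} and~\ref{fig:bracket_assign_semantics}). Each case has two obligations: well-formedness $\wellformmem{\strace'}{\Mem'}$ and typing $\pc,\Gamma\vdash c'$.

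\emph{Steps that leave the memory unchanged.} This covers \ruleref{OS-Skip}, if/while unfolding, eventon/eventoff, output (success or fail) and \ruleref{OS-Relabel-Fail}. Here $\Mem'=\Mem$ and either $\strace'=\strace$ or $\strace\preceq\strace'$.
\begin{itemize}
\item Well-formedness follows from Lemma~\ref{lemma:trace_preservation}. Every bracketed variable satisfies $\fixedlab\LEQ_\strace\Gamma(x)$, hence $\fixedlab\LEQ_{\strace'}\Gamma(x)$.
\item This is exactly why the indistinguishability set can only shrink, and it handles the trace growth caused by eventon, eventoff and the $\sevent{o}_x$ events of output.
\item Typing of $c'$ is read off the premises. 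For \ruleref{OS-While} we rebuild $\ifcmd{e}{(c;\while{e}{c})}{\Skip}$ under $\pc'$ and use Lemma~\ref{lemma:pcsump} to retype the inner while at $\pc'$ (since $\vdash\pc\flowsto\pc'$ holds only in one direction, we retype it directly at $\pc'$ via \ruleref{While} with reflexivity).
\item For a high branch that yields $[c_i]$, we need $\pc,\Gamma\vdash [c_i]$, that is both $\pc,\Gamma\vdash c_i$ and $\fixedlab,\Gamma\vdash c_i$.
\item We obtain these from $\pc',\Gamma\vdash c_i$ with $\vdash\pc\flowsto\pc'$ and, via Lemma~\ref{lem:highexpr}, $\vdash\fixedlab\flowsto\pc'$ through Lemma~\ref{lemma:pcsump}.
\end{itemize}

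\emph{This high-branch step is the main obstacle.} Lemma~\ref{lem:highexpr} only gives the semantic fact $\fixedlab\LEQ_\strace\pc'$, not a syntactic derivation $\vdash\fixedlab\flowsto\pc'$. I would first try to bridge this by typing bracketed commands semantically, or by strengthening \ruleref{AugComm}'s use. Failing that, I would generalise Lemma~\ref{lemma:pcsump} to a semantic hypothesis $\pc''\LEQ_\strace\pc$ carried along the trace; Lemma~\ref{lemma:trace_preservation} keeps this hypothesis valid as $\strace$ grows.

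\emph{Sequencing and bracket stepping.} These go by the induction hypothesis, plus \ruleref{Sequence} and \ruleref{AugComm}.

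\emph{Steps that change memory.} For assignments, \ruleref{OS-Assign-b1} and \ruleref{OS-Assign-b2} write a bracketed value. In b1 the premise gives $\fixedlab\LEQ_\strace\Gamma(x)$ directly. In b2, Lemma~\ref{lem:highexpr} gives $\fixedlab\LEQ_\strace\lab$ for the type $\lab=\Gamma(x)$ of $e$, again as required. \ruleref{OS-Assign-b3} writes an unbracketed value, so nothing needs checking.

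For relabel, r1 and r3 are analogous. In r2, Lemma~\ref{lem:highexpr} gives $\fixedlab\LEQ_\strace\lab_f$. Because the guard events $\sevent{s_0}\dots\sevent{s_k}$ hold on $\strace$, Theorem~\ref{theorem:setsound} turns the premises $\lab_f\flowsto\level_t$ and $\level_t\flowsto\Gamma(x)$ into $\lab_f\LEQ_\strace\level_t\LEQ_\strace\Gamma(x)$. By transitivity $\fixedlab\LEQ_\strace\Gamma(x)$, so the updated memory is well-formed. In all memory-changing cases the residual command is $\Skip$, which is typable.
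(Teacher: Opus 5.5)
Your case analysis matches the paper's case for case. Trace-growing steps go through Lemma~\ref{lemma:trace_preservation}. \ruleref{OS-Assign-b1}, \ruleref{OS-Assign-b2} and the relabel cases go through the rule premises, Lemma~\ref{lem:highexpr} and Theorem~\ref{theorem:setsound}. The branching cases go through Lemma~\ref{lemma:pcsump}. The only place you depart from the paper is the bracketed branch, and your diagnosis there is right. The paper just writes ``we have $\vdash\fixedlab\flowsto\lab_e$ according to Lemma~\ref{lem:highexpr}'', but that lemma only gives $\fixedlab\LEQ_\strace\lab_e$.

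\textbf{The gap.} You leave this case open, and it cannot be closed: with \ruleref{AugComm} as stated, the lemma is false.
\begin{itemize}
\item Setup: take $\Low\sqsubset\High$, $\Gamma(x)=\Gamma(y)=\Low$, $\fixedlab=\sevent{s}?\High\rightarrow\Low$ and $\strace=\neg\sevent{s}$. Use an unbracketed, hence well-formed, memory with $\Mem(x)=\true$. The command $\ifcmd{x}{y:=0}{\Skip}$ is typable at $\pc=\Low$.
\item The step: $\sevent{s}$ has already turned false, so $\fixedlab\LEQ_\strace\Low=\Gamma(x)$. The augmented semantics therefore evaluates $x$ to $[\true]$, and the command steps to $[y:=0]$.
\item The failure: typing $[y:=0]$ by \ruleref{AugComm} requires $\fixedlab,\Gamma\vdash y:=0$, i.e.\ $\vdash\fixedlab\flowsto\Low$ with no event facts. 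That is not derivable. If it were, Theorem~\ref{theorem:setsound} applied at the empty trace would give $\High\LEQ\Low$, since $\fixedlab$ still evaluates to $\High$ there.
\end{itemize}

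\textbf{Your two fallbacks.} The second one fails on the same example. For the existing judgment, a semantic pc-subsumption (``$\pc,\Gamma\vdash c$ and $\pc''\LEQ_\strace\pc$ imply $\pc'',\Gamma\vdash c$'') breaks with $\pc=\Low$, $\pc''=\fixedlab$ and $c$ the assignment $y:=0$. So the typing of bracketed commands itself has to change, which is your first option. Replace the second premise of \ruleref{AugComm} by $\exists\pc''.~\pc'',\Gamma\vdash c\land\fixedlab\LEQ_\strace\pc''$, so that this judgment is indexed by the current trace. Then:
\begin{itemize}
\item Bracketed branch: take $\pc''$ to be the branch label $\pc'$ of \ruleref{If}. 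The typing comes from the rule's premise, the side condition from Lemma~\ref{lem:highexpr}, and $\pc,\Gamma\vdash c_i$ still from Lemma~\ref{lemma:pcsump}.
\item Step $[c]\to[c']$: use the induction hypothesis for $c'$, and Lemma~\ref{lemma:trace_preservation} to carry $\fixedlab\LEQ_\strace\pc''$ over to $\strace'$.
\item Downstream: Lemma~\ref{lemma:high_main} still goes through. It only ever uses the semantic consequence $\fixedlab\LEQ_\strace X$ of $\vdash\fixedlab\flowsto X$, for $X=\Gamma(x)$, $\level'$ or $\bot$. You now get that from $\vdash\pc''\flowsto X$, Theorem~\ref{theorem:setsound} with $S=\emptyset$, and transitivity of $\LEQ_\strace$.
\end{itemize}

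\textbf{A smaller point.} Your while case, like the paper's derivation tree, needs $\vdash\pc'\flowsto\pc'$. Reflexivity is not among the listed flows-to rules for dynamic labels, so state it explicitly as an assumption.
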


\begin{proof}
By induction on evaluation rules of $\configs{\Mem,c,\strace} \rightarrow \configs{\Mem',c', \strace'}$.
\begin{itemize}
    \item case $\configThree{\Mem}{[\Skip]}{\strace}\rightarrow\configThree{\Mem}{\Skip}{\strace}$: Trivial, since $\Mem'$ and $\strace'$ remain the same and $\Skip$ type checks under any $\Gamma, \pc$.
    
    \item case $\configThree{\Mem}{[c]}{\strace}\rightarrow \configThree{\Mem'}{[c']}{\strace'}$:
    From the evaluation rule, we have $\configThree{\Mem}{c}{\strace}\rightarrow\configThree{\Mem'}{c'}{\strace'}$. 
    From the typing rule~\ruleref{AugComm}, we know $\pc',\Gamma\vdash c$ and $\fixedlab,\Gamma\vdash c$. By $\pc',\Gamma\vdash c$ and the assumption that $\wellformmem{\strace}{\Mem}$, we have $\pc',\Gamma\vdash c'$ and $\wellformmem{\strace'}{\Mem'}$
    by the induction hypothesis. Similarly, from $\fixedlab,\Gamma\vdash c$, we also have $\fixedlab,\Gamma\vdash c'$. Hence, we can derive $\pc,\Gamma\vdash [c']$. 
  
    \item case $\configThree{\Mem}{x:=e}{\strace} \rightarrow \configThree{\Mem[x:=n]}{\Skip}{\strace} $:
    We have $\pc,\Gamma \vdash \Skip$ by typing rule~\ruleref{$\Skip$}. From the evaluation rule of $\configThree{\Mem}{x:=e}{\strace} \rightarrow \configThree{\Mem[x:=n]}{\Skip}{\strace}$, we know that $\fixedlab \not\LEQ_\strace \Gamma(x)$, and $\configs{\Mem,\expr}\Downarrow n$.  
    With $\wellformmem{\strace}{\Mem}$, we know $\Mem'$ is still well-formed as the command sets $x$ to $n$ without brackets.
  
    \item case $\configThree{\Mem}{x:=e}{\strace} \rightarrow \configThree{\Mem[x:=[n]]}{\Skip}{\strace} $:  We have $pc,\Gamma \vdash \Skip$ by typing rule~\ruleref{Skip}. Next, we show that $\Mem[x:=[n]]$ is well-formed. From typing rule~\ruleref{Assign}, we have $\Gamma \vdash e:\Gamma(x)$. From the evaluation rules with $\configThree{\Mem} {x:=e}{\strace} \rightarrow \configThree{\Mem[x:=[n]]}{\Skip}{\strace} $, there are two possibilities:
    \begin{itemize}
      \item Rule~\ruleref{OS-Assign-b1}: $\Mem[x:=[n]]$ is still well-formed under trace $\strace$ since we have $\fixedlab \LEQ_\strace \Gamma(x)$ in the rule assumption. Hence, $x$ can hold bracketed values. For other variables, we still have $\wellformmem{\strace}{\Mem'}$ as the assignment does not change any values other than $x$'s.

      \item Rule~\ruleref{OS-Assign-b2}: by rule assumption, $\configs{\Mem, \expr} \Downarrow [n]$ for some $n$.
      From the typing rule, we have $\Gamma\vdash e: \Gamma(x)$.
      So by Lemma~\ref{lem:highexpr}, we have 
      $\fixedlab \LEQ_\strace \Gamma(x)$.
      Hence, $x$ can hold bracketed values. For other variables, we still have $\wellformmem{\strace}{\Mem'}$ as the assignment does not change any values other than $x$'s.

  \end{itemize}
  
  \item case $\configThree{\Mem}{c_1;c_2}{\strace}\rightarrow\configThree{\Mem'}{c_1';c_2}{\strace'}$:
  From the evaluation rule we have $\configThree{\Mem}{c_1}{\strace}\rightarrow\configThree{\Mem'}{c_1'}{\strace'}$. From the typing rule~\ruleref{Seq}, we have $pc, \Gamma\vdash c_1$ and $\pc, \Gamma\vdash c_2$. By the induction hypothesis, we get $\wellformmem{\Mem'}{\strace'}$ and $\pc,\Gamma\vdash c_1'$. Therefore we can derive $\pc,\Gamma\vdash c_1';c_2$.
  
  \item case $\configThree{\Mem}{\Skip;c}{\strace}\rightarrow\configThree{\Mem}{c}{\strace}$: From typing rule~\ruleref{Seq}, we have $\pc, \Gamma \vdash c$. We can also derive that $\wellformmem{\Mem}{\strace}$ as both $\Mem$ and $\strace$ do not change.
  
  \item case $\configThree{\Mem}{{\ifcmd{\expr}{c_1}{c_2}}}{\strace}\rightarrow \configThree{\Mem}{c_1}{\strace}$:
  From the evaluation rule, we know that $\configs{\Mem,\expr}\Downarrow \true$.
  From the typing rule~\ruleref{If}, we have $\Gamma\vdash e:\lab_e$, $\lab_e,\Gamma\vdash c_1$, and $\vdash pc\flowsto \lab_e$ for some $\lab_e$. With $\pc\flowsto \lab_e$, $\lab_e,\Gamma\vdash c_1$ and Lemma~\ref{lemma:pcsump}, we have $\pc,\Gamma\vdash c_1$. Therefore, the type of the true branch is preserved. Moreover, $\Mem$ is still well-formed as both $\Mem$ and $\strace$ do not change.
  
  \item case $\configThree{\Mem}{{\ifcmd{\expr}{c_1}{c_2}}}{\strace}\rightarrow \configThree{\Mem}{c_2}{\strace}$: similar to the case above.
  
  \item case $\configThree{\Mem}{{\ifcmd{\expr}{c_1}{c_2}}}{\strace}\rightarrow \configThree{\Mem}{[c_1]}{\strace}$:
  From the evaluation rule, we know that $\configs{\Mem,\expr}\Downarrow [\true]$. 
  From the typing rule~\ruleref{If}, we have $\Gamma\vdash e:\lab_e$, $\lab_e,\Gamma\vdash c_1$, and $\vdash pc\flowsto \lab_e$ for some $\lab_e$.

  Since $\expr$ evaluates to a value with brackets, we have $\vdash \fixedlab \flowsto \lab_e$ according to~Lemma~\ref{lem:highexpr}. By Lemma~\ref{lemma:pcsump}, we have $\pc, \Gamma \vdash c_1$ and $\fixedlab, \Gamma \vdash c_1$.  
  Hence, we can type check $[c_1]$ under $\pc$ and $\Gamma$ as follows:
\[
\inferrule[]
  { \pc, \Gamma\vdash c_1 \\ \fixedlab, \Gamma\vdash c_1}
  {\pc, \Gamma\vdash [c_1]} 
\]
  
  $\Mem$ is still well-formed as both $\Mem$ and $\strace$ do not change.
  
  \item case $\configThree{\Mem}{{\ifcmd{\expr}{\command_1}{\command_2}}}{\strace}\rightarrow \configThree{\Mem}{[\command_2]}{\strace}$: similar to the case above.

  \item case $\configThree{\Mem}{\while{\expr}{\command}}{\strace}\rightarrow \configThree{\Mem}{{\ifcmd{e}{\command;\;(\while{\expr}{\command)}}{\Skip}}}{\strace}$:\\
  From the typing rule~\ruleref{While}, we have three assumptions, $A=\Gamma\vdash \expr:\lab_\expr$, {$B=\pc\flowsto\lab_\expr$} and $C=\lab_\expr, \Gamma\vdash \command$ for some $\lab_\expr$. 
  Therefore, the program after evaluation on the right hand side must look like this:
\[
\inferrule* 
  {\inferrule* {A}
    {\Gamma\vdash \expr: \lab_\expr}
    \\{\inferrule* {B}
    {\vdash \pc\flowsto \lab_\expr}}\\   {\inferrule* {{\inferrule* {C}
    {\lab_\expr, \Gamma\vdash c}}\\{\inferrule* {{\inferrule* {A}
    {\Gamma\vdash \expr: \lab_\expr}} \\{\inferrule* {B}
    {\vdash \pc \flowsto \lab_\expr}} \\{\inferrule* {C}
    {\lab_\expr, \Gamma\vdash c}}}
    {\lab_\expr,\Gamma\vdash \while{\expr}{c}}
    }}
    {\lab_e,\Gamma\vdash c; \while{\expr}{c}}
    }\\{\inferrule* { }
    {\lab_e,\Gamma\vdash \Skip}
    }
    }
    {\pc, \Gamma\vdash\ifcmd{\expr}{\command;\;(\while{\expr}{\command})}{\Skip} }
\]

 $\Mem$ is still well-formed as both $\Mem$ and $\strace$ do not change.

 \item case: $\configThree{\Mem}{\cod{eventon}(\sevent{s})}{\strace}\rightarrow \configThree{\Mem} {\Skip}{\strace\cdot\sevent{s}}$:  We have $\pc,\Gamma \vdash \Skip$ by typing rule~\ruleref{$\Skip$}.
 
 For each variable $x$, we have $\Mem'(x)=\Mem(x)$ as the command only changes security event status. Hence, 
 \begin{itemize}
     \item If $\exists~n.~\Mem'(x)=[n]$, we know that $\Mem(x)=[n]$ as well. Since $\wellformmem{\strace}{\Mem}$, we know that $\fixedlab \LEQ_{\strace}\Gamma (x)$.   By Lemma~\ref{lemma:trace_preservation} (Trace Preservation), we have $\fixedlab \LEQ_{\strace\cdot\sevent{s}}\Gamma (x)$. Hence, $\wellformmem{\strace \cdot \sevent{s}}{\Mem'}$.

    \item If $\exists~n.~\Mem'(x)=n$, we know that it can be hold in any variable according to Definition~\ref{def:mem_well} .
 \end{itemize}

 \item case $\configThree{\Mem}{\cod{eventoff}(\sevent{s})}{\strace}\rightarrow \configThree{\Mem} {\Skip}{\strace\cdot\neg\sevent{s}}$: Same as the \cod{eventon} case.
 
 \item case $\configThree{\Mem}{ x:=\relabelusing{e}{\lab_f}{\level_t}{\sevent{s_0}\dots\sevent{s_k}}}{\strace}\rightarrow \configThree{\Mem[x:=n]}{\Skip}{\strace}$:
  We have $pc,\Gamma \vdash \Skip$ by typing rule~\ruleref{$\Skip$}. Since $\strace$ does not change and only the value of $x$ changes in $\Mem$, $\Mem$ is still well-formed under $\strace$ for variables other than $x$. For $x$, we know that $n$ can be hold in any variable according to Definition~\ref{def:mem_well} .
  
  \item case $\configThree{\Mem}{x:=\relabelusing{e}{\lab_f}{\level_t}{\sevent{s_0}\dots\sevent{s_k}}}{\strace}\rightarrow \configThree{\Mem[x:=[n]]}{\Skip}{\strace}$: We have $pc,\Gamma \vdash \Skip$ by typing rule~\ruleref{$\Skip$}. Next, we show that $\wellformmem{\strace}{\Mem[x:=[n]]}$.
    {\begin{itemize}
      \item Rule~\ruleref{OS-Relabel-r1}: From the evaluation rule, we have that $\fixedlab \LEQ_{\strace}\Gamma (x)$. Hence, the variable $x$ can hold a bracketed value: $\Mem[x:=[n]]$ is still well-formed.
      
      \item Rule~\ruleref{OS-Relabel-r2}: From the evaluation rule, we have that $\configs{\Mem,e} \Downarrow [n]$ and $\sevent{s_0}\dots\sevent{s_k}$ holds on $\strace$. From Lemma~\ref{lem:highexpr}, we have $\vdash \fixedlab \flowsto \lab_f$, where $\Gamma \vdash e: \lab_f$, which implies $\fixedlab \LEQ_\strace \lab_f$ by Theorem~\ref{theorem:setsound}. 
      From the typing rule, we know that $\sevent{s_0}\dots\sevent{s_k} \vdash \level_t\flowsto \Gamma(x)$ and $\sevent{s_0}\dots\sevent{s_k}\vdash \lab_f \flowsto \level_t$. By Theorem~\ref{theorem:setsound}, we have $\level_t \LEQ_\strace \Gamma(x)$ and $\lab_f \LEQ_\strace \level_t$. Hence, we have $\fixedlab \LEQ_\strace \lab_f \LEQ_\strace \level_t \LEQ_\strace \Gamma(x)$ due to transitivity. So memory $\Mem'$ stays well-formed after setting $x$, whose label is at least as restrictive as $\fixedlab$, to $[n]$.
  \end{itemize}}
  
  \item case $\configThree{\Mem}{x:=\relabelusing{\expr}{\lab_f}{\level_t}{\sevent{s_0}\dots\sevent{s_k}}}{\strace}\rightarrow \configThree{\Mem}{\Skip}{\strace}$: This is a trivial case as $\Mem$ and $\strace$ remain the same and $\Skip$ can be type-checked under any environment.
  
  \item case ${\configThree{\Mem}{\outcmd{\level'}{e}{\sevent{s_0}\dots\sevent{s_k}}}{\strace} \xrightarrow[]{\configThree{\level'}{v_1}{\strace}}
\configThree{\Mem}{\Skip}{\strace'} }$: same to the case above when $\expr$ is not a variable $x$. Otherwise, we have $\strace'=\strace \cdot \sevent{o_x@\level}$. By Lemma~\ref{lemma:trace_preservation} (Trace Preservation), we know for any $x$ such that $\fixedlab\LEQ_\strace \Gamma(x)$, it must be true that $\fixedlab \LEQ_{\strace\cdot \sevent{o_x@\level}}\Gamma (x)$. Hence, they can still hold bracketed values. For other variables, they previously hold non-bracketed values due to the well-formed-memory assumption. Those values can be stored in any memory location. Therefore, $\wellformmem{\strace \cdot \sevent{s}}{\Mem'}$.

  \item case ${\configThree{\Mem}{\outcmd{\level'}{\expr}{\sevent{s_0}\dots\sevent{s_k}}}{\strace} \xrightarrow[]{}
\configThree{\Mem}{\Skip}{\strace} }$: same to the case above.
\end{itemize}

\end{proof}

\subsection{High Step}
Next, we show that whenever the sources is considered as secret ($\auxfuncold{\fixedlab}{\strace}\not\LEQ\fixedlevel$), a step of well-typed bracketed command (1) cannot leak any sensitive information to the output channel with attacker's level $\fixedlevel$, and (2) cannot emit new security events.

\begin{lemma}[High Step]
\label{lemma:high_main}
\begin{multline*}
 \forall\Mem, c, x, \pc, \strace, \Gamma.~\pc, \Gamma \vdash [c] \land 
\auxfuncold{\fixedlab}{\strace}\not\LEQ\fixedlevel \land
 \configThree{\Mem}{[c]}{\strace}\xrightarrow{\tau}\configThree{\Mem'}{c'}{\strace'} 
\Rightarrow
\Mem\approx_{\strace'} \Mem'\land |\proj{\trace}_\fixedlevel|=0 \land \strace=\strace'
\end{multline*}
\end{lemma}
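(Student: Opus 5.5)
The plan is induction on the structure of $c$ (equivalently, on the derivation of the step $\configThree{\Mem}{[c]}{\strace}\xrightarrow{\tau}\configThree{\Mem'}{c'}{\strace'}$). First I invert the typing. By \ruleref{AugComm}, $\pc,\Gamma\vdash [c]$ yields both $\pc,\Gamma\vdash c$ and $\fixedlab,\Gamma\vdash c$. I will use only the second judgment. A step of $[c]$ is either $[\Skip]\rightarrow\Skip$, which is trivial because nothing changes, or it comes from the bracket rule, so that $c$ itself steps to some $c''$ with the same $\Mem'$ and $\strace'$ and $c'=[c'']$. The work is therefore to show, for every $c$ with $\fixedlab,\Gamma\vdash c$, that one step of $c$ preserves $\strace$, emits nothing visible at $\fixedlevel$, and changes only variables $x$ with $\fixedlab\LEQ_{\strace'}\Gamma(x)$.

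\textbf{Event commands.} If $c$ is $\cod{eventon}(\sevent{s})$ or $\cod{eventoff}(\sevent{s})$, typing requires $\vdash\fixedlab\flowsto\bot$. By Theorem~\ref{theorem:setsound} this gives $\auxfuncold{\fixedlab}{\strace}\LEQ\bot\LEQ\fixedlevel$, which contradicts the hypothesis $\auxfuncold{\fixedlab}{\strace}\not\LEQ\fixedlevel$. So these commands cannot occur, and $\strace$ cannot be extended by them.

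\textbf{Outputs.} For \ruleref{Output} and \ruleref{Output-Per}, typing gives $\vdash\fixedlab\flowsto\level'$. By Theorem~\ref{theorem:setsound} this yields $\auxfuncold{\fixedlab}{\strace}\LEQ\level'$. If $\level'\LEQ\fixedlevel$ held, we would again contradict $\auxfuncold{\fixedlab}{\strace}\not\LEQ\fixedlevel$. Hence $\level'\not\LEQ\fixedlevel$, and the projection $\proj{\trace}_\fixedlevel$ is empty. The main obstacle is the claim $\strace=\strace'$. When the output succeeds on a variable, \ruleref{OS-Output-Succ} appends $\sevent{o}_\x\sevent{@\level'}$. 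I would first try to handle this by reading $\strace=\strace'$ modulo output-record events, since these do not affect the security events relevant to labels unless labels mention $\sevent{o}$. If that reading is not acceptable, the fallback is to weaken the conclusion to $\strace\preceq\strace'$ with only $\sevent{o}$ events appended, and to use Lemma~\ref{lemma:trace_preservation} for the memory part. I expect the authors' own proof may simply gloss over this point.

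\textbf{Assignments and relabel.} For $x:=\expr$, typing gives $\vdash\fixedlab\flowsto\Gamma(x)$, so $\fixedlab\LEQ_\strace\Gamma(x)$ by Theorem~\ref{theorem:setsound}. Since $\strace'=\strace$ here, $x$ lies outside the set that $\approx_{\strace'}$ constrains, and therefore $\Mem\approx_{\strace'}\Mem'$. The relabel case is identical, using the premise $\vdash\pc\flowsto\Gamma(x)$ with $\pc=\fixedlab$. A failed relabel changes nothing.

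\textbf{Remaining cases.} If, while, skip and sequence do not change $\Mem$ or $\strace$, so they are trivial; sequence is handled by the induction hypothesis on $c_1$. Nested brackets $[[c]]$ also follow from the induction hypothesis. Finally, $\Mem\approx_{\strace'}\Mem'$ uses the bracketed-value equivalence: in the augmented semantics a modified variable receives a bracketed value, and in any case it is not one that is constrained.
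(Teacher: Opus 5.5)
Your case analysis is the paper's own proof. You invert \ruleref{AugComm} to get $\fixedlab,\Gamma\vdash c$. You rule out \cod{eventon}/\cod{eventoff}, since $\vdash\fixedlab\flowsto\bot$ would give $\auxfuncold{\fixedlab}{\strace}\LEQ\fixedlevel$. You force every output to a level $\level'\not\LEQ\fixedlevel$, and you note that assignments and relabels only write to $x$ with $\fixedlab\LEQ_\strace\Gamma(x)$. Reducing the bracket rule to one step of a $c$ typed under $\fixedlab$ is just a slightly cleaner packaging of the same induction.

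Your suspicion about $\strace=\strace'$ is correct. For a successful output of a variable, \ruleref{OS-Output-Succ} appends $\sevent{o}_x\sevent{@\level'}$. The paper's proof simply asserts that ``the output command does not modify memory and execution trace'', even though its own Preservation proof records $\strace'=\strace\cdot\sevent{o_x@\level}$. As stated, the lemma is false in that subcase, so no proof of it exists.

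Where your proposal falls short is the claim that the discrepancy is harmless because labels do not mention $\sevent{o}$. Label interpretations are indeed insensitive to inserted $\sevent{o}$ events, but the trace is also read by the \cod{using} checks. The example fact set in Section~\ref{sec:static_rules} indicates that $\sevent{o}$ events may appear in ordinary \cod{using} clauses; if so, the discrepancy is a genuine implicit flow. Take $\Gamma(h)=\Gamma(x)=\High$ and the program $\ifcmd{h}{\outcmd{\High}{x}{\sevent{t}}}{\Skip};\ \outcmd{\Low}{5}{\sevent{o}_x\sevent{@\High}}$, with $\sevent{t}$ an event that holds. It type-checks under $\pc=\Low$ via \ruleref{If}, \ruleref{Sequence} and \ruleref{Output}. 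Yet it emits a $\Low$ output exactly when $h$ is true, which violates Definition~\ref{def:newpolicy} for $\fixedlab=\High$, $\fixedlevel=\Low$.

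So neither of your weakenings can be confined to this lemma. The conjunct $\strace=\strace'$ is exactly what the bracketed cases of the Unwinding lemma use to let $c_2$ reach the same trace. Both $\approx_\strace$ and the output events $\configThree{\level}{v}{\strace}$ are indexed by one shared trace. A real repair has two options:
\begin{itemize}
\item Use typing to keep $\sevent{o}$-recording outputs out of high contexts, as the \cod{eventon}/\cod{eventoff} rules do with $\vdash\pc\flowsto\bot$.
\item Restrict $\sevent{o}$ tests to \ruleref{Output-Per} at the matching level. Then an event $\sevent{o}_x\sevent{@\level'}$ with $\level'\not\LEQ\fixedlevel$, which is exactly what your output case produces, only gates outputs invisible at $\fixedlevel$. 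You would also have to replace $\strace=\strace'$ by equality modulo such events and carry that equivalence through Unwinding and the end-to-end theorem.
\end{itemize}
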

\begin{proof}

From the typing rule~\ruleref{AugComm}, we have $\fixedlab, \Gamma \vdash c$. We proceed by induction on the structure of $c$.
\begin{itemize}
  \item cases where $c$ is $(\Skip)$, $({\ifcmd{\expr}{c_1}{c_2}})$, $(\while{\expr}{c})$,~\ruleref{Output-Fail} and~\ruleref{Relabel-Fail} are trivial
  since those commands do no modify memory and execution trace.
  
  \item case where $c$ is $\assign{x}{\expr}$: by augmented bracket semantics for commands, we have that $\configThree{\Mem}{\assign{x}{\expr}}{\strace} \xrightarrow{\emptyset} \configThree{\Mem'}{\Skip}{\strace}$. Hence, it is easy to check that $|\proj{\trace}_\fixedlevel|=0 \land \strace=\strace'$.
  
  By the typing rule, we have that $\fixedlab, \Gamma \vdash \assign{x}{\expr}$. Via typing rule~\ruleref{Assign}, we have $\fixedlab \flowsto \Gamma(x)$. There are two cases for $\Mem'$:
   \begin{itemize}
       \item $\Mem[x:=[n]]$ via the evaluation rule~\ruleref{OS-Assign-B1} and~\ruleref{OS-Assign-B2}. Since $\fixedlab \flowsto \Gamma(x)$, we clearly have $\Mem\approx_\strace \Mem[x:=[n]]$ by definition. 
       
       \item $\Mem[x:=n]$ via the evaluation rule~\ruleref{OS-Assign-b3}. By that fact that $\fixedlab \flowsto \Gamma(x)$ and Theorem~\ref{theorem:setsound}, we can derive that $\fixedlab \LEQ_\strace \Gamma(x)$ which contradicts the assumption $\fixedlab \not\LEQ_\strace \Gamma(x)$ in Rule~\ruleref{OS-Assign-b3}. Hence, this case is not applicable.
   \end{itemize}
  
  \item case where $c$ is $c_1;c_2$: from evaluation rule, we have $\configThree{\Mem}{c_1}{\strace} \xrightarrow{\tau} \configThree{\Mem'}{ c_1'}{\strace'}$. By typing rule, we have that $\fixedlab, \Gamma\vdash c_1$ and $\fixedlab, \Gamma\vdash c_2$. So by typing rule~\ruleref{AugComm}, we have $\pc, \Gamma \vdash [c_1]$ and $\pc, \Gamma \vdash [c_2]$. Hence, the conclusion is correct by the induction hypothesis.

 \item case where $c$ is $\outcmd{\level'}{\expr}{\sevent{s_0}\dots\sevent{s_k}}$: there are two possible rules to type-check an output command: rule~\ruleref{Output} and rule~\ruleref{Output-Per}. In either case, we know that $\vdash\fixedlab \flowsto\level'$. By Theorem~\ref{theorem:setsound}, we have $\fixedlab \LEQ_\strace \level'$ which implies $\auxfuncold{\fixedlab}{\strace} \LEQ \level'$ by definition. 

 If $\level' \LEQ \fixedlevel$, where $\fixedlevel$ is the fixed attacker's level, we can derive $\auxfuncold{\fixedlab}{\strace} \LEQ \fixedlevel$, which contradicts the assumption that $\auxfuncold{\fixedlab}{\strace}\not\LEQ \fixedlevel$.
 
 Otherwise, $|\proj{\trace}_\fixedlevel|=0$ as the output is not visible to level $\fixedlevel$ (Definition~\ref{def:proj}). The other conclusions are trivial as the output command does not modify memory and execution trace.

  \item case where $c$ is $x:=\relabelusing{e}{\lab_f}{\level_t}{\sevent{s_0}\dots\sevent{s_k}}$: By typing rule~\ruleref{Relabel}, we know that  $\fixedlab \flowsto\Gamma(x)$.

  There are two cases for $m'$:
  \begin{itemize}
      \item $\Mem[x:=[n]]$ from evaluation rules~\ruleref{OS-Relabel-R1} and~\ruleref{OS-Relabel-R2}. Since $\fixedlab \flowsto \Gamma(x)$, we clearly have $\Mem\approx_\strace \Mem[x:=[n]]$ by definition. Moreover, we know that $\trace$ is empty since the command does not change outputs and security events.

      \item $\Mem[x:=n]$ from evaluation rule~\ruleref{OS-Relabel-R3}. By the fact that $\fixedlab \flowsto \Gamma(x)$ and Theorem~\ref{theorem:setsound}, we have $\fixedlab\LEQ_\strace\Gamma(x)$ which contradicts the assumption $\fixedlab \not\LEQ_\strace \Gamma(x)$ in Rule ~\ruleref{OS-Relabel-R3}. Hence, this case is not applicable.
  \end{itemize}
  
  \item case where $c$ is $\cod{eventon}(\sevent{s})$: by typing rule, we have $\fixedlab,\Gamma \vdash \cod{eventon}(\sevent{s})$ and therefore, $\fixedlab\flowsto \bot$. By Theorem~\ref{theorem:setsound}, we can derive that $\fixedlab \LEQ_\strace \bot$, which implies $\auxfuncold{\fixedlab}{\strace} \LEQ \bot\LEQ \fixedlevel$ by definition. This contradicts the assumption that $\auxfuncold{\fixedlab}{\strace}\not\LEQ \fixedlevel$. Hence, this rule is not applicable.
  
  \item case where $c$ is $\cod{eventoff}(\sevent{s})$: Similar as the case above.

\end{itemize}
\end{proof}

\subsection{Unwinding}
Finally, we need a lemma to show that equivalence and well-formedness is preserved in each step. That is, for each step taken by one of the two equivalent programs, either the other program takes a number of steps to reach the same program state, or the program diverges inside a bracket. Besides equivalence and well-formedness the trace of output channels and security events generated by the two memories is the same.

\begin{lemma*}
Unwinding, Lemma~\ref{lemma:unwinding_dynamicrelease}
\begin{multline*}
\forall \Mem_1, \Mem_2, c_1, c_2, \tau_1, \tau_2, \strace, x, \pc, \Gamma.~\wellformmem{\strace}{\Mem_1} \land \wellformmem{\strace}{\Mem_2} \land \Mem_1\approx_\strace \Mem_2 \land \\
(\command_1\neq \outcmd{\level'}{\x}{\sevent{o}_\x, \sevent{s_0\dots s_k}})
\land 
 \auxfuncold{\fixedlab}{\strace}\not\LEQ \fixedlevel \land
 c_1\approx c_2 \land 
\configThree{\Mem_1}{c_1}{\strace}\xrightarrow{\tau_1}\configThree{\Mem_1'}{c_1'}{\strace'}\Rightarrow \\
(\exists \Mem_2',c_2',\tau_2. \configThree{\Mem_2} {c_2}{\strace} \xrightarrow{\tau_2}^{*}\configThree{\Mem_2'}{c_2'}{\strace'}\land \Mem_1'\approx_{\strace'} \Mem_2'\land c_1'\approx c_2'\land \proj{\trace_1}_\fixedlevel=\proj{\trace_2}_\fixedlevel) 
\lor(\exists c.~c_2=[c] \text{ and } c \text{ diverges })
\end{multline*}
\end{lemma*}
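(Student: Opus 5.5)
I will prove the Unwinding lemma by induction on the derivation of the step $\configThree{\Mem_1}{c_1}{\strace}\xrightarrow{\tau_1}\configThree{\Mem_1'}{c_1'}{\strace'}$, with a case split on the shape of $c_1\approx c_2$ (Figure~\ref{fig:equivalence_commands}).

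\textbf{Bracketed commands.} When both $c_1=[c]$ and $c_2=[c'']$, Lemma~\ref{lemma:high_main} (High Step) applies to $c_1$, using $\auxfuncold{\fixedlab}{\strace}\not\LEQ\fixedlevel$. It yields $\Mem_1\approx_{\strace'}\Mem_1'$, an empty $\fixedlevel$-projection, and $\strace'=\strace$. Choose $c_2'=c_2$ (zero steps), so $\Mem_2'=\Mem_2$. The conclusion then follows from symmetry and transitivity of $\approx_\strace$, and from $[\cdot]\approx[\cdot]$ or $[\cdot]\approx\Skip$ after the bracket reduces. To let $c_2$ catch up, I will instead run $c_2$ to $\Skip$ by repeatedly applying High Step, again with trace unchanged and no visible output. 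If it never reaches $\Skip$, we are in the divergence disjunct.

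\textbf{Structural commands.} For unbracketed $c_1$, $c_2$ has the same shape, and I take exactly one matching step in $c_2$.

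\emph{Skip, sequence, while.} These follow directly, or from the induction hypothesis for the sequence case.

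\emph{If.} Lemma~\ref{lemma:eppe} gives equivalent guard values. Equal unbracketed values make both runs take the same branch. Bracketed values on both sides give $[c_i]\approx[c_j]$.

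\emph{Assign and relabel.} Both runs see the same $\strace$, so $\fixedlab\LEQ_\strace\Gamma(x)$ is decided identically in the two runs. The event conditions $\cod{eval}(\sevent{s_i},\strace)$ are likewise identical, so success or failure agrees. If $\fixedlab\not\LEQ_\strace\Gamma(x)$, I need the assigned values to be equal and unbracketed. Assume instead that $e$ evaluates to $[n]$ in either run. Lemma~\ref{lem:highexpr} gives $\fixedlab\LEQ_\strace\Gamma(x)$; for relabel the chain $\lab_f\LEQ_\strace\level_t\LEQ_\strace\Gamma(x)$ comes from Theorem~\ref{theorem:setsound}, as in Preservation. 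This contradicts the case assumption, so Lemma~\ref{lemma:eppe} gives equal values.

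\emph{Eventon and eventoff.} Both runs append the same event, so $\strace'$ agrees.

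\emph{Output.} This case is the one sketched in the main text. Excluding the $\sevent{o}_\x$ form forces rule \ruleref{Output}. Theorem~\ref{theorem:setsound_no_extend} together with Lemma~\ref{lem:highexpr} rules out bracketed values when $\level'\LEQ\fixedlevel$, so both runs emit identical events. Both runs also append the same $\sevent{o}_x$ event, so the resulting traces coincide.

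\textbf{Re-indexing memory equivalence.} In every case where $\strace'$ extends $\strace$, I must show $\Mem_1'\approx_{\strace'}\Mem_2'$ from equivalence at $\strace$. Lemma~\ref{lemma:trace_preservation} gives monotonicity: any $x$ with $\fixedlab\LEQ_\strace\Gamma(x)$ keeps $\fixedlab\LEQ_{\strace'}\Gamma(x)$. So the set of variables required to agree can only shrink.

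\textbf{Main obstacle.} I expect the hardest part to be the bracketed case where $c_2$ must catch up. Issues include well-formedness across the catch-up steps (handled via Preservation, Lemma~\ref{lemma:preservation}), aligning the resulting commands as $\approx$, and the termination-insensitive divergence escape. I will handle it by induction on the number of steps $c_2$ takes inside its bracket.
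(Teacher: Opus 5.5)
Your proposal is correct and follows the paper's proof essentially step for step. Like the paper, you use rule induction on the step of $c_1$ and Lemma~\ref{lemma:high_main} for bracketed commands (zero steps on $c_2$, or running $c_2$ to $\Skip$ with the divergence escape when $c_1=[\Skip]$), Lemma~\ref{lemma:eppe} for guards and right-hand sides, and Lemma~\ref{lem:highexpr} with Theorem~\ref{theorem:setsound_no_extend} to rule out bracketed values in a visible output. Your explicit re-indexing of $\approx_\strace$ to $\approx_{\strace'}$ via Lemma~\ref{lemma:trace_preservation} after events and variable outputs fills in a step the paper leaves implicit, while your typing argument excluding bracketed right-hand sides when $\fixedlab\not\LEQ_\strace\Gamma(x)$ is harmless but unnecessary, since $[n]\approx[n']$ already gives memory equivalence.
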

\begin{proof}
By rule induction on the step $\configs{\Mem_1,c_1,\strace} \xrightarrow{\trace_1} \configs{\Mem_1',c_1',\strace'}$.

\begin{itemize}
  \item case $\configThree{\Mem_1}{[c_3]}{\strace}\xrightarrow{\tau_1} \configThree{\Mem_1'}{[c_3']}{\strace'}$: \\
  From $c_1\approx c_2$, we know that $c_2=[c_4]$ for some command $c_4$. By Lemma~\ref{lemma:high_main} (High Step), we have that $\Mem_1'\approx_\strace \Mem_1 \approx_\strace \Mem_2$, $|\proj{\traceout{\tau_1}}_\fixedlevel|=0$ and $\strace'=\strace$. Hence, we select $c_2'$ to be the same as $c_2$ and make 0 steps on it. It is easy to check that all desired conditions hold.

  \item case $\configThree{\Mem_1}{[\Skip]}{\strace}\rightarrow\configThree{\Mem_1}{\Skip}{\strace'}$: \\
  From $c_1\approx c_2$, we know that $c_2=[c_4]$ for some command $c_4$. If $c_4$ diverges under $m_2$, then let $c=c_4$, we have $(\exists c.~c_2=[c] \text{ and $c$ diverges})$.

  If $c_4$ terminates, we know that there must be an execution trace like 
  \[\configThree{\Mem_2}{[c_4]}{\strace}\xrightarrow{\trace_2} \configThree{\Mem_2'}{[c_4']}{\strace_2'}\xrightarrow{} \cdots \configThree{\Mem_2''}{\Skip}{\strace_2''}\]
  
  By Lemma~\ref{lemma:high_main} (High Step), we have that $\Mem_2'\approx \Mem_2 \approx \Mem_1$, $|\proj{\traceout{\tau_2}}_\fixedlevel|=0$ and $\strace_2'=\strace$. We can repeat the same on the rest of the execution trace and derive that $\strace_2''=\strace \land \Mem_2''\approx_{\strace_2''} \Mem_2 \approx_{\strace_2''} \Mem_1$ and the whole trace produces no output events on the projection of $\fixedlevel$. So the first condition in the conclusion holds.

  \item case $\configThree{\Mem_1}{x:=e}{\strace}\xrightarrow{\emptyset}\configThree{\Mem_1[x:=n]}{\Skip}{\strace}$:\\
  From $c_1\approx c_2$, we know $c_2$ is $x:=\expr$. 
  From the evaluation rule, we know $\configs{\Mem_1,\expr}\Downarrow n$. From Lemma~\ref{lemma:eppe}, we have that $\configs{\Mem_2,e}\Downarrow n$. Since $\Mem_1\approx_{\strace} \Mem_2$, we have $\Mem_1'=\Mem_1[x:=n]\approx_{\strace} \Mem_2[x:=n]=\Mem_2'$ regardless the label of $x$. $\Skip\approx \Skip$ is trivial. The results on $\trace_1$ and $\trace_2$ are also trivial as assignments does not modify the trace.
  
  \item case $\configThree{\Mem_1}{x:=e}{\strace}\xrightarrow{\emptyset}\configThree{\Mem_1[x:=[n]]}{\Skip}{\strace}$:\\  
    From $c_1\approx c_2$, we know that $c_2$ is $x:=\expr$. There are two cases:
    \begin{itemize}
        \item (Evaluation rule~\ruleref{OS-Assign-b1})  We have $\fixedlab \LEQ_{\strace}\Gamma(x)$ from the assumption. Hence, $\configThree{\Mem_2}{x:=e}{\strace}\xrightarrow{\emptyset}\configThree{\Mem_2[x:=[n']]}{\Skip}{\strace}$ for some $n'$. Since $\Mem_1\approx_\strace \Mem_2$, we have that $\Mem_2[x:=[n']]\approx_\strace \Mem_1[x:=[n]]$. $\Skip\approx \Skip$ is trivial. The results on $\trace_1$ and $\trace_2$ are also trivial as they remain unchanged.
        
        \item (Evaluation rule~\ruleref{OS-Assign-b2}) We have $\configs{\Mem_1,e}\Downarrow [n]$ in this case. From Lemma~\ref{lemma:eppe}, we have that $\configs{\Mem_2,e}\Downarrow [n']$ for some $n'$. Hence, $\configThree{\Mem_2}{x:=e}{\strace}\xrightarrow{\emptyset}\configThree{\Mem_2[x:=[n']]}{\Skip}{\strace}$ for some $n'$. Since $\Mem_1\approx_\strace \Mem_2$, we have $\Mem_1[x:=[n]]\approx_\strace \Mem_2[x:=[n']]$. $\Skip\approx \Skip$ is trivial. The results on $\trace_1$ and $\trace_2$ are also trivial as they remain unchanged.
    \end{itemize}
  
  \item case $\configThree{\Mem_1}{c_3;c_4}{\strace}\xrightarrow{\tau_1}\configThree{\Mem_1'}{c_3';c_4}{\strace_1}$: \\
    Command $c_1$ has the form ($c_3;c_4$), so command $c_2$ must have the same form ($c_5;c_6$) where $c_3\approx c_5$ and $c_4\approx c_6$. According to the evaluation rule, we know that $\configThree{\Mem_1}{c_3}{\strace}\xrightarrow{\tau_1}\configThree{\Mem_1'}{c_3'}{\strace_1}$. By the induction hypothesis, we know from $c_3\approx c_5$ that either $\configThree{\Mem_2}{c_5}{\strace}$ diverges, or $\configThree{\Mem_2}{c_5}{\strace}\xrightarrow{\tau_2}^{*}\configThree{\Mem_2'}{c_5'}{\strace_2}$ where $\strace_2=\strace_1$, $\Mem_1'\approx_{\strace_1} \Mem_2'$ and $\proj{\trace_1}_\fixedlevel=\proj{\trace_2}_\fixedlevel$. In either case, we select $c_2'$ as $c_5';c_6$. In the first case, $c_2'$ diverges. In the second case,
    from the evaluation rule, we have $\configThree{\Mem_2}{c_5;c_6}{\strace_1}\xrightarrow{\tau_2}^{*}\configThree{\Mem_2'}{c_5';c_6}{\strace_2'}$ where $\Mem_1'\approx_{\strace_1} \Mem_2'$ and  $\proj{\trace_1}_\fixedlevel=\proj{\trace_2}_\fixedlevel$, and $\strace_1 = \strace_2$ from the induction hypothesis.
    
  \item case $\configThree{\Mem_1}{{\ifcmd{e}{c_3}{c_4}}}{\strace}\xrightarrow{\emptyset} \configThree{\Mem_1}{c_3}{\strace}$:\\
  As $c_1\approx c_2$, we know that $c_2$ has the form of $\ifcmd{e}{c_5}{c_6}$ where $c_3\approx c_5$, and $c_4\approx c_6$. From the evaluation rule, we know that $\configs{\Mem_1,e}\Downarrow \true$. By the assumption that $m_1\approx_\strace m_2$ and Lemma~\ref{lemma:eppe} (expression preserves), we have $\configs{\Mem_2,e}\Downarrow \true$. 
  Hence, we have that $\configThree{\Mem_2}{\ifcmd{e}{c_5}{c_6}}{\strace}\xrightarrow{\emptyset} \configThree{\Mem_2}{c_5}{\strace}$.
  Since $c_3\approx c_5$ and both traces under $\Mem_1$ and $\Mem_2$ are empty, we know that by selecting $c_2'$ to be $c_5$, all conditions in the conclusion holds.
  
  \item case $\configThree{\Mem_1}{\ifcmd{e}{c_3}{c_4}}{\strace}\xrightarrow{\emptyset} \configThree{\Mem_1}{c_4}{\strace}$: Similar to the case above.
  
  \item case $\configThree{\Mem_1}{{\ifcmd{e}{c_3}{c_4}}}{\strace}\xrightarrow{\emptyset} \configThree{\Mem_1}{[c_3]}{\strace}$:\\
    As $c_1\approx c_2$, we know that $c_2$ has the form of $\ifcmd{e}{c_5}{c_6}$ where $c_3\approx c_5$, and $c_4\approx c_6$. From the evaluation rule, we know that $\configs{\Mem_1,e}\Downarrow [\true]$. By the assumption that $m_1\approx_\strace m_2$ and Lemma~\ref{lemma:eppe} (expression equivalence), it must be true that $\configs{\Mem_2,e}\Downarrow [b]$ for some boolean value $b$.
    Without losing generality, we consider the case where $\configs{\Mem_2,e}\Downarrow [\false]$ so that $\configThree{\Mem_2}{\ifcmd{e}{c_5}{c_6}}{\strace}\xrightarrow{\emptyset} \configThree{\Mem_2} {[c_6]}{\strace}$ by the evaluation rule. Let $c_2'=[c_6]\approx [c_3]$, we have $\Mem_2'=\Mem_2\approx_\strace \Mem_1=\Mem_1'$, and the conditions on $\trace_1$ and $\trace_2$ are trivially true as both $c_1$ and $c_2$ produce empty traces.
    
  \item case $\configThree{\Mem_1}{{\ifcmd{\expr}{c_3}{c_4}}}{\strace}\xrightarrow{\emptyset} \configThree{\Mem_1}{[c_4]}{{\strace}}$: Similar to the case above.
  
  \item case $\configThree{\Mem_1}{\while{e}{c_3}}{\strace}\xrightarrow{\emptyset} \configThree{\Mem_1}{{\ifcmd{(e}{c_3;\;\while{e}{c_3)}}{\Skip}}}{\strace}$:\\
   Since $c_1\approx c_2$, we know that $c_2=\while{\expr}{c_4}$ where $c_3\approx c_4$. By the evaluation rule, we have $\configThree{\Mem_2}{\while{\expr}{c_4}}{\strace}\xrightarrow{\emptyset} \configThree{\Mem_2}{\ifcmd{\expr}{c_4;\;\while{\expr}{c_4}}{\Skip}}{\strace}$.
   Hence, the conclusion is trivial by selecting $c_2'$ as $\ifcmd{\expr}{c_4;\;\while{\expr}{c_4}}{\Skip}$.

  \item case $\configThree{\Mem_1}{x:=\relabelusing{\expr}{\lab_f}{\level_t}{\sevent{s_0}\dots\sevent{s_k}}}{\strace}\xrightarrow{\emptyset} \configThree{\Mem_1[x:=n]}{\Skip}{\strace}$:\\
  From $c_1\approx c_2$, we know $c_2$ has the same form as $c_1$. From the evaluation rule, we have $\configs{\Mem_1,\expr}\Downarrow n$. From Lemma~\ref{lemma:eppe} (expression equivalence) and the assumption that $\Mem_1\approx_\strace \Mem_2$, we know $\configs{\Mem_2,\expr}\Downarrow n$. Further, we have $\configThree{\Mem_2}{x:=\relabelusing{\expr}{\lab_f}{\level_t}{\sevent{s_0}\dots\sevent{s_k}}}{\strace}\xrightarrow{\emptyset} \configThree{\Mem_2[x:=n]}{\Skip}{\strace}$.  
  Hence, $\Mem_1[x:=n]\approx_\strace \Mem_2[x:=n]$ regardless of the label of $x$. $\Skip\approx \Skip$ is trivial. The results on $\trace_1$ and $\trace_2$ are also trivial as the relabel command does not generate any events.
  
  \item case $\configThree{\Mem_1}{x:=\relabelusing{\expr}{\lab_f}{\level_t}{\sevent{s_0}\dots\sevent{s_k}}}{\strace}\xrightarrow{\emptyset} \configThree{\Mem_1[x:=[n]]}{\Skip}{\strace}$:\\
  From $c_1\approx c_2$, we know $c_2$ has the same form as $c_1$. There are two corresponding evaluation rules~\ruleref{OS-Relabel-r1} and ~\ruleref{OS-Relabel-r2}. The proofs are the same as the cases of ~\ruleref{OS-Assign-b1} and ~\ruleref{OS-Assign-b2} respectively.
  
   \item case $\configThree{\Mem_1}{x:=\relabelusing{\expr}{\lab_f}{\level_t}{\sevent{s_0}\dots\sevent{s_k}}}{\strace}\xrightarrow{\emptyset} \configThree{\Mem_1}{\Skip}{\strace}$:\\ 
   From $c_1\approx c_2$, we know $c_2$ has the same form as $c_1$. Since both $c_1$ and $c_2$ are evaluated under the same $\strace$, we know that evaluating $c_2$ under $\strace$ also follows Rule~\ruleref{RelabelFail}, and hence,  $\configThree{\Mem_2}{ x:=\relabelusing{\expr}{\lab_f}{\level_t}{\sevent{s_0}\dots\sevent{s_k}}}{\strace} \xrightarrow{\emptyset} \configThree{\Mem_2}{\Skip}{\strace}$. $\Skip\approx\Skip$ is trivial. The results on $\trace_1$ and $\trace_2$ are also trivial as relabel does not modify the trace.
  
  \item case: ${\configThree{\Mem_1}{\outcmd{\level'}{\expr}{\sevent{s_0}\dots\sevent{s_k}}}{\strace} \xrightarrow[]{\configThree{\level'}{v_1}{\strace}} \configThree{\Mem_1}{\Skip}{\strace'}}$:\\
  
  Based on the assumption that  $c\neq \outcmd{\level'}{x}{\sevent{o}_\x, \sevent{s_0}\dots \sevent{s_k}}$, we know that only the typing rule~\ruleref{Output} is applicable in this case. 
  
  As $c_1\approx c_2$, we have $c_2=\outcmd{\level'}{\expr}{\sevent{s_0}\dots\sevent{s_k}}$ and $${\configThree{\Mem_2}{\outcmd{\level'}{\expr}{\sevent{s_0}\dots\sevent{s_k}}}{\strace} \xrightarrow[]{\configThree{\level'}{v_2}{\strace}} \configThree{\Mem_2}{\Skip}{\strace'} }$$ 
  It is easy to show that $\Skip\approx\Skip$ and $\Mem_1'=\Mem_1\approx_\strace' \Mem_2=\Mem_2'$. To prove $\proj{\trace_1}_\fixedlevel=\proj{\trace_2}_\fixedlevel$, there are two cases, $\level'\LEQ\fixedlevel$ and $\level'\not\LEQ\fixedlevel$.
  \begin{itemize}
      \item  $\level'\LEQ\fixedlevel$: We show that $\configs{\Mem,\expr}\Downarrow n$ for some $n$ without brackets in this case. We proceed by assuming that $\configs{\Mem,\expr}\Downarrow[n]$ and show contradiction.
      
      From typing rule~\ruleref{Output}, we know that $\Gamma\vdash \expr:\lab$, $\sevent{s_0}\dots\sevent{s_k}\vdash\lab\releaseto\level'$. From Lemma~\ref{lem:highexpr} and the assumption that $\configs{\Mem,\expr}\Downarrow[n]$, we have $\fixedlab \LEQ_\strace \lab$, which implies $\auxfuncold{B}{\strace}\LEQ\auxfuncold{\lab}{\strace}$ by definition. By Theorem~\ref{theorem:setsound_no_extend} and $\sevent{s_0}\dots\sevent{s_k}\vdash\lab\releaseto\level'$, we know $\auxfuncold{\lab}{\strace}\LEQ\level'$, which when combined with $\auxfuncold{B}{\strace}\LEQ\auxfuncold{\lab}{\strace}$ can derive $\auxfuncold{B}{\strace}\LEQ L'$. By the assumption that $\level'\LEQ \fixedlevel$, we get $\auxfuncold{B}{\strace}\LEQ \fixedlevel$, which contradicts the assumption that $\auxfuncold{B}{\strace}\not\LEQ \fixedlevel$ in the theorem statement.
      
      Hence, it must be true that $\configs{\Mem_1,e}\Downarrow n$ for some $n$. From Lemma~\ref{lemma:eppe}, we have $\configs{\Mem_2,\expr}\Downarrow n$ too. Hence, both output traces under $\Mem_1$ and $\Mem_2$ produce the same output event $\configs{L',n,\strace}=\configs{L',n,\strace}$. 
      
      \item  $\level'\not\LEQ\fixedlevel$: The projection traces on $\fixedlevel$ are empty $\proj{\traceout{\tau_1}}_{\fixedlevel}=\emptyset=\proj{\traceout{\tau_2}}_{\fixedlevel}$ by definition of $\proj{}_\fixedlevel$ (Definition~\ref{def:proj}). Also, the security event traces are still identical has they remain the same.
  \end{itemize}
  
   \item case ${\configThree{\Mem_1}{\outcmd{\level'}{\expr}{\sevent{s_0}\dots\sevent{s_k}}}{\strace} \xrightarrow[]{\emptyset} \configThree{\Mem_1}{\Skip}{\strace}}$:\\ 
   From $c_1\approx c_2$, we know $c_2$ has the same form as $c_1$. Since $c_2$ is executed under the same event trace $\strace$, we know that ${\configThree{\Mem_2}{\outcmd{\level'}{\expr}{\sevent{s_0}\dots\sevent{s_k}}}{\strace_2} \xrightarrow[]{\emptyset} \configThree{\Mem_2}{\Skip}{\strace_2}}$ as well. Hence, we can easily check $\Mem_2'=\Mem_2\approx_\strace \Mem_1 = \Mem_1'$ and $\Skip\approx\Skip$. The results on $\trace_1$ and $\trace_2$ are also trivial as they are both empty.
  
  \item case $\configThree{\Mem_1}{\cod{eventon}(\sevent{s})}{\strace}\rightarrow \configThree{\Mem_1} {\Skip}{\strace\cdot\sevent{s}}$:\\
     From $c_1\approx c_2$, $c_2$ has the same form of $c_1$. From evaluation rule, we have $\configThree{\Mem_2}{\cod{eventon}(\sevent{s})}{\strace}\rightarrow \configThree{\Mem_2} {\Skip}{\strace\cdot\sevent{s}}$. Hence, it is easy to check that $\Mem_2'=\Mem_2\approx \Mem_1=\Mem_1'$ and $\Skip\approx\Skip$. Moreover, both traces produce the same security event $\sevent{s}$, and produce $\strace\cdot\sevent{s}$.
     Since there are no outputs, $\proj{\traceout{\tau_1}}_\fixedlevel=\emptyset=\proj{\traceout{\tau_2}}_\fixedlevel$.
     
  \item case $\configThree{\Mem}{\cod{eventoff}(\sevent{s})}{\strace_1}\rightarrow \configThree{\Mem} {\Skip}{\strace_1\cdot\neg\sevent{s}}$: Similar to the previous case.
\end{itemize}

\end{proof}

\begin{theorem*}[End-to-end Soundness, Theorem~\ref{theorem:typesound}]
For all commands c and memory $\Mem$, an arbitrary source label $\fixedlab$, an arbitrary attacker's level $\fixedlevel$,
if $pc, \Gamma \vdash c$ for some program counter $pc$ and typing environment $\Gamma$, then the program c satisfies dynamic release, which means that

	\begin{multline*}
	\forall \Mem, 
	\tau.~\configThree{\Mem}{c}{\strace} \termout \trace  
	\Rightarrow 
	\forall 1\leq i\leq \len{\trace }.\quad \\
	k_2(c,\trace^{[:i]}, \fixedlevel, \fixedlab) 
	\supseteq 
	\begin{cases} 
	\closure{\Mem}_{\neq \fixedlab}, & \text{transient} \\
	\closure{\Mem}_{\neq \fixedlab} \cap k_1(c, \trace^{[:{i-1}]},\fixedlevel), & 
	\text{persistent}
	\end{cases}
	\end{multline*}
\end{theorem*}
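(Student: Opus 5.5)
Fix arbitrary $\fixedlab$ and $\fixedlevel$, an initial memory $\Mem$, and a trace $\configThree{\Mem}{c}{\strace}\termout\trace$.

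\textbf{Unfolding the goal.} By Definitions~\ref{def:Know_gain_indis} and~\ref{def:Know_gain_last}, we must show the following. For every $i$ and every $\Mem'\in\closure{\Mem}_{\neq\fixedlab}$ (in the persistent case, additionally $\Mem'\in k_1(c,\trace^{[:i-1]},\fixedlevel)$), there is a run $\configThree{\Mem'}{c}{\strace}\termout\trace'$ and an index $j$ such that two things hold:
\begin{itemize}
\item $\trace'^{[:j]}\equiv_{\fixedlab,\fixedlevel}\trace^{[:i]}$;
\item $\trace^{[:i]}\sim_\fixedlevel\trace'^{[:j]}$, so that $\Mem'\in k_1(c,\trace'^{[:j]},\fixedlevel)$ with the witness run taken to be $\trace'$ itself.
\end{itemize}

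\textbf{Setting up the bracketed runs.} We lift both runs into the augmented language. Because $\Mem$ and $\Mem'$ agree on every variable whose label is not $\fixedlab$, we bracket exactly the variables $x$ with $\fixedlab\LEQ_\strace\Gamma(x)$. This choice gives three properties:
\begin{itemize}
\item both bracketed memories are well-formed (Definition~\ref{def:mem_well});
\item they satisfy $\Mem\approx_\strace\Mem'$ (Definition~\ref{def:indistinguishibility}), since any $x$ with $\Gamma(x)=\fixedlab$ has $\fixedlab\LEQ_\strace\Gamma(x)$;
\item Lemma~\ref{lem:completeness} lets us transfer the original executions to bracketed ones and back.
\end{itemize}
Lemma~\ref{lemma:preservation} then keeps well-typedness and well-formedness as an invariant along both runs.

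\textbf{Induction for the transient case.} We prove by induction on $i$ a strengthened invariant. For each prefix $\trace^{[:i]}$ there is a $j$ such that:
\begin{itemize}
\item the configurations reached agree: commands are $\approx$, memories are $\approx_{\strace_i}$, and the security-event traces are identical;
\item $\proj{\trace^{[:i]}}_\fixedlevel=\proj{\trace'^{[:j]}}_\fixedlevel$;
\item the secret projections are consistent.
\end{itemize}
The invariant also allows the alternative that $\trace'$ is stuck inside a diverging bracket.

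In the step, we case on whether $\auxfuncold{\fixedlab}{\strace_i}\LEQ\fixedlevel$.

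If $\auxfuncold{\fixedlab}{\strace_i}\not\LEQ\fixedlevel$, the next command is not an \cod{Output-Per} output (that subcase is handled below). We apply Lemma~\ref{lemma:unwinding_dynamicrelease} to advance $\trace'$ to some $j'$ with equal $\fixedlevel$-projections. Because both sides share the same event trace $\strace_i$, the step's membership in the secret projection (Definition~\ref{def:secretproj_modified_appendix}) is identical on both sides. Hence the secret projections remain equal, and in particular they have the same length and the same last element.

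If $\auxfuncold{\fixedlab}{\strace_i}\LEQ\fixedlevel$, the step contributes nothing to the secret projection, so we may keep $j'=j$ for consistency. The difficulty is that we cannot then maintain the memory invariant. I would therefore weaken the invariant to require only consistency plus $\sim_\fixedlevel$, and restart the bracket-alignment argument at the next point where $\fixedlab$ becomes secret again.

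\textbf{Main obstacle.} The hard part is this re-synchronisation after a declassified phase, where the memories may legitimately diverge on revealed data. I would handle it as Li and Zhang do: the consistency requirement only asks for matching the \emph{last} effective output, and since the event trace is public (eventon and eventoff are typed at $\pc\flowsto\bot$), both runs see identical event traces. My plan is to re-bracket relative to the current $\strace$, using Lemma~\ref{lemma:trace_preservation} to justify that the set of protected variables only grows. I would then recover $\approx_\strace$ from the fact that, in $\trace'$, only variables tainted by $\fixedlab$-data can differ.

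\textbf{Persistent case.} This is identical except for a step executing $\outcmd{\level'}{x}{\sevent{o}_\x,\dots}$ typed by \ruleref{Output-Per}, where unwinding is unavailable. There the extra hypothesis $\Mem'\in k_1(c,\trace^{[:i-1]},\fixedlevel)$ gives $\proj{\trace^{[:i-1]}}_\fixedlevel\preceq\proj{\trace'}_\fixedlevel$. The guard $\sevent{o}_\x$ guarantees that $x$ was output earlier, and that earlier output had the same value in both runs. Since $x$ is immutable, both runs output that same value now, so we set $j=j'+1$.
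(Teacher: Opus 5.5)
Your plan follows the paper's structure, but your unfolding of the goal adds an obligation that is neither needed nor true, and your invariant inherits it.

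\textbf{The spurious obligation.} With witness run $\trace'$, the membership $\Mem'\in k_1(c,\trace'^{[:j]},\fixedlevel)$ only requires $\proj{\trace'^{[:j]}}_{\fixedlevel}\preceq\proj{\trace'}_{\fixedlevel}$ (Definition~\ref{def:Know_gain_indis}). This is immediate because $\trace'^{[:j]}$ is a prefix of $\trace'$. It is exactly how the paper passes from $\trace'^{[:j]}\equiv_{\fixedlab,\fixedlevel}\trace^{[:i]}$ to $\Mem'\in k_2(c,\trace^{[:i]},\fixedlevel,\fixedlab)$.

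Your second bullet instead compares the two runs, $\trace^{[:i]}\sim_{\fixedlevel}\trace'^{[:j]}$. You then carry $\proj{\trace^{[:i]}}_{\fixedlevel}=\proj{\trace'^{[:j]}}_{\fixedlevel}$ in the transient invariant, and ``consistency plus $\sim_{\fixedlevel}$'' after a declassified phase. Both fail at the first release the policy permits. Take $\fixedlevel=\Low$, $\Gamma(h)=\fixedlab=\sevent{r}?\High\rightarrow_t\Low$, and the well-typed program $\cod{eventoff}(\sevent{r});\ \outcmd{\Low}{h}{\neg\sevent{r}}$ (the output is typed via \ruleref{R-From-P}). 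Run it from $\Mem,\Mem'$ that differ only on $h$. The $\Low$-projections differ in the output value for every $j$. Yet the theorem holds, because that output occurs when $\auxfuncold{\fixedlab}{\strace}=\Low\LEQ\fixedlevel$ and is dropped from both secret projections. Your own declassified step already breaks the invariant: with $j'=j$, the $\fixedlevel$-projection of $\trace^{[:i+1]}$ may grow while that of $\trace'^{[:j']}$ does not.

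\textbf{The inverted ``main obstacle''.} When $\auxfuncold{\fixedlab}{\strace_i}\LEQ\fixedlevel$, what breaks is the conjunct you keep, not the memory invariant you drop. Outputs never write memory. Their \cod{using} guards are evaluated on the event trace both runs share. Whether $\sevent{o}_x$ is appended depends only on the command and that shared trace. So the case analysis of Lemma~\ref{lemma:unwinding_dynamicrelease} still yields $\approx$ on commands and memories, and identical event traces, in a declassified step.

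The hypothesis $\auxfuncold{\fixedlab}{\strace}\not\LEQ\fixedlevel$ is used there only for two things. It equates visible outputs. Inside High Step, it excludes bracketed \cod{eventon}/\cod{eventoff} and visible bracketed outputs. Those would force $\vdash\fixedlab\flowsto\bot$, or $\vdash\fixedlab\flowsto\level'$ with $\level'\LEQ\fixedlevel$. In either case every secret projection is empty and the theorem is trivial.

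\textbf{The repair.} Carry lock-step configurations plus equality of \emph{secret} projections, and advance $\trace'$ in every phase. The re-bracketing you sketch is then unnecessary, and as stated it is unproved. The paper simply sets $j'=j$ in this case without re-aligning configurations, so you are right that this step needs care, but the fix is the one above. With it, the rest of your plan matches the paper's proof, including the \ruleref{Output-Per} case via the $k_1$ hypothesis and immutability.
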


\begin{proof}
    We conduct a proof by induction on the trace length for both transient and persistent policies. Note that unlike a traditional proof for static noninterference, we need to show that under two indistinguishable initial memories, they both produce \emph{consistent}, rather than indistinguishable traces. According to its definition (Definition~\ref{def:consistency_appendix}), proving consistency boils down to filtering out sub-traces where $\auxfuncold{\fixedlab}{\strace}\LEQ \fixedlevel$ (i.e., when information with label $\fixedlab$ can be released), and showing the remaining sub-traces are indistinguishable.

\mypara{Transient policy.} We approach the proof by showing that a slightly stronger result holds:
\begin{multline}
	\forall \Mem,\Mem', \tau, c, \pc, \Gamma, \strace.~\wellformmem{\strace}{\Mem} \land \wellformmem{\strace}{\Mem'}\land \pc, \Gamma \vdash c \land \configThree{
		\Mem}{c}{\strace} \termout \trace \land \Mem\approx \Mem'  
	\Rightarrow \\
	\forall 0\leq i\leq \len{\trace }.~\Mem'\in k_2(c,\trace ^{[:i]}, \fixedlevel, \fixedlab) 
\end{multline}
The reason is that $\forall \Mem'\in \closure{\Mem}_{\neq \fixedlab}$ (i.e., only the values of variables with label $\fixedlab$ differ), $\Mem'\approx \Mem$ by definition of $\approx$.

Further, by expending the definition of $k_2$, we argue that it is sufficient to prove the following condition.
\begin{multline}
\label{eqn:transcondition}
	\forall \Mem,\Mem', \tau, \tau', c, \pc,\Gamma,\strace.~\wellformmem{\strace}{\Mem} \land \wellformmem{\strace}{\Mem'} \land \pc, \Gamma\vdash c\land  \configThree{
		\Mem}{c}{\strace} \termout \trace \land \configThree{
		\Mem'}{c}{\strace} \termout \trace' \land \Mem\approx \Mem'  
	\Rightarrow \\
	\forall 1\leq i\leq \len{\trace }.~\exists j.~\trace'^{[:j]} \equiv_{\fixedlab,\fixedlevel}  \trace^{[:i]}.
\end{multline}
The reason is that if the claim is correct, then 
\begin{align}
\Mem' &\in k_1(c, \tau'^{[:j]}, \fixedlevel) \\
      &\subseteq \bigcup_{\exists \Mem'',~k.~\configs{\Mem'', c} \termout 
\trace'' \AND \trace''^{[:k]} \equiv_{\fixedlab,\fixedlevel}  \trace^{[:i]} } k_1(c, \trace''^{[:k]}, \fixedlevel )\\
      & \triangleq k_2(c,\trace ^{[:i]}, \fixedlevel, \fixedlab)
\end{align}
where $(3)$ is true since $\configThree{\Mem'}{c}{\strace} \termout \trace'$, and $(4)$ is true since we know that $\configThree{\Mem'} {c}{\strace} \termout 
\trace' \AND \trace'^{[:j]} \equiv_{\fixedlab,\fixedlevel}  \trace^{[:i]}$.

Next, we perform induction on $i$ to prove that Equation~(\ref{eqn:transcondition}) holds.

\begin{figure}
    \centering
    \includegraphics[width=0.5\columnwidth]{theorem_figure.pdf}
    \caption{The orange part is the induction hypothesis for the transient policy statement, and the green part is the step to be proved with the Unwinding Lemma.}
    \label{fig:theorem3_figure}
    \Description{}
\end{figure}
 
\begin{itemize}
    \item Base case: This case is trivial, as any trace is consistent with an empty trace by definition. 

    \item Inductive case: Based on the induction hypothesis, we know that at step $i$, $\exists j.~\trace'^{[:j]} \equiv_{\fixedlab,\fixedlevel}  \trace^{[:i]}$. We need to prove that at step $i+1$, $\exists j'.~\trace'^{[:j']} \equiv_{\fixedlab,\fixedlevel} \trace^{[:i+1]}$. Next, we use $\strace_{i}$ to denote $\trace^{[:i]}$ and $\strace_{i+1}$ to denote $\trace^{[:i+1]}$.
    The proof is separated into two cases:
    
    \begin{itemize}
        \item Case where $\auxfuncold{\fixedlab}{\strace_i}\not\LEQ \fixedlevel$:
        Following Figure~\ref{fig:theorem3_figure}, the green part illustrate the new index $j'$ that we need to find in order to match index $i+1$ on the first trace, while the orange part is given from the induction hypothesis: the two traces $\trace^{[:i]}$ and $\trace'^{[:j]}$ are consistent. The main result we use to find $j'$ to finish the green part is the unwinding lemma (Lemma~\ref{lemma:unwinding_dynamicrelease}).
        
        From the induction hypothesis, we also have that $\wellformmem{\strace_i}{\Mem_i}$, $\wellformmem{\strace_i}{\Mem_i'}$, and $\Mem_i\approx\Mem_i'$ at step $i$. Moreover, by Lemma~\ref{lemma:preservation} (Preservation), we have $\pc,\Gamma\vdash c_i$ and $\pc,\Gamma\vdash c'_j$ (i.e., the programs $c_i$ and $c'_j$ remain well-typed during evaluation). For the extra event $v_i$ produced by $\configThree{\Mem_i}{c_i}{\strace_i}$ (in the green part of Figure~\ref{fig:theorem3_figure}), we know that by Definition~\ref{def:secretproj_modified_appendix},
        \[\proj{\configThree{
		\Mem_i}{c_i}{\strace_i}\xrightarrow{\configThree{\level'}{v}{\strace_i}}\configThree{
		\Mem_{i+1}}{c_{i+1}}{\strace_{i+1}}}_{\fixedlab, \fixedlevel}~\triangleq 
        \begin{cases}
            \{v\} \text{, when $L'\LEQ \fixedlevel$} \\
            \emptyset \text{, when $L'\not\LEQ \fixedlevel$}
        \end{cases}\]

    \begin{itemize}
        \item In the first case where $L'\LEQ \fixedlevel$, from Lemma~\ref{lemma:unwinding_dynamicrelease}, either the program $\configThree{
		\Mem'_{j'}}{c'_{j'}}{\strace_{j'}}$ diverges or, there exists $j'$ such that  
        \[ \configThree{\Mem_j'} {c'_j}{\strace_j} \xrightarrow{\tau'}^{*}\configThree{\Mem_{j'}'}{c'_{j'}}{\strace_{j'}}\land \Mem_{i+1}\approx \Mem_{j'}'\land c_{i+1}\approx c'_{j'}\land \proj{\configThree{\level'}{v}{\strace_i}}_\fixedlevel=\proj{\trace'^{[j:j']}}_\fixedlevel\]

        Since $L'\LEQ \fixedlevel$, we have $\proj{\trace'^{[j:j']}}_\fixedlevel=\configThree{\level'}{v}{\strace_i}$, and hence, $\proj{\trace'^{[j:j']}}_{\fixedlab,\fixedlevel}=\configThree{\level'}{v}{\strace_i}=\proj{\trace^{[i:i+1]}}_{\fixedlab,\fixedlevel}$ as $\auxfuncold{\fixedlab}{\strace_i}\not\LEQ \fixedlevel$. By induction hypothesis and the fact that the segment $\trace^{[i:i+1]}$ and $\trace^{[j:j']}$ are also consistent, we know that $\trace'^{[:j']} \equiv_{\fixedlab,\fixedlevel} \trace^{[:i+1]}$ stands. 
        
          In the case where the program $\configThree{
		\Mem'_{j'}}{c'_{j'}}{\strace_{j'}}$ diverges, we note that the dynamic release policy is defined as a termination insensitive policy. Hence, the policy incurs no restrictions in this case.
        
        \item In the second case where $L'\not\LEQ \fixedlevel$, no output observable to the attacker at level $\fixedlevel$ is produced. Hence, we simply set $j'=j$ and have $\trace'^{[:j']} \equiv_{\fixedlab,\fixedlevel} \trace^{[:i+1]}$ as
        $\trace'^{[:j']} =\trace'^{[:j]}$, $\trace^{[:i]}=\trace'^{[:i+1]}$ and from the induction hypothesis, $\trace'^{[:j]} \equiv_{\fixedlab,\fixedlevel} \trace^{[:i]}$.
        
    \end{itemize}

        \item Case where $\auxfuncold{\fixedlab}{\strace_i}\LEQ \fixedlevel$:
        By the induction hypothesis, the two traces $\trace^{[:i]}$ and $\trace'^{[:j]}$ are consistent.
        For the extra event $v_i$ produced by $\configThree{\Mem_i}{c_i}{\strace_i}$ (in the green part of Figure~\ref{fig:theorem3_figure}), we know that by Definition~\ref{def:secretproj_modified_appendix},
        \[\proj{\configThree{
		\Mem_i}{c_i}{\strace_i}\xrightarrow{\configThree{\level'}{v}{\strace_i}}\configThree{
		\Mem_{i+1}}{c_{i+1}}{\strace_{i+1}}}_{\fixedlab, \fixedlevel}~\triangleq 
        \emptyset\]
        since by definition, the projection only includes events produced when $\auxfuncold{\fixedlab}{\strace_i}\not\LEQ\fixedlevel$.

        Hence, we simply set $j'=j$ and have $\trace'^{[:j']} \equiv_{B,\level} \trace^{[:i+1]}$ as $\trace'^{[:j']} =\trace'^{[:j]}$, $\trace^{[:i]}=\trace'^{[:i+1]}$ and from the induction hypothesis, $\trace'^{[:j]} \equiv_{\fixedlab,\fixedlevel} \trace^{[:i]}$.
    \end{itemize}
\end{itemize}

\mypara{Persistent policy}
We approach the proof by showing that a slightly stronger result holds:
\begin{multline*}
	\forall \Mem,\Mem', \trace, \trace', c, \pc,\Gamma,\strace.~\wellformmem{\strace}{\Mem} \land \wellformmem{\strace}{\Mem'}\land \pc, \Gamma\vdash c \land  \configThree{
		\Mem}{c}{\strace} \termout \trace \land 
        \Mem\approx \Mem'  \Rightarrow \\
	\forall 0\leq i\leq \len{\trace }.~\Mem'\in k_1(c, \tau^{[:i-1]}, \fixedlevel)\Rightarrow ~\Mem'\in k_2(c,\trace ^{[:i]}, \fixedlevel, \fixedlab) 
\end{multline*}

The reason is similar to the transient policy case, except that a persistent policy allows previous outputs to be released once again by having $\Mem'\in k_1(c,\tau^{[:i-1]}, \fixedlevel)$ as an assumption before the conclusion $\Mem'\in k_2(c,\trace ^{[:i]}, \fixedlevel, \fixedlab)$.

Further, by expending the definition of $k_1$ and $k_2$, we argue that it is sufficient to show the following 
\begin{multline}
\label{eqn:perscondition}
	\forall \Mem,\Mem', \tau, \tau', c, \pc,\Gamma,\strace.~\wellformmem{\strace}{\Mem} \land \wellformmem{\strace}{\Mem'} \land \pc, \Gamma\vdash c \land  \configThree{
		\Mem}{c}{\strace} \termout \trace \land \configThree{
		\Mem'}{c}{\strace} \termout \trace' \land \Mem\approx \Mem'  \Rightarrow \\
	\forall 1\leq i\leq \len{\trace }.~\proj{\tau^{[:i-1]}}_\fixedlevel \preceq
\proj{\tau'}_\fixedlevel \Rightarrow \exists j.~\trace'^{[:j]} \equiv_{\fixedlab,\fixedlevel}  \trace^{[:i]}.
\end{multline}

The reason is that 
\begin{enumerate}
    \item We can rewrite $\Mem'\in k_1(c,\tau^{[:i-1]}, \fixedlevel)$ to $\proj{\tau^{[:i-1]}}_\fixedlevel \preceq
\proj{\tau'}_\fixedlevel$ by the definition of $k_1$. 

    \item Similar to Equations (3)-(5), we can replace the requirement that $\Mem'\in k_2(c,\trace ^{[:i]}, \fixedlevel, \fixedlab)$ with $\trace'^{[:j]} \equiv_{\fixedlab,\fixedlevel}  \trace^{[:i]}$.
\end{enumerate}

\begin{figure}
    \centering
    \includegraphics[width=0.5\columnwidth]{theorem_figure_per.pdf}
    \caption{This Figure shows that $\Mem'\in k_1(c,\trace^{[:i],\level})$, which indicates the variable $x$ has previously been output. This difference causes the proof not to be concluded with Unwinding Lemma.}
    \Description{}
    \label{fig:theorem3_figure_per}
\end{figure}

Next, we perform induction on $i$ to prove that Equation (\ref{eqn:perscondition}) holds.

The proof is mostly identical to the transient case as Equation~\ref{eqn:perscondition} is a relaxed version of Equation~\ref{eqn:transcondition}. The exception is for 
the case of $$(c=\outcmd{\level}{x}{\sevent{o}_\x,\sevent{s_0\dots s_k}})\land \x \text{ is immutable})$$
because we cannot use the unwinding lemma (Lemma~\ref{lemma:unwinding_dynamicrelease}) to derive low equivalence on traces when the $i$-th step on $\tau$, from index $i$ to $i+1$. We need a new proof strategy for this case, which is illustrated by the green dotted box in Figure~\ref{fig:theorem3_figure_per}. 

When evaluation rule~\ruleref{Output-Fail} is applied at index $i$, the command has the same semantics as a $\Skip$ command. Hence, we simply set $j'=j$ and have $\trace'^{[:j']} \equiv_{\fixedlab,\fixedlevel} \trace^{[:i+1]}$ as
        $\trace'^{[:j']} =\trace'^{[:j]}$, $\trace^{[:i]}=\trace'^{[:i+1]}$ and from the induction hypothesis, $\trace'^{[:j]} \equiv_{\fixedlab,\fixedlevel} \trace^{[:i]}$.

When evaluation rule~\ruleref{Output-Succ} is applied at index $i$, we know that $x$ has been released at some level $\level'\LEQ \fixedlevel$ in the past. That is, an output $\configThree{\level'}{v}{\strace} \in \proj{\tau^{[:i-1]}}_\fixedlevel$ for some value $v$ and $\strace$ was produced by an output command $\outcmd{\level'}{x}{\sevent{s_0\dots s_k}}$ executed on trace $\tau^{[:i-1]}$, where $\level'\LEQ \fixedlevel$.  Hence, on trace $\tau$, step $i$ produces an output event $\configThree{\level'}{v}{\strace'}$ for some $\strace'$. Note that since $x$ is immutable, the output value must be the same as the previous output $v$. 

On trace $\trace'$, by induction hypothesis, there must be some $j'$ such that $\trace'^{[:j']} \equiv_{\fixedlab,\fixedlevel}  \trace^{[:i-1]}$. Due to the assumption that $\proj{\tau^{[:i-1]}}_\fixedlevel \preceq \proj{\tau'}_\fixedlevel$, we know that there must be an output $\configThree{\level'}{v}{\strace''} \in \proj{\trace'^{[:j']}}_\fixedlevel$ for some $\strace''$ produced by an output command $\outcmd{\level'}{x}{\sevent{s_0\dots s_k}}$ executed on trace $\trace'^{[:j']}$. This is illustrated in Figure~\ref{fig:theorem3_figure_per}: the orange part shows that before index $i-1$, the projection on $\fixedlevel$ has $k$ output events, which are identical to the first $k$ output events on trace $\proj{\tau'}_\fixedlevel$ due to the induction hypothesis.
Hence, we let $j=j'+1$. On trace $\trace'$, step $j$
produces an output event $\configThree{\level'}{v}{\strace'}$. Again, since $x$ is immutable, the output value must be the same as the previous output $v$. Since the outputs are the same at $\trace^{[i]}$ and $\trace'^{[j]}$, the $\trace^{[:i]}$ and $\trace'^{[:j]}$ are still consistent $\trace'^{[:j]} \equiv_{\fixedlab,\fixedlevel}  \trace^{[:i]}$, as illustrated in the green part of Figure~\ref{fig:theorem3_figure_per}.
\end{proof}

\end{document}
\endinput